\DeclarePairedDelimiter\rbra{\lparen}{\rparen}
\DeclarePairedDelimiter\sbra{\lbrack}{\rbrack}
\DeclarePairedDelimiter\cbra{\{}{\}}
\DeclarePairedDelimiter\abs{\lvert}{\rvert}
\DeclarePairedDelimiter\Abs{\lVert}{\rVert}
\DeclarePairedDelimiter\ceil{\lceil}{\rceil}
\DeclarePairedDelimiter\floor{\lfloor}{\rfloor}
\newtheorem{theorem}{Theorem}
\newtheorem{lemma}[theorem]{Lemma}
\newtheorem{corollary}[theorem]{Corollary}
\newtheorem{definition}{Definition}
\newtheorem{problem}{Problem}
\newtheoremstyle{restate}{}{}{\itshape}{}{\bfseries}{.}{.5em}{\thmnote{#3}}
\theoremstyle{restate}
\newcommand{\alg}[1]{\textup{\textsf{#1}}}
\DeclarePairedDelimiterX{\pt}[1](){#1} 
\DeclarePairedDelimiterX{\bc}[1][]{#1} 
\title{Succinct quantum testers for closeness and $k$-wise uniformity of probability distributions}
\author[,a\,\orcidlink{0009-0003-7757-9858}]{Jingquan Luo\thanks{Email: luojq25@mail2.sysu.edu.cn}}
\author[,b\,\orcidlink{0000-0001-5107-8279}]{Qisheng Wang\thanks{Email: QishengWang1994@gmail.com}}
\author[,a,c\,\orcidlink{0000-0001-5941-7036}]{Lvzhou Li\thanks{Email: lilvzh@mail.sysu.edu.cn (corresponding author).}}
\affil[a]{Institute of Quantum Computing and Software, School of Computer Science and Engineering, Sun Yat-sen University, Guangzhou 510006, China}
\affil[b]{Graduate School of Mathematics, Nagoya University, Nagoya 464-8602, Japan}
\affil[c]{Quantum Science Center of Guangdong-Hong Kong-Macao Greater Bay Area (Guangdong),  Shenzhen 518045, China}
\begin{document}
\maketitle

\begin{abstract}
We explore potential quantum speedups for the fundamental problem of testing the properties of closeness and $k$-wise uniformity of probability distributions.

\begin{itemize}
    \item \textit{Closeness testing} is the problem of distinguishing whether two $n$-dimensional distributions are identical or at least $\varepsilon$-far in $\ell^1$- or $\ell^2$-distance. We show that the quantum query complexities for $\ell^1$- and $\ell^2$-closeness testing are $O\rbra{\sqrt{n}/\varepsilon}$ and $O\rbra{1/\varepsilon}$, respectively, both of which achieve optimal dependence on $\varepsilon$, improving the prior best results of \hyperlink{cite.gilyen2019distributional}{Gily{\'e}n and Li~(2020)}. 

    \item \textit{$k$-wise uniformity testing} is the problem of distinguishing whether a distribution over $\cbra{0, 1}^n$ is uniform when restricted to any $k$ coordinates or $\varepsilon$-far from any such distribution. We propose the first quantum algorithm for this problem with query complexity $O\rbra{\sqrt{n^k}/\varepsilon}$, achieving a quadratic speedup over the state-of-the-art classical algorithm with sample complexity $O\rbra{n^k/\varepsilon^2}$ by \hyperlink{cite.o2018closeness}{O'Donnell and Zhao (2018)}.
    Moreover, when $k = 2$ our quantum algorithm outperforms any classical one because of the classical lower bound $\Omega\rbra{n/\varepsilon^2}$. 
\end{itemize}

All our quantum algorithms are fairly simple and time-efficient, using only basic quantum subroutines such as amplitude estimation.
\end{abstract}

\section{Introduction}

Property testing is a fundamental problem in theoretical computer science, with  the goal being to determine whether the target object has a certain property, under the promise that the object either has the property or is ``far'' from having that property. Quantum computing has a positive impact on many problems~\cite{montanaro2016quantum, zhang2022brief}. In particular, there have been a series of works on the topic of ``quantum property testing'', and the readers can refer to~\cite{montanaro2013survey}. 

A fundamental problem in statistics and learning theory is to test properties of distributions. A few works have shown that quantum computing can speed up testing  properties of distributions, for example, see~\cite{bravyi2011quantum, montanaro2015quantum, gilyen2019distributional}. 
In this article, we consider two different distribution property testing problems. The first one is to test whether two unknown distributions are close or far enough, under the metric of $\ell^1$- or $\ell^2$-distance. Closeness testing problem is a highly-representative problem in the realm of distribution property testing~\cite{batu2000testing,diakonikolas2016new, valiant2017automatic}, and the algorithms for $\ell^2$-closeness testing are the foundation of other general distribution property testing problems~\cite{diakonikolas2016new}.  Chan~\textit{et~al.}~\cite{chan2014optimal} showed that the sample complexities for $\ell^1$- and $\ell^2$-closeness testing are $\Theta(\max(\frac{n^{2/3}}{\varepsilon^{4/3}}, \frac{n^{1/2}}{\varepsilon^2}))$ and $\Theta(\frac{1}{\varepsilon^2})$, respectively, which were further investigated by Diakonikolas and Kane~\cite{diakonikolas2016new}. 
The quantum query complexities for $\ell^1$- and $\ell^2$-closeness testing were first studied by Bravyi~\textit{et~al.}~\cite{bravyi2011quantum}, and the state-of-the-art results for the both are $O(\frac{\sqrt{n}}{\varepsilon}\log^3(\frac{{n}}{\varepsilon})\log\log(\frac{{n}}{\varepsilon}))$ and $O(\frac{1}{\varepsilon}\log^3(\frac{1}{\varepsilon})\log\log(\frac{1}{\varepsilon}))$, respectively, due to Gily{\'e}n and Li~\cite{gilyen2019distributional}.

The second problem we consider is to test whether a given distribution is $k$-wise uniform. We say that a probability distribution over $\{0, 1\}^n$ is $k$-wise uniform if its marginal distribution on every subset of $k$ coordinates is the uniform distribution (see \cref{def:k-wise-uniform} for the formal definition of $k$-wise uniformity). 
The study of $k$-wise uniformity dates back to the work of  Rao~\cite{rao1947factorial} who studied $k$-wise uniform sets, which are special cases of $k$-wise uniform distribution. Since then, $k$-wise uniformity has turned  out to be an essential tool in theoretical computer science, especially for derandomization~\cite{alon1986fast, chor1985bit}. In quantum computing, $k$-wise uniformity has also been employed to design quantum algorithms~\cite{ambainis2007quantum}.

\subsection{Query-Access Models and Problem Formulations}
\label{models-problems}

In this section, we firstly give the  definitions of classical and quantum access models for distributions. Denote the sample space by $\Omega$. For the $\ell^\alpha$-closeness testing problem,  the sample space is $\Omega = \{1, 2, \dots, n\}. $ For the $k$-wise uniformity testing problem, the sample space would be $\Omega=\{0, 1\}^n$. Note that the notation $\Omega$ is used to represent both sample space and lower bounds, whose meaning will be clear from the context.  

Consider distribution $p$ with sample space $\Omega$ which is a function $p \colon \Omega \rightarrow [0, 1]$ such that $\sum_{i \in \Omega} p_i = 1$, where the notation $p_i$ denotes the probability of element $i$. The $\ell^\alpha$-distance between two distributions $p$ and $q$ is defined as $\| p - q \|_\alpha := (\sum_{i \in \Omega} | p_i - q_i |^\alpha)^{1/\alpha}$. The total variation distance $d_{\text{TV}}(p, q)$ between $p$ and $q$ is defined as $d_{\text{TV}}(p, q) := \frac{1}{2} \|p - q\|_1$.

To get access to a classical distribution, the most natural model is sampling.

\begin{definition}[Sampling]
    A classical distribution $p$ {with sample space $\Omega$} is accessible via classical sampling if we can request samples from the distribution, i.e., get a random ${i \in \Omega}$ with probability $p_i$.  
\end{definition}

{

Several quantum query-access models for classical distributions have been proposed. We first introduce the most general and natural one, named the \textit{purified quantum query-access model}, and later compare it to others. This model has been studied in~\cite{montanaro2015quantum, gilyen2019distributional, hamoudi2018quantum, belovs2019quantum, li2022unified} and is adopted throughout this paper. The model is particularly general because it is easily interchangeable with a quantum subroutine, as discussed in~\cite{montanaro2015quantum, belovs2019quantum}. It is worth noting that the purified quantum query-access model is the weakest one among all the models discussed below. 
In other words, the query complexity upper bounds in the purified quantum query-access model actually imply the same results in other models.

\begin{definition}[Purified Quantum Query-Access]\label{purified quantum query-access}
    A classical distribution $p$ {with sample space $\Omega$} has purified quantum query-access if we have access to a unitary oracle $U_p$ (and its inverse and  controlled versions) acting as 
    
    \begin{equation*}
        U_p | 0 \rangle_A | 0 \rangle_B = \sum_{i \in \Omega} \sqrt{p_i} | \phi_i \rangle_A | i \rangle_B,
    \end{equation*}
    where $\{| \phi_i \rangle \colon i \in \Omega \}$ are some arbitrary normalized quantum states such that $\langle \phi_i | \phi_j \rangle = \delta_{ij}$.
\end{definition}

In the above definition we assume that the inverse of the oracle is accessible, which is a common assumption in quantum algorithms, especially in those based on quantum amplitude amplification
and estimation~\cite{brassard2002quantum}. This assumption is reasonable since the inverse of the oracle can be simply constructed by reversing the order of the gates in the circuit description of the oracle and conjugating them. 

The second model is the pure-state preparation access model adopted in~\cite{arunachalam2018optimal, aharonov2007adiabatic, atici2005improved, bshouty1995learning}.
\begin{definition}[Pure-State Preparation Access]
\label{pspa}
    A classical distribution $p$ with sample space $\Omega$ is accessible via pure-state preparation access model if we have access to a unitary $U_p$ (and its inverse and  controlled versions) such that 
    \begin{align*}
        U_p |0\rangle = \sum_{i \in \Omega}\sqrt{p_i}|i\rangle.
    \end{align*}
\end{definition}
The pure-state preparation access model defined in \cref{pspa} can be tranformed to the purified quantum query-access model defined in \cref{purified quantum query-access}, since we can use one query to prepare $\sum_{i \in \Omega} \sqrt{p_i} \ket{i}\ket{0}$ and then apply CNOT gates to get $\sum_{i \in \Omega} \sqrt{p_i} \ket{i}\ket{i}$, which satisfies the requirement in \cref{purified quantum query-access}.

Lastly, the third model is to encode the distribution as a long string and represnet probabilities by frequencies, which is adopted in~\cite{bravyi2011quantum, chakraborty2010new, li2018quantum, montanaro2015quantum2}. 
\begin{definition}[Discrete Quantum Query-Access]
\label{def:dqqa}
A classical distribution $p$ with sample space $\Omega$ is accessible via discrete quantum query-access if we have a quantum oracle $O_f$ (and its inverse and  controlled versions) to a function $f\colon \sbra{n} \rightarrow \Omega$ such that 
\begin{align*}
    O_f \ket{j} \ket{0} = \ket{j} \ket{f(j)} 
\end{align*}
for all $j \in \sbra{n}$, and for all $i \in \Omega$,
\begin{align*}
    p_i = \frac{\abs{\cbra{j \colon f(j) = i}}}{n}.
\end{align*}
\end{definition}
One downside of the discrete quantum query-access model is that the probabilities $p_i$ must be multiples of $1/n$. 
Also, this model can be tranformed to the purified quantum query-access model defined in \cref{purified quantum query-access}, since we can apply $O_f$ to the uniform superposition over $\sbra{n}$ to get a state satisfying the requirement in \cref{purified quantum query-access}. 
}

The first problem we consider in this article is to test whether two distributions are close or far enough, which is formally described as follows:

\begin{problem}[{$\ell^\alpha$}-Closeness Testing]
    Given $\varepsilon > 0$ and access to oracles generating two probability distributions $p$, $q$ with sample space $[n]$, and promised that $p = q$ or $\| p - q \|_{\alpha} \geq \varepsilon$, $\ell^\alpha$-closeness testing requires to  decide which the case is, with high probability, say $2/3$. The problem is called tolerant  testing,  when the promise is $\| p - q \|_{\alpha} \leq (1-\nu)\varepsilon$ or $\| p - q \|_{\alpha} \geq \varepsilon$ for $\nu \in (0, 1] $ and one is asked to decide which the case is.
\end{problem}

The second problem we consider is to test whether a distribution is $k$-wise uniform or far from any $k$-wise uniform distribution under the metric of the total variation distance. In this problem, the sample space $\Omega$ with size $2^n$ of the distribution $p$ is $\{0, 1\}^n$. 

\begin{definition}[$k$-Wise Uniformity]\label{def:k-wise-uniform}A probability distribution $p$ over $\{0, 1\}^n$ is said to be $k$-wise uniform if its marginal distribution on every subset $ S \subseteq [n] $ of $k$ coordinates is the uniform distribution, i.e., 
\[
\Pr_{x \sim p} [x_S = v] = \frac{1}{2^k}
\]
for any $v \in \{0, 1\}^k$, where $x_S$ denotes the subsequence of $x$ obtained by restricting $x$ to all the coordinates in $S$ and $x \sim p$ denotes that $x$ is drawn according to the distribution $p$.
\end{definition}

\begin{problem}[$k$-Wise Uniformity Testing]
    Given a probability distribution $p$ over $\{0, 1\}^n$ with its purified quantum query-access oracle, and a real number $\varepsilon > 0$, and promised that $p$ is $k$-wise uniform or is $\varepsilon$-far from any $k$-wise uniform distribution under the metric of the total variation distance, $k$-wise uniformity testing requires to decide which the case is, with high probability.
\end{problem}

\subsection{Contributions}

In this article, we design quantum algorithms for  {$\ell^\alpha$}-closeness testing and $k$-wise uniformity testing, respectively, using as few queries  as possible to the oracle  defined in Definition \ref{purified quantum query-access}.
The number of the queries to $U_p$, $U_p^\dagger$ and their controlled versions is called the {\it quantum query complexity} of the algorithms.  One of our main contributions is that we give the  currently best quantum algorithms, i.e., Theorem \ref{thm:l2-tester} and its corollaries, for the $\ell^\alpha$-closeness testing problem ($\alpha \in \cbra{1, 2}$).

\begin{theorem}[{\normalfont Tolerant $\ell^2$-Closeness Tester}]\label{thm:l2-tester}
    There is a quantum tester for tolerant $\ell^2$-closeness testing with query complexity $O(\frac{1}{\nu\varepsilon})$.
\end{theorem}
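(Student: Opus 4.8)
The plan is to reduce tolerant $\ell^2$-closeness testing to estimating the single scalar $S := \|p-q\|_2^2 = \sum_i (p_i-q_i)^2$ to additive accuracy $\Theta(\nu\varepsilon)^2$ and then thresholding: since $\varepsilon^2 - ((1-\nu)\varepsilon)^2 = \nu(2-\nu)\varepsilon^2 \ge \nu\varepsilon^2$, such an estimate separates the two promised cases. The first fact I would use is that the reduced state of register $B$ of $U_p\ket{0}\ket{0}$ is exactly the diagonal density operator $\rho_p = \sum_i p_i \ket{i}\bra{i}$, independent of the purifying states $\ket{\phi_i}$; hence $S = \operatorname{Tr}[\rho_p^2] - 2\operatorname{Tr}[\rho_p\rho_q] + \operatorname{Tr}[\rho_q^2]$ is a function of oracle-accessible objects only, and each term is a \emph{collision probability} $\sum_i r_i s_i$ with $r,s\in\{p,q\}$.

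Next I would estimate these collision probabilities by amplitude estimation. Preparing two independent copies (e.g.\ $U_p$ and $U_q$ acting on fresh registers $\ket{0}^{\otimes 4}$) and taking the ``good'' event to be equality of the two index registers — a comparison computed into a flag, which is purification-independent because the $\ket{\phi_i}$ are orthonormal \emph{within} a single copy — yields a flag whose acceptance probability is exactly $\sum_i p_i q_i$, with no additive offset. Amplitude estimation then returns $\sqrt{\sum_i p_i q_i}$, and likewise $\sqrt{\sum_i p_i^2}$ and $\sqrt{\sum_i q_i^2}$, to additive error $\eta$ using $O(1/\eta)$ queries each, from which $S$ is reconstructed and thresholded.

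The step I expect to be the main obstacle is pushing the query complexity down to $O(1/(\nu\varepsilon))$ rather than the $O(1/(\nu\varepsilon^2))$ that this reconstruction naively gives. Because the individual collision probabilities can be $\Theta(1)$ for peaked distributions, forcing $S$ to additive $\Theta(\nu\varepsilon^2)$ through the subtraction requires $\eta = \Theta(\nu\varepsilon^2)$, hence $O(1/(\nu\varepsilon^2))$ queries. To recover the genuine quadratic speedup one must instead engineer a \emph{single} amplitude-estimation instance whose acceptance probability is proportional to $S$ itself: then, since the target is $\Theta(\varepsilon^2)$ at the decision boundary, amplitude estimation resolves it to the required additive $\Theta(\nu\varepsilon^2)$ with only $O(1/(\nu\varepsilon))$ queries. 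The difficulty is that assembling $S$ as one acceptance event needs the cross term $-2\sum_i p_iq_i$ to enter coherently, with a minus sign relative to the self-collision terms, i.e.\ genuine interference between the $p$- and $q$-branches, whereas in the purified model the oracle may entangle an arbitrary ``which-distribution'' record into the purifying register $\ket{\phi_i}$, which decoheres exactly this interference. I would therefore spend the bulk of the effort designing a controlled-comparison/controlled-\textsf{SWAP} circuit, together with the right post-selection on the control and flag registers, that produces an acceptance probability bounded below by a constant multiple of $S$ with no constant offset; the crux is proving that this probability is provably insensitive to the adversarial states $\ket{\phi_i}$, so that the resulting tester is correct for every valid purified oracle.
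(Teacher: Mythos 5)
You set up the problem correctly: reducing to estimating $S=\Abs{p-q}_2^2$ to additive accuracy $\Theta(\nu\varepsilon^2)$, rejecting the three-collision-probability route because the subtraction forces $O(1/(\nu\varepsilon^2))$ queries, and recognizing that the whole theorem hinges on a \emph{single} amplitude-estimation instance whose acceptance probability is $\Theta(S)$ with no additive offset. But that single instance is exactly the step you leave unconstructed, and the direction you propose to search (controlled-comparison / controlled-\textsf{SWAP} with post-selection) is not the one that works: swap-test-style circuits yield acceptance probabilities of the form $\tfrac12(1\pm\sum_i r_i s_i)$, which carry a constant offset and still only deliver the individual collision terms, so the subtraction problem you are trying to escape reappears. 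As written, the proof has a genuine gap at its crux.

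The paper's resolution is a different primitive, which you would need (or an equivalent). Define $\widetilde U_p=\rbra{U_p^{\dagger}}_{AB}\rbra{U_{\mathit{copy}}}_{BC}\rbra{U_p}_{AB}$ acting on $\ket{0}_A\ket{0}_B\ket{0}_C$. Because the purifying states satisfy $\braket{\phi_i|\phi_j}=\delta_{ij}$, projecting the output onto $\ket{0}_A\ket{0}_B$ leaves exactly $\sum_i p_i\ket{i}_C$: the amplitude is $p_i$ rather than $\sqrt{p_i}$, and it is purification-independent. A single ancilla qubit with $HX$ before and $H$ after, selecting between $\widetilde U_p$ and $\widetilde U_q$, then produces amplitude $\tfrac{p_i-q_i}{2}$ on $\ket{0}_A\ket{0}_B\ket{i}_C\ket{0}_D$, so the projection onto that subspace has weight exactly $\Abs{p-q}_2^2/4$ --- a single offset-free acceptance probability proportional to $S$. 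The decoherence worry you raise about the adversarial $\ket{\phi_i}$ dissolves automatically: in the ``good'' subspace both branches have returned registers $A,B$ to $\ket{0}$, so no which-distribution record survives to spoil the relative minus sign. Amplitude estimation with $t=\Theta(1/(\nu\varepsilon))$ then concludes exactly as you anticipated.
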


\begin{corollary}[{\normalfont $\ell^2$-Closeness Tester}]\label{corollary1}
    There is a quantum tester for $\ell^2$-closeness testing with query complexity $O(\frac{1}{\varepsilon})$.
\end{corollary}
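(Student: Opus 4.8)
The plan is to obtain Corollary \ref{corollary1} as the $\nu = 1$ specialization of Theorem \ref{thm:l2-tester}. First I would observe that the (non-tolerant) $\ell^2$-closeness testing problem is exactly the $\nu = 1$ instance of tolerant $\ell^2$-closeness testing: setting $\nu = 1$ turns the ``close'' promise $\|p - q\|_2 \le (1-\nu)\varepsilon$ into $\|p - q\|_2 \le 0$, and since the $\ell^2$-norm vanishes if and only if $p = q$, this is precisely the hypothesis $p = q$ appearing in the closeness testing problem. The ``far'' promise $\|p - q\|_2 \ge \varepsilon$ is verbatim identical in both formulations, so the two decision problems coincide at $\nu = 1$.

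Second, I would simply invoke Theorem \ref{thm:l2-tester} with this choice of $\nu$. The theorem supplies a quantum tester that succeeds with probability at least $2/3$ using $O\!\left(\frac{1}{\nu \varepsilon}\right)$ queries to the oracles of \cref{purified quantum query-access}; substituting $\nu = 1$ yields query complexity $O\!\left(\frac{1}{\varepsilon}\right)$, which is exactly the claimed bound. No additional algorithmic ingredient is needed beyond the tester already constructed for the theorem.

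There is essentially no obstacle at the level of the corollary itself — all the genuine work lies in establishing Theorem \ref{thm:l2-tester}. The only point I would explicitly verify is that the tester of the theorem remains well-defined and correct at the boundary value $\nu = 1$ rather than only on the open interval $\nu \in (0,1)$, i.e. that the constant hidden in the $O(\cdot)$ does not degenerate as $\nu \to 1$. Since the stated complexity $O(1/(\nu\varepsilon))$ is \emph{decreasing} in $\nu$ and $\nu = 1$ is the largest admissible value, this is the most benign regime for the algorithm, so no blow-up can occur and the specialization is legitimate.
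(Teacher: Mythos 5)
Your proposal is correct and follows essentially the same route as the paper, which also obtains the corollary by instantiating Theorem~\ref{thm:l2-tester} at a fixed constant value of $\nu$ (the paper happens to pick $\nu = 1/2$, under which $p = q$ trivially satisfies the tolerant ``close'' promise, while you pick the boundary value $\nu = 1$; both are admissible and yield $O(1/\varepsilon)$). Your extra check that nothing degenerates at $\nu = 1$ is sound but not needed for the paper's choice.
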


\begin{proof}
    It directly follows from \cref{thm:l2-tester} with, e.g., $\nu = 1/2$.
\end{proof}

\begin{corollary}[{\normalfont $\ell^1$-Closeness Tester}]\label{corollary2}
    There is a quantum tester for $\ell^1$-closeness testing with query complexity $O(\frac{\sqrt{n}}{\varepsilon})$.
\end{corollary}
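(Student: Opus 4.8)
The plan is to reduce $\ell^1$-closeness testing to $\ell^2$-closeness testing and then invoke \cref{corollary1}, exactly in the spirit in which \cref{corollary1} itself was deduced from \cref{thm:l2-tester}. The key observation is the Cauchy--Schwarz inequality: for any vector $v \in \mathbb{R}^n$ one has $\|v\|_1 \le \sqrt{n}\,\|v\|_2$, equivalently $\|v\|_2 \ge \|v\|_1/\sqrt{n}$. Applying this with $v = p - q$ translates the $\ell^1$ promise into an $\ell^2$ promise: in the close case $p = q$ we have $\|p-q\|_2 = 0$, while in the far case $\|p-q\|_1 \ge \varepsilon$ forces $\|p-q\|_2 \ge \varepsilon/\sqrt{n}$.

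Consequently, I would run the $\ell^2$-closeness tester of \cref{corollary1} with the rescaled accuracy parameter $\varepsilon' := \varepsilon/\sqrt{n}$. Since that tester distinguishes $\|p-q\|_2 = 0$ from $\|p-q\|_2 \ge \varepsilon'$ with success probability $2/3$ using $O(1/\varepsilon')$ queries, it perfectly separates the two cases of the $\ell^1$ instance; substituting $\varepsilon' = \varepsilon/\sqrt{n}$ gives query complexity $O(1/\varepsilon') = O(\sqrt{n}/\varepsilon)$, as claimed, with the $2/3$ success probability inherited unchanged.

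The only point requiring any care is to confirm that no tolerant version is needed: because the yes-instance is $p = q$ \emph{exactly}, the induced $\ell^2$ gap runs from $0$ to $\varepsilon'$, so the non-tolerant \cref{corollary1} already suffices and there is no loss through the $\nu$ parameter. Beyond that there is no genuine obstacle, since the target bound $O(\sqrt{n}/\varepsilon)$ is precisely what the factor $\sqrt{n}$ from Cauchy--Schwarz produces; the argument is therefore a one-line reduction layered on top of \cref{corollary1}.
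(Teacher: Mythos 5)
Your proposal is correct and is essentially identical to the paper's own proof: both apply the Cauchy--Schwarz bound $\|p-q\|_2 \geq \|p-q\|_1/\sqrt{n}$ and invoke \cref{corollary1} with $\varepsilon \leftarrow \varepsilon/\sqrt{n}$ to obtain the $O(\sqrt{n}/\varepsilon)$ query complexity. Your extra remark that the non-tolerant tester suffices because the yes-case is $p=q$ exactly is accurate and consistent with the paper.
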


\begin{proof}
    It directly follows from the Cauchy-Schwartz inequality $\| p - q \|_2 \geq \frac{1}{\sqrt{n}} \| p 
 - q \|_1$ and taking $\varepsilon \leftarrow \frac{\varepsilon}{\sqrt{n}}$ in \cref{corollary1}.
\end{proof}

The  comparison of our results with previous classical and quantum results for the $\ell^\alpha$-closeness testing problem is presented in Table \ref{table1}. We propose quantum algorithms for $\ell^1$-closeness testing and $\ell^2$-closeness testing with query complexities $O(\frac{\sqrt{n}}{\varepsilon})$ and $O(\frac{1}{\varepsilon})$, respectively, improving the prior best results given in~\cite{gilyen2019distributional}. Compared to those algorithms that are relied on the quantum singular value transformation (QSVT)~\cite{gilyen2019quantum} framework, our algorithms not only remove the polylogarithmic factors but also is more concise.

Moreover, we also show that our quantum $\ell^1$-closeness tester and $\ell^2$-closeness tester both achieve optimal dependence on $\varepsilon$ in \cref{lb:closeness}.

\begin{theorem}[{\normalfont Lower Bound for the $\ell^\alpha$-Closeness Testing Problems}]
\label{lb:closeness}
    For $\alpha \in \cbra{1, 2}$, the quantum query complexity of $\ell^\alpha$-closeness testing  is $\Omega(\frac{1}{\varepsilon})$.
\end{theorem}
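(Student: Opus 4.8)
The plan is to isolate the $1/\varepsilon$ dependence on a constant-size domain and then apply a hybrid (BBBV-style) argument. Since on a two-outcome domain the $\ell^1$- and $\ell^2$-distances agree up to constant factors, a single family of instances will settle both $\alpha \in \{1,2\}$ at once, so throughout I fix the domain to be $\{1,2\}$, identifying element $1$ with $\ket{0}$ and element $2$ with $\ket{1}$.

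First I would reduce closeness testing to an identity-testing instance: fix the second distribution to be uniform, $q=(1/2,1/2)$, with a \emph{known, fixed} oracle $U_q$. A closeness tester must in particular succeed when its second oracle happens to generate the uniform distribution, so a lower bound for this restricted family of inputs is a lower bound for the general problem. For the first distribution I take two cases, $p=q$ (the ``identical'' case) and $p=(1/2-\delta,\,1/2+\delta)$ with $\delta=\varepsilon/\sqrt{2}$ (the ``far'' case); then $\|p-q\|_2=\varepsilon$ and $\|p-q\|_1=\sqrt{2}\,\varepsilon\ge\varepsilon$, so the far case meets the promise for both $\alpha$.

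The crux is to complete the purified oracle $U_p$ to an \emph{explicit full unitary} whose operator-norm distance from $U_q$ is only $\Theta(\varepsilon)$, since \cref{purified quantum query-access} pins $U_p$ down only on the initial state and for a lower bound the adversary may pick any legal completion. Writing $\cos\theta=\sqrt{p_1}$ and using a one-qubit garbage register $A$ with $\ket{\phi_1}=\ket{0}$, $\ket{\phi_2}=\ket{1}$, I set $U_p=\mathrm{CNOT}\cdot(R_y(2\theta)\otimes I)$, which sends $\ket{0}_A\ket{0}_B$ to $\sqrt{p_1}\,\ket{0}_A\ket{0}_B+\sqrt{p_2}\,\ket{1}_A\ket{1}_B$ and is thus a legal oracle. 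Because $\mathrm{CNOT}$ is unitary and tensoring with $I$ preserves the norm, $\|U_p-U_q\|=\|R_y(2\theta_p)-R_y(2\theta_q)\|=2\sin\!\big(|\theta_p-\theta_q|/2\big)$; and since $\tfrac{d}{d\theta}\cos^2\theta=-\sin 2\theta=\Theta(1)$ near $\theta=\pi/4$, we have $|\theta_p-\theta_q|=\Theta(\delta)$, hence $\|U_p-U_q\|=\Theta(\varepsilon)$. The same estimate holds for the controlled version (whose difference is supported on the control-one block) and for the inverses, so every call to the $p$-oracle, of whatever type, perturbs the computation by at most $\Theta(\varepsilon)$ in norm.

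Finally I would run the hybrid argument. Calls to the fixed oracle $U_q$ are identical in both worlds and contribute nothing to the discrepancy, so only calls to $U_p$ matter; after $T$ such calls the two output states differ by at most $O(T\varepsilon)$ in Euclidean norm, hence in trace distance. Distinguishing the two cases with probability $2/3$ requires this distance to be $\Omega(1)$, forcing $T=\Omega(1/\varepsilon)$, and the total query complexity is at least the number of $p$-oracle calls. I expect the oracle-completion step to be the only delicate point: the bound hinges on choosing an adversarial $U_p$ that is simultaneously a legal purified oracle and as close to $U_q$ as possible, and the $R_y$-plus-$\mathrm{CNOT}$ construction is engineered precisely so that the entire discrepancy between the two worlds sits in one small rotation, making the operator-norm estimate transparent and the whole argument independent of $n$.
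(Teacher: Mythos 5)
Your proof is correct, but it takes a genuinely different route from the paper. The paper reduces to the distribution discrimination problem and invokes Belovs' adversary-type lower bound (\cref{thm:hellinger}), which says that telling $p$ from $q$ under purified query access costs $\Omega(1/d_{\textup{H}}(p,q))$; it then exhibits pairs with $\ell^\alpha$-distance $\Theta(\varepsilon)$ but Hellinger distance $O(\varepsilon)$ (for $\alpha=2$ a two-point instance essentially identical to yours, for $\alpha=1$ a perturbation spread over all $n$ elements). You instead run a self-contained BBBV-style hybrid argument, and the one genuinely delicate step --- that the lower-bound adversary may choose \emph{any} legal completion of the purified oracle, so you must exhibit one with $\norm{U_p-U_q}=\Theta(\varepsilon)$ --- is handled correctly by your $R_y$-plus-CNOT construction; your treatment of the controlled and inverse calls is also right. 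The two arguments are quantitatively equivalent here: the minimal Euclidean distance between legal purifications of $p$ and $q$ is $\sqrt{2(1-\sum_i\sqrt{p_iq_i})}=\sqrt{2}\,d_{\textup{H}}(p,q)$, so the hybrid bound $\Omega(1/\norm{U_p-U_q})$ and the Hellinger bound coincide up to constants on these instances. What each approach buys: the paper's proof is shorter and rides on a tight, general-purpose theorem (Belovs' bound is optimal for discrimination, whereas the hybrid method can be loose on other problems); yours is elementary, avoids the external citation, and makes explicit \emph{why} the purification freedom matters, at the cost of an instance that lives on a two-element support and therefore (like the paper's) says nothing about the $n$-dependence for $\ell^1$ --- which the paper leaves open.
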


\begin{table}[htbp]
\renewcommand\arraystretch{1.5} 
\centering  

\begin{threeparttable}
\caption{ Summary of sample and query complexity results of $\ell^\alpha$-closeness.}
\label{table1}   

\begin{tabular}{cccc}   
\toprule  

Type & Reference &$\ell^1$-Closeness Testing & $\ell^2$-Closeness Testing \\ 

\midrule

Classical & \makecell{Chan~\textit{et~al.}~\cite{chan2014optimal}} & $ \Theta(\max(\frac{n^{2/3}}{\varepsilon^{4/3}}, \frac{n^{1/2}}{\varepsilon^2})) $   & $\Theta(\frac{\max(\| p \|_2, \| q \|_2)}{\varepsilon^2})$   \\

\midrule  

\multirow{4}{*}{Quantum} & Bravyi~\textit{et~al.}~\cite{bravyi2011quantum}\tnote{1}  & $ O(\frac{\sqrt{n}}{\varepsilon^8}) $ & $/$ \\

~ & Montanaro~\cite{montanaro2015quantum}\tnote{1} & $ O(\frac{\sqrt{n}}{\varepsilon^{2.5}}\log(\frac{1}{\varepsilon})) $ & $/$ \\

~ & Gily{\'e}n and Li~\cite{gilyen2019distributional} & $ O(\frac{\sqrt{n}}{\varepsilon}\log^3(\frac{{n}}{\varepsilon})\log\log(\frac{{n}}{\varepsilon})) $ & $O(\frac{1}{\varepsilon}\log^3(\frac{1}{\varepsilon})\log\log(\frac{1}{\varepsilon}))$ \\

~ & \multirow{2}{*}{This work} & {{$\bm{O(\frac{\sqrt{n}}{\varepsilon})}$, $\bm{\Omega(\frac{1}{\varepsilon})}$ }}  & {{$\bm{\Theta(\frac{1}{\varepsilon})}$ }} \\
~ & ~ & (\cref{corollary2}, \cref{lb:closeness}) & (\cref{corollary1}, \cref{lb:closeness}) \\

\bottomrule
\end{tabular}  
\begin{tablenotes}
    \item [1] { Bravyi~\textit{et~al.}~\cite{bravyi2011quantum} and Montanaro~\cite{montanaro2015quantum} actually considered the problem of estimating the total variation distance between probability distributions, which is harder than $\ell^1$-closeness testing.} 
\end{tablenotes}
\end{threeparttable}
\end{table}

The second contribution is that we give, to the best of our knowledge, the first quantum tester for the $k$-wise uniformity testing problem. The  comparison of our results with previous classical results is presented in \cref{table:k-wise}. 

\begin{theorem}[{\normalfont $k$-Wise Uniformity Tester}]\label{thm-k-uniform}
    Let $k \geq 1$ be a constant. Then there is a quantum tester for $k$-wise uniformity testing with query complexity $O(\frac{\sqrt{n^k}}{\varepsilon})$.    
\end{theorem}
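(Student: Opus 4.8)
The plan is to reduce $k$-wise uniformity testing to estimating the Fourier weight of $p$ on levels $1$ through $k$, and then to extract this weight from a single amplitude-estimation call whose success probability has no constant offset. For $S\subseteq[n]$ write $\chi_S(x)=\prod_{i\in S}(-1)^{x_i}$ and let $\tilde p(S):=\mathbb{E}_{x\sim p}\sbra{\chi_S(x)}$ denote the Fourier coefficient of $p$. By \cref{def:k-wise-uniform}, $p$ is $k$-wise uniform if and only if $\tilde p(S)=0$ for every $S$ with $1\le\abs{S}\le k$, i.e. if and only if the weight $W:=\sum_{1\le\abs{S}\le k}\tilde p(S)^2$ vanishes. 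For the soundness direction I would invoke the structural result of O'Donnell and Zhao~\cite{o2018closeness}, which shows that the total-variation distance to the nearest $k$-wise uniform distribution is $O(\sqrt{W})$; consequently, if $p$ is $\varepsilon$-far then $W\ge\Omega(\varepsilon^2)$. Writing $M:=\sum_{j=1}^{k}\binom{n}{j}=\Theta(n^k)$ (a constant $k$) for the number of relevant subsets, the task reduces to distinguishing $W=0$ from $W\ge\Omega(\varepsilon^2)$.

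The key step is to expose each $\tilde p(S)$ as a block-encoded amplitude. Let $\ket{0}$ abbreviate the joint initial state $\ket{0}_A\ket{0}_B$ of the oracle in \cref{purified quantum query-access}, let $D_S=\bigotimes_{i\in S}Z_i$ act on register $B$, and define $W_S:=U_p^\dagger\rbra{I_A\otimes D_S}U_p$. Using $\braket{\phi_x|\phi_{x'}}=\delta_{xx'}$ one checks that $\bra{0}W_S\ket{0}=\sum_{x}p_x\chi_S(x)=\tilde p(S)$, so applying $W_S$ to $\ket{0}$ and testing whether register $AB$ returns to $\ket{0}$ succeeds with probability exactly $\tilde p(S)^2$, with no additive offset. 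I would then prepare the uniform superposition $\frac{1}{\sqrt{M}}\sum_S\ket{S}_C$ over the valid subsets and apply the controlled operation $\sum_S(\ket{S}\bra{S})_C\otimes W_S$; since $U_p$ and $U_p^\dagger$ act on $AB$ independently of the control, this costs only two queries to $U_p$ (the controlled $D_S$ uses at most $k$ gates and no queries). Projecting register $AB$ of the resulting state $\frac{1}{\sqrt{M}}\sum_S\ket{S}_C\,W_S\ket{0}_{AB}$ onto $\ket{0}$ yields success probability
\[
\frac{1}{M}\sum_{1\le\abs{S}\le k}\abs{\bra{0}W_S\ket{0}}^2=\frac{1}{M}\sum_{1\le\abs{S}\le k}\tilde p(S)^2=\frac{W}{M}.
\]

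Let $\mathcal{A}$ denote this state-preparation unitary and let $\Pi=I_C\otimes\ket{0}\bra{0}_{AB}$ be the good projector, so that $\Abs{\Pi\mathcal{A}\ket{0}}=\sqrt{W/M}$. Running amplitude estimation~\cite{brassard2002quantum} on $\mathcal{A}$ with $T$ iterations estimates this amplitude to additive error $O(1/T)$. To separate the completeness case ($W=0$, amplitude $0$) from the soundness case ($W\ge\Omega(\varepsilon^2)$, amplitude $\ge\Omega(\varepsilon/\sqrt{M})$), I would set $T=\Theta(\sqrt{M}/\varepsilon)=\Theta(\sqrt{n^k}/\varepsilon)$, threshold the estimate at half the gap, and repeat a constant number of times to boost the confidence. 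Since each iteration invokes $\mathcal{A}$ and $\mathcal{A}^\dagger$ a constant number of times and each makes only $O(1)$ queries to $U_p$, the overall query complexity is $O(\sqrt{n^k}/\varepsilon)$, as claimed.

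The main obstacle, and the crux of the quadratic speedup, is ensuring that the estimated quantity is the \emph{amplitude} $\sqrt{W/M}$ rather than the \emph{probability} $W/M$, i.e. that the success probability is exactly proportional to $\tilde p(S)^2$ with no constant term. A naive SWAP test or Hadamard test between $D_S U_p\ket{0}$ and $U_p\ket{0}$ would give success probability of the form $\tfrac12\rbra{1+\tilde p(S)^2}$ or $\tfrac12\rbra{1\pm\tilde p(S)}$, whose constant $\tfrac12$ offset forces amplitude estimation to resolve an additive gap of order $\varepsilon^2/M$ and hence costs $\Theta(n^k/\varepsilon^2)$ queries, with no advantage in $n$ over the classical tester. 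The block-encoding $W_S=U_p^\dagger D_S U_p$ removes the offset, so amplitude estimation instead resolves a gap of order $\varepsilon/\sqrt{M}$ at the amplitude level and the factor $n^k$ enters only under a square root. The one ingredient I take as given is the O'Donnell--Zhao bound $\mathrm{dist}_k(p)=O(\sqrt{W})$, whose proof—correcting the low-degree-truncated signed measure into a genuine $k$-wise uniform distribution—is the technical heart of the classical analysis and is not reproved here.
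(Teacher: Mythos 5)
Your proposal is correct and follows essentially the same route as the paper: the controlled operation $\sum_S(\ket{S}\bra{S})_C\otimes W_S$ with $W_S=U_p^\dagger D_S U_p$ is exactly the paper's circuit (a subset-superposition register controlling $Z$ gates on the sample register, sandwiched between $U_p$ and $U_p^\dagger$), your $\tilde p(S)$ coincides with the paper's density-function Fourier coefficient $\hat\varphi(S)$, and distinguishing amplitude $0$ from amplitude $\Omega(\varepsilon/\sqrt{M})$ via amplitude estimation with $\Theta(\sqrt{M}/\varepsilon)$ iterations is the paper's \alg{Zero\mbox{-}Tester} argument. Your closing remark about why the offset-free block-encoding (rather than a SWAP/Hadamard test) is what yields the quadratic speedup is a correct and apt articulation of the paper's key point.
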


\begin{table}[htbp]
\renewcommand\arraystretch{1.5} 
\centering
\caption{ Summary of sample and query complexity results of $k$-wise uniformity testing.}
\label{table:k-wise} 
\begin{threeparttable}
\begin{tabular}{ccc}   
    \toprule   
    Type & Upper bound & Lower bound \\ 
    \midrule
    \multirow{2}{*}{Classical} & \multirow{2}{*}{$O(\frac{n^{k}}{\varepsilon^2})$  \cite{o2018closeness}} & {{$\Omega(\frac{n^{(k-1)/2}}{\varepsilon})$ for $k > 2$  \cite{alon2007testing}, }}\\
    ~ & ~ & $\Omega(\frac{n}{\varepsilon^2})$ for $k = 2$  \cite{o2018closeness} \\
    \midrule 
    \multirow{2}{*}{Quantum} & {$\bm{O(\frac{\sqrt{n^k}}{\varepsilon})}$}  & {$\bm{\Omega(\frac{n^{(k-1)/3}}{\varepsilon^{2/3}})}$ for $k > 2$}  \\
    ~ & (\cref{thm-k-uniform}) & (\cref{lb:k-wise}) \\
    \bottomrule
\end{tabular} 
\end{threeparttable}
\end{table}

As shown in \cref{table:k-wise}, the quantum tester given in \cref{thm-k-uniform} with query complexity $O(\frac{\sqrt{n^k}}{\varepsilon})$ achieves a quadratic speedup over the state-of-the-art classical algorithm with complexity $O(\frac{n^{k}}{\varepsilon^2})$~\cite{o2018closeness}. 
In particular, for $k = 2$, our quantum algorithm outperforms any classical algorithm, since our quantum upper bound is $O(\frac{n}{\varepsilon})$ and the classical lower bound is $\Omega(\frac{n}{\varepsilon^2})$.

We also  prove a quantum lower bound for the $k$-wise uniformity testing problem in \cref{lb:k-wise}. However, there is still a gap between the lower and upper bounds
, both for the classical and quantum settings.

\begin{theorem}[{\normalfont Lower Bound for the $k$-Wise Uniformity Testing Problem}]
\label{lb:k-wise}
    Let $k > 2$ be a constant, $n$ be a sufficiently large integer,  and $\varepsilon \in (0, 0.228)$ be a sufficiently small  real number with $0.228^2\frac{M^n_k}{n\varepsilon^2} < 2^{n^{1/3}}$. Then the quantum query complexity of $k$-wise uniformity testing is $\Omega\rbra{\frac{n^{(k-1)/3}}{\varepsilon^{2/3}}}$.
\end{theorem}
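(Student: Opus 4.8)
The plan is to prove the lower bound by reduction to a quantum collision-type problem, exploiting the fact that the target bound $\Omega(n^{(k-1)/3}/\varepsilon^{2/3})$ is exactly the two-thirds power of the classical lower bound $\Omega(n^{(k-1)/2}/\varepsilon)$ of Alon~\textit{et~al.}~\cite{alon2007testing}. Such a relationship is the characteristic signature of problems whose classical hardness comes from detecting coincidences (collisions) among samples and for which the quantum speedup is cube-root, as in the Aaronson--Shi collision bound. Concretely, I would first recall that a $k$-wise uniform distribution is precisely one whose Fourier coefficients $\hat p(S)$ vanish for all $1 \le |S| \le k$, so that the signal distinguishing $k$-wise uniformity from $\varepsilon$-farness lives in the $M^n_k$-dimensional space of low-level Fourier coefficients (indexed by the nonempty $S \subseteq [n]$ with $|S| \le k$, so $M^n_k = \Theta(n^k)$); by the O'Donnell--Zhao characterization~\cite{o2018closeness} the total-variation distance to the nearest $k$-wise uniform distribution is controlled by the $\ell^2$-mass $\sum_{1 \le |S| \le k}\hat p(S)^2$.

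The hard instances would be built by adapting the classical lower-bound ensembles. For the \textsc{yes} case I take a $k$-wise uniform distribution (e.g.\ uniform on $\cbra{0,1}^n$), and for the \textsc{no} case I plant a random level-$(\le k)$ Fourier perturbation of magnitude $\delta$ on an effective domain of size $\Theta(M^n_k/n) = \Theta(n^{k-1})$, tuned so that the resulting distribution is $\varepsilon$-far from every $k$-wise uniform distribution with high probability; the constant $0.228$ in the hypothesis is the maximal far-ness this construction can guarantee, which is why $\varepsilon$ is required to lie below it. Both ensembles are designed to be identically distributed on any single sample, so that the two can be separated only through pairwise coincidences weighted by the planted level-$k$ parities --- exactly the kind of collision statistic for which quantum affords a cube-root speedup.

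The quantum step would invoke the polynomial method underlying the collision lower bound: after $t$ queries, the acceptance probability of any tester is a polynomial of degree at most $2t$ in the oracle entries, and since the \textsc{yes} and \textsc{no} ensembles are moment-matched up to the relevant order, distinguishing them forces $2t$ to exceed the collision threshold of the effective domain. Balancing $\delta \sim \varepsilon/\sqrt{M^n_k}$ against the effective size $\Theta(n^{k-1})$ then yields $t = \Omega\rbra{(n^{k-1}/\varepsilon^2)^{1/3}} = \Omega\rbra{n^{(k-1)/3}/\varepsilon^{2/3}}$. The technical condition $0.228^2\,\tfrac{M^n_k}{n\varepsilon^2} < 2^{n^{1/3}}$ is precisely what keeps the claimed query count below the regime in which moment-matching breaks down (equivalently, it guarantees that the construction fits inside $\cbra{0,1}^n$ with enough granularity), so the bound is non-vacuous. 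Since the instances can be realized in the discrete model of \cref{def:dqqa}, which simulates the purified model of \cref{purified quantum query-access} with $O(1)$ overhead, the bound transfers to the purified-access setting in which \cref{thm-k-uniform} is stated.

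The main obstacle I anticipate is twofold. First, one must show that the planted \textsc{no} instances are $\varepsilon$-far from \emph{every} $k$-wise uniform distribution, not merely from the uniform one; this requires a Fourier-analytic argument that the planted low-level mass cannot be absorbed by any admissible correction. Second, one must rigorously establish the quantum moment-matching, i.e.\ that the mixed oracle states arising from the two ensembles are indistinguishable to degree $2t$, which is where the cube-root scaling and the precise dependence on $M^n_k$ are extracted. The remaining parameter bookkeeping --- relating $\delta$, the effective dimension, and the granularity of the distribution --- is routine once these two facts are in place.
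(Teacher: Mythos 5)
Your proposal has the right high-level intuition --- the bound is collision-type, the cube-root quantum speedup is the mechanism, and the hard instances must be single-sample-indistinguishable --- but the two steps you yourself flag as ``main obstacles'' are precisely the load-bearing ones, and the construction you choose makes them harder than they need to be. The paper does not plant a low-level Fourier perturbation and does not re-run a moment-matching polynomial-method argument. Its \textsc{no} instance is simply the uniform distribution over a random multiset $\mathcal{Q}$ of $Q \approx 0.228^2\,\frac{M^n_k}{n\varepsilon^2}$ strings from $\{0,1\}^n$, realized as the output distribution of a random $2^n/Q$-to-one function; the \textsc{yes} instance is the image of a random one-to-one function, i.e.\ the uniform distribution. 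Your first obstacle (farness from \emph{every} $k$-wise uniform distribution, not just the uniform one) is then discharged by quoting the Random Distribution Lemma of Alon \emph{et al.} (\cref{lemma:random-subset}), which is exactly where the constant $0.228$ and the condition $0.228^2\frac{M^n_k}{n\varepsilon^2}<2^{n^{1/3}}$ come from --- not from any maximal-farness property of a Fourier-planting construction. Your second obstacle (the degree/moment-matching lower bound) is not proved from scratch either: the paper black-box reduces from the standard collision problem (\cref{the:collision}), inheriting $\Omega\rbra{Q^{1/3}}$ directly, and only needs the mild extension in \cref{cor:collision} (XOR-form oracle, and tolerance of errors on a $1\%$ fraction of each function class, since the random subset is only far from $k$-wise uniform with probability $99\%$). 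A birthday-paradox step ensures the $Q$ sampled strings are distinct so that the $2^n/Q$-to-one function is well defined, and an $O(1)$-overhead simulation turns $O_f$ into a (controlled, inverse) purified query-access oracle.

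By contrast, your planted-Fourier route would require (i) verifying the perturbed density stays non-negative, (ii) a new Fourier-analytic farness argument against all $k$-wise uniform distributions, and (iii) proving an Aaronson--Shi-style degree lower bound for a brand-new pair of ensembles --- each a substantial piece of work that you acknowledge but do not supply. As written, the proposal is a plausible research program rather than a proof; the decisive missing idea is that the hard \textsc{no} instance can be taken to be ``uniform on a random $Q$-element subset,'' which simultaneously makes the farness claim an off-the-shelf lemma and makes the distinguishing task literally an instance of the collision problem.
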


\subsection{Techniques}

Motivated by Montanaro~\cite{montanaro2015quantum} and Gily{\'e}n and Li~\cite{gilyen2019distributional}, the general recipe is to encode the quantity (e.g. $\Abs{p - q}_2$) that we want to estimate in the amplitude of a quantum state by an elaborately constructed unitary operator, and the apply {amplitude estimation}~\cite{brassard2002quantum} to estimate it up to enough accuracy. The encoding part is the most technical part.

For the $\ell^2$-closness testing problem, to encode the distance $\Abs{p - q}_2$, Gily{\'e}n and Li~\cite{gilyen2019distributional} relied on the technique of quantum singular value transformation (QSVT)~\cite{gilyen2019quantum}, which is somewhat involved. On the contrary, we give a much more concise solution. Lemma \ref{lemma:1} plays a key role in our algorithms and it is worth noting that it is not an application of Lemma 45 of~\cite{gilyen2019quantum}, which block-encodes a density matrix, and is the premise to employ the QSVT~\cite{gilyen2019quantum} framework. As for the lower bound, we reduce the distribution discrimination problem~\cite{belovs2019quantum} to the closeness testing problem.

{For the $k$-wise uniformity testing problem, it has been shown that this property is highly related to the Fourier coefficients of the corresponding density function on index subsets $S$ such that $1 \leq | S | \leq k$~\cite{alon2007testing, o2018closeness}. 
In this article, we propose a circuit to compute the Fourier coefficients, which allows us to test whether a distribution is $k$-wise uniform or not. As for the lower bound, we reduce the collision problem~\cite{aaronson2004quantum, ambainis2005polynomial, kutin2005quantum} to the $k$-wise uniformity testing problem.}

\subsection{Related Works}

\paragraph{Classical distribution property testing} In the realm of classical computing, distribution property testing~\cite{batu2000testing, batu2013testing} has attracted a lot of interest and developed into a 
 mature research field. $\ell^\alpha$-closeness testing has been playing a key role in distribution property testing~\cite{batu2000testing, diakonikolas2016new}. For example,  Diakonikolas
and Kane~\cite{diakonikolas2016new} proposed a reduction-based method that obtains optimal testers for various problems under the $\ell^1$-norm (and other metrics),
by applying a randomized transformation to a basic $\ell^2$-closeness tester. For $\ell^1$-closeness testing,   Batu~\textit{et~al.}~\cite{batu2013testing} gave a sublinear algorithm using $\widetilde{O}(\frac{n^{2/3}}{\varepsilon^{8/3}})$ samples to $p$ and $q$.   Chan~\textit{et~al.}~\cite{chan2014optimal} achieved the optimal sample complexity $\Theta(\max(\frac{n^{2/3}}{\varepsilon^{4/3}}, \frac{n^{1/2}}{\varepsilon^2}))$, and also gave the optimal sample complexity $\Theta(\frac{1}{\varepsilon^2})$ for $\ell^2$-closeness testing. The reduction-based framework proposed by  Diakonikolas and
Kane~\textit{et~al.}~\cite{diakonikolas2016new} recovered many results in the distribution property testing field, including all the results mentioned above. See~\cite{rubinfeld2012taming, canonne2020survey} for two recent surveys.

 For the $k$-wise uniformity testing problem, although  Alon~\textit{et~al.}~\cite{alon2003almost} did not explicitly consider this problem, an upper bound $\widetilde{O}(\frac{n^{2k}}{\varepsilon^2})$ can be derived from theorems in that paper.  Alon~\textit{et~al.}~\cite{alon2007testing} improved the bound to $O(\frac{n^{k}\log^{k+1}(n)}{\varepsilon^2})$, which was further improved by  O'Donnell and Zhao~\cite{o2018closeness} to $O(\frac{n^{k}}{\varepsilon^2})$.
 
\paragraph{Quantum distribution property testing} The topic of reducing the complexity of property testing problems with the help of quantum computing has attracted a lot of research, and one can refer to~\cite{montanaro2013survey} for more details. Here we briefly describe some results on distribution  testing. The first work was due to  Bravyi~\textit{et~al.}~\cite{bravyi2011quantum} who considered three different problems including $\ell^1$-closeness testing, uniformity testing\footnote{It means testing whether an unknown distribution is close to the uniform distribution.} and orthogonality testing, and gave quantum upper bounds $O(\frac{\sqrt{n}}{\varepsilon^8})$, ${O}(n^{1/3})$(under the assumption that $\varepsilon$ is a constant), $O(\frac{n^{1/3}}{\varepsilon})$, respectively.  Chakraborty~\textit{et~al.}~\cite{chakraborty2010new} independently gave an upper bound ${O}(\frac{n^{1/3}}{\varepsilon^2})$ for uniformity testing, and further showed that identity testing to any fixed distribution can be done with query complexity $\widetilde{O}(\frac{n^{1/3}}{\varepsilon^5})$. Both Bravyi~\textit{et~al.}~\cite{bravyi2011quantum} and  Chakraborty~\textit{et~al.}~\cite{chakraborty2010new} showed that $\Omega(n^{1/3})$ quantum queries are  necessary for uniformity testing, following from a reduction from the collision problem~\cite{aaronson2004quantum, ambainis2005polynomial, kutin2005quantum}.  Montanaro~\textit{et~al.}~\cite{montanaro2015quantum} improved the $\varepsilon$-dependence of $\ell^1$-closeness testing to $\widetilde{O}(\frac{\sqrt{n}}{\varepsilon^{2.5}})$, and the upper bound is further improved to $\widetilde{O}(\frac{\sqrt{n}}{\varepsilon})$ by  Gily{\'e}n and Li~\cite{gilyen2019distributional}. 

{Apart from testing probability distributions, quantum state property testing is also an emerging topic in quantum computing, where quantum states can be seen as a generalization of probability distributions.
The closeness testing of quantum states with respect to trace distance and fidelity was investigated in a series of papers~\cite{BOW,gilyen2019distributional,wang2023quantum,wang2022new,gilyen2022improved,wang2023fast}.
}

\section{Preliminaries}

We use $[n]$ to denote the set $\{1, 2, \cdots, n\}$. Let $M^n_k = \sum_{i=1}^k \tbinom{n}{i}$. For a  quantum system composed of subsystems $A$, $B$ and a unitary operator $U$, we use $(U)_A$ to denote that the unitary $U$ is operated on the subsystem $A$, while the other subsystem remains unchanged. 

\subsection{Quantum Subroutine}

We would need a well-known quantum algorithm to estimate a probability $p$ quadratically more efficiently than  classical sampling:

\begin{theorem}[Amplitude Estimation~\cite{brassard2002quantum}]\label{amplitude estimation}

    Given a unitary ${U}$ and an orthogonal projector $\Pi$ such that $U | 0 \rangle = \sqrt{p} | \phi 
    \rangle + \sqrt{1-p} | \phi^{\perp} \rangle$ for some $p \in [0, 1]$, $\Pi | \phi \rangle = | \phi \rangle$ and $ \Pi | \phi^{\perp} \rangle = 0$, for integer $t > 0$, there is a quantum algorithm, denoted by \alg{Amplitude\mbox{-}Estimation}$(U, \Pi, t)$, outputting $\widetilde{p}$ such that 
    \begin{equation*}
        | \widetilde{p} - p | \leq 2\pi \frac{\sqrt{p(1-p)}}{t} + \frac{\pi^2}{t^2}
    \end{equation*}
    with probability at least $8/\pi^2$, using $t$ queries to $U$, $U^\dagger$ and $I - 2\Pi$. Moreover, if $p = 0$, \alg{Amplitude\mbox{-}Estimation}$(U, \Pi, t)$ outputs $\widetilde{p} = 0$ with certainty.
 
\end{theorem}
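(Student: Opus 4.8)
The plan is to reconstruct the argument of \cite{brassard2002quantum}, reducing amplitude estimation to \emph{quantum phase estimation} applied to a Grover-type iterate built from the given oracles. First I would rewrite the hypothesis as $U\ket{0} = \sin\theta\,\ket{\phi} + \cos\theta\,\ket{\phi^\perp}$ with $\theta \in [0,\pi/2]$ and $p = \sin^2\theta$, so that estimating $p$ becomes estimating the angle $\theta$. The central object is the iterate $Q = -U\,(I - 2\ket{0}\bra{0})\,U^\dagger\,(I - 2\Pi)$, which is assembled entirely from the available resources $U$, $U^\dagger$ and $I - 2\Pi$, each application of $Q$ costing exactly one query of each type. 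This is where the $t$-query budget of the statement will be spent.

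The first key step is to verify that the two-dimensional subspace $\mathrm{span}\{\ket{\phi}, \ket{\phi^\perp}\}$ is invariant under $Q$ and that, restricted to it, $Q$ acts as a rotation by $2\theta$; hence $Q$ has eigenvalues $e^{\pm 2i\theta}$ with eigenvectors $\ket{\psi_\pm} = \tfrac{1}{\sqrt 2}\rbra*{\ket{\phi} \pm i\,\ket{\phi^\perp}}$, and the accessible state $U\ket{0}$ is an equal-magnitude superposition of $\ket{\psi_+}$ and $\ket{\psi_-}$ --- precisely the input phase estimation needs. I would then run phase estimation on $Q$ with $t$ controlled applications and input $U\ket{0}$; reading $\sin^2(\pi \tilde\omega)$ of the measured frequency $\tilde\omega$ resolves the two-fold ambiguity $e^{\pm 2i\theta}$ automatically, since $\sin^2(\pi(1-x)) = \sin^2(\pi x)$. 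Invoking the standard guarantee of phase estimation (the nearest grid point is returned with probability at least $8/\pi^2$) yields an estimate $\tilde\theta$ with $\abs{\tilde\theta - \theta} \le \pi/t$ with probability at least $8/\pi^2$, and I set $\tilde p = \sin^2\tilde\theta$.

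The step I expect to be the main technical point is converting this \emph{angular} accuracy into the stated \emph{amplitude} accuracy. Writing $\Delta := \tilde\theta - \theta$ and using the identity $\sin^2\tilde\theta - \sin^2\theta = \tfrac12\bigl[\cos 2\theta\,(1 - \cos 2\Delta) + \sin 2\theta\,\sin 2\Delta\bigr]$, together with $\abs{1 - \cos 2\Delta} \le 2\Delta^2$, $\abs{\sin 2\Delta} \le 2\abs{\Delta}$, and $\sin 2\theta = 2\sqrt{p(1-p)}$, gives $\abs{\tilde p - p} \le \Delta^2 + 2\sqrt{p(1-p)}\,\abs{\Delta} \le \tfrac{\pi^2}{t^2} + 2\pi\tfrac{\sqrt{p(1-p)}}{t}$, which is exactly the claimed bound. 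Finally, the degenerate case $p = 0$ is handled separately: then $\theta = 0$, the iterate $Q$ fixes the input state $U\ket{0} = \ket{\phi^\perp}$, so phase estimation returns frequency $0$ with certainty and hence $\tilde p = 0$, as asserted.
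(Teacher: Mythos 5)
This theorem is imported from Brassard, H{\o}yer, Mosca, and Tapp and the paper gives no proof of it; your reconstruction is the canonical argument (phase estimation on the Grover iterate $Q=(2U\ket{0}\bra{0}U^\dagger - I)(I-2\Pi)$, whose eigenphases $\pm 2\theta$ encode $p=\sin^2\theta$, followed by the trigonometric conversion of angular error to amplitude error), and all the steps check out, including the exact error bound and the $p=0$ case. Your proof is correct and consistent with the source the paper cites.
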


 The following corollary, which is implicitly given in~\cite{chakraborty2010new},  is an application of the above theorem.
 
\begin{corollary}\label{corollary3}
    Given a unitary $U$ and an orthogonal projector $\Pi$ such that $U | 0 \rangle = \sqrt{p} | \phi \rangle + \sqrt{1-p} | \phi^{\perp} \rangle$ for some $p \in [0, 1]$, $\Pi | \phi \rangle = | \phi \rangle$ and $ \Pi | \phi^{\perp} \rangle = 0$, for $\varepsilon \in (0, 1)$, there is a quantum algorithm, denoted by \alg{Zero\mbox{-}Tester}$(U, \Pi, \varepsilon)$, which outputs \emph{YES} if $p = 0$, or outputs \emph{NO} if $p > \varepsilon$, with probability at least $2/3$, using $O(\frac{1}{\sqrt{\varepsilon}})$ queries to $U$, $U^\dagger$ and $I - 2\Pi$. 
\end{corollary}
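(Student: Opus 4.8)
The plan is to reduce the task to a single run of \alg{Amplitude\mbox{-}Estimation} from \cref{amplitude estimation} with a carefully chosen precision parameter $t = \Theta\rbra{1/\sqrt{\varepsilon}}$, and then to decide purely according to whether the returned estimate $\widetilde{p}$ is \emph{exactly} zero. Concretely, I would set $t = \ceil{C/\sqrt{\varepsilon}}$ for a suitable absolute constant $C > (1+\sqrt{2})\pi$ (e.g.\ $C = 8$), compute $\widetilde{p} \leftarrow \alg{Amplitude\mbox{-}Estimation}(U, \Pi, t)$, and output \emph{YES} if $\widetilde{p} = 0$ and \emph{NO} otherwise. The number of queries is then exactly $t = O\rbra{1/\sqrt{\varepsilon}}$ by construction, matching the claimed bound, so the whole argument reduces to establishing one-sided correctness.

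The completeness case $p = 0$ is immediate: the ``moreover'' clause of \cref{amplitude estimation} guarantees that \alg{Amplitude\mbox{-}Estimation} returns $\widetilde{p} = 0$ with certainty, so the tester outputs \emph{YES} with probability $1$. For the soundness case $p > \varepsilon$, I would invoke the error guarantee of \cref{amplitude estimation}, which holds with probability at least $8/\pi^2$, and show that under this guarantee the estimate cannot vanish. Writing $a = \pi/t$ and using $\sqrt{p(1-p)} \le \sqrt{p}$, the bound yields
\[
\widetilde{p} \ge p - 2\pi\frac{\sqrt{p(1-p)}}{t} - \frac{\pi^2}{t^2} \ge \rbra{\sqrt{p} - a}^2 - 2a^2,
\]
which is strictly positive as soon as $\sqrt{p} > (1+\sqrt{2})\,a = (1+\sqrt{2})\pi/t$. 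Since $p > \varepsilon$ gives $\sqrt{p} > \sqrt{\varepsilon}$ and the choice $t \ge C/\sqrt{\varepsilon}$ with $C > (1+\sqrt{2})\pi$ gives $(1+\sqrt{2})\pi/t \le (1+\sqrt{2})\pi\sqrt{\varepsilon}/C < \sqrt{\varepsilon}$, this inequality holds, forcing $\widetilde{p} > 0$ and hence a \emph{NO} output. The tester therefore errs only on the event that the returned estimate violates the guarantee of \cref{amplitude estimation}, an event of probability at most $1 - 8/\pi^2 < 1/3$; the success probability is thus at least $8/\pi^2 > 2/3$, so no amplification is needed.

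The point I expect to require the most care is the choice of decision threshold. A naive tester would try to estimate $p$ and compare it against, say, $\varepsilon/2$, but with only $O\rbra{1/\sqrt{\varepsilon}}$ queries the additive error of amplitude estimation is of order $\sqrt{\varepsilon}$, which is far larger than $\varepsilon$ and cannot separate $0$ from $\varepsilon$. The resolution is that we never need an accurate estimate of $p$: we only need a one-sided test, and the \emph{exactness} of the output at $p = 0$ lets us threshold at precisely zero rather than at a positive value. The only quantitative content is then the elementary quadratic inequality above, which determines how small the constant $C$ can be taken while still guaranteeing $\widetilde{p} > 0$ whenever $p > \varepsilon$; the loose estimate $\sqrt{p(1-p)} \le \sqrt{p}$ keeps this step clean at the cost of a slightly larger constant, which is irrelevant to the $O\rbra{1/\sqrt{\varepsilon}}$ query bound.
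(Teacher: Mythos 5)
Your proposal is correct and follows essentially the same route as the paper: a single call to \alg{Amplitude\mbox{-}Estimation} with $t = \Theta(1/\sqrt{\varepsilon})$, using the exact-zero clause of \cref{amplitude estimation} for completeness and the additive error bound for soundness. The only difference is cosmetic: the paper thresholds the estimate at $\varepsilon/2$ (which forces a small case split on whether $p < 4\varepsilon$), whereas you threshold at exactly $0$ and verify $\widetilde{p} > 0$ via the quadratic inequality $\sqrt{p} > (1+\sqrt{2})\pi/t$ — both yield the same $O(1/\sqrt{\varepsilon})$ query count and success probability $8/\pi^2$.
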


\begin{proof}
        The algorithm is to run \alg{Amplitude\mbox{-}Estimation}$(U, \Pi, t)$ with $t \leftarrow \ceil{\frac{10\pi}{\sqrt{\varepsilon}}}$ and output {YES} if the estimation result $\widetilde{p} < \frac{\varepsilon}{2}$ or output {NO} otherwise.
        
        It remains to show that if $ p = 0 $ we have $ \widetilde{p} \leq \frac{\varepsilon}{2}$ with high probability, and if $ p > \varepsilon$, we have $\widetilde{p} > \frac{\varepsilon}{2}$ with high probability. If $ p = 0 $, by Theorem \ref{amplitude estimation}, the estimate $\widetilde{p} = 0$ with certainty. 
    If $ p > \varepsilon$, we have the following cases:
    \begin{itemize}
        \item $p < 4\varepsilon$: by Theorem \ref{amplitude estimation}, with probability at least $\frac{8}{\pi^2}$, the estimate $ \widetilde{p} $ satisfies $| \widetilde{p} - p | \leq \frac{2\pi \sqrt{4\varepsilon}}{t} + \frac{\pi^2}{t^2} < \frac{\varepsilon}{2}$, and thus $\widetilde{p} > \frac{\varepsilon}{2}$;

        \item $p \geq 4\varepsilon$: by Theorem \ref{amplitude estimation}, with probability at least $\frac{8}{\pi^2}$, the estimate $\widetilde{p}$ satisfies $| \widetilde{p} - p | \leq \frac{2\pi \sqrt{p}}{t} + \frac{\pi^2}{t^2} < \frac{{p}}{2}$, and thus $\widetilde{p} > \frac{{p}}{2} > \frac{\varepsilon}{2}$.
    \end{itemize}
\end{proof}

\section{Quantum  Complexity of  $\ell^\alpha$-closeness Testing}
\subsection{Quantum Upper Bounds}

In this section, we give a quantum tester (\cref{thm:l2-tester}) for $\ell^2$-closeness testing problem and the $\ell^1$-closeness tester follows by the Cauchy-Schwartz inequality as shown in \cref{corollary2}.

Let $U_p$ and $U_q$ be purified quantum query-access oracles for distributions $p$ and $q$ which work as 
\begin{align*}
    U_p | 0 \rangle_A | 0 \rangle_B = \sum_{i \in [n]} \sqrt{p_i} | \phi_i \rangle_A | i \rangle_B, \\
    U_q | 0 \rangle_A | 0 \rangle_B = \sum_{i \in [n]} \sqrt{q_i} | \psi_i \rangle_A | i \rangle_B.
\end{align*}
The main obstacle in designing closeness testers is that the oracles for two distributions $p$ and $q$ may have different ``directions'' for the same element $i$, i.e., $| \phi_i \rangle \neq | \psi_i \rangle$.  One of the feasible approaches is to block-encode the distribution, e.g., Gily{\'e}n and Li~\cite{gilyen2019distributional}, wherein the probabilities would be encoded as the diagonal elements of some submatrix of the encoding unitary operator. 
However, this approach led to a more complex algorithm.

Define $U_{\mathit{copy}}$ as a unitary operation acting on the Hilbert space $\mathbb{C}^{n} \otimes \mathbb{C}^{n} $ working as $U_{\mathit{copy}}| i \rangle | 0 \rangle = | i \rangle | i \rangle $ for $i = 1, 2, \cdots, n$, which can be constructed by a cascade of CNOT gates.

\begin{lemma}\label{lemma:1}
Given a distribution $p$ over $[n]$ and the corresponding query-access oracle $U_p$, let $|0\rangle_A | 0 \rangle_B | 0 \rangle_C$ be the initial state where the Hilbert space $C$ has the same dimension as the Hilbert space $B$, define unitary $\widetilde{U}_p$ as
\begin{align}
\label{equ:wup}
    \widetilde{U}_p = {\rbra*{U_p^{\dagger}}}_{AB} {\rbra*{{U_{\mathit{copy}}}}}_{BC} \rbra*{U_p}_{AB},
\end{align}
whose circuit is shown in \cref{fig:close1}.
Let $\Pi = | 0 \rangle_A \langle 0 |_A \otimes | 0 \rangle_B \langle 0 |_B \otimes I_C$. Then
    \begin{align}
    \label{equ:up_tidle}
        \widetilde{U}_p |0\rangle_A | 0 \rangle_B | 0 \rangle_C = \sum_{i = 1}^n p_i | 0 \rangle_A | 0 \rangle_B | i \rangle_C + | 0^{\perp} \rangle,
    \end{align}
 and $ \Pi | 0^{\perp} \rangle = 0$.  
\end{lemma}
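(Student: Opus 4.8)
The plan is to compute the output state directly by tracking the three operations composing $\widetilde{U}_p$ and then extracting the amplitude on each basis state $|0\rangle_A|0\rangle_B|i\rangle_C$. First I would apply $(U_p)_{AB}$ to the initial state, obtaining $\sum_{j}\sqrt{p_j}|\phi_j\rangle_A|j\rangle_B|0\rangle_C$. Applying $(U_{\mathit{copy}})_{BC}$ next copies register $B$ into register $C$, yielding the intermediate state $\sum_{j}\sqrt{p_j}|\phi_j\rangle_A|j\rangle_B|j\rangle_C$. The final operation $(U_p^{\dagger})_{AB}$ acts only on registers $A$ and $B$, leaving $C$ untouched, so it suffices to understand its effect branch by branch in $C$.

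The key step is to compute the amplitude $\langle 0|_A\langle 0|_B\langle i|_C\,\widetilde{U}_p|0\rangle_A|0\rangle_B|0\rangle_C$. Since $\langle i|_C$ selects the term $j=i$ from register $C$ in the intermediate state, this reduces to $\sqrt{p_i}\,\langle 0|_A\langle 0|_B(U_p^{\dagger})_{AB}|\phi_i\rangle_A|i\rangle_B$. I would then rewrite the bra using the defining action of $U_p$, namely $\langle 0|_A\langle 0|_B(U_p^{\dagger})_{AB}=\big(U_p|0\rangle_A|0\rangle_B\big)^{\dagger}=\sum_k\sqrt{p_k}\,\langle\phi_k|_A\langle k|_B$. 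Substituting and evaluating the inner product $\sum_k\sqrt{p_i}\sqrt{p_k}\,\langle\phi_k|\phi_i\rangle\,\langle k|i\rangle$, the orthonormality condition $\langle\phi_k|\phi_i\rangle=\delta_{ki}$ from \cref{purified quantum query-access} collapses the sum to the single term $k=i$, giving amplitude exactly $p_i$.

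I expect the only subtle point to be the handling of $(U_p^{\dagger})_{AB}$ on a single branch $|\phi_i\rangle_A|i\rangle_B$ rather than on the full superposition: although $U_p^{\dagger}$ sends the entire state $\sum_i\sqrt{p_i}|\phi_i\rangle_A|i\rangle_B$ back to $|0\rangle_A|0\rangle_B$, its action on one individual branch is not simply proportional to $|0\rangle_A|0\rangle_B$ and carries weight elsewhere in the Hilbert space. The orthonormality of $\{|\phi_i\rangle\}$ is exactly what ensures that, after projecting onto $|0\rangle_A|0\rangle_B$, only the diagonal contribution survives; this is the place where the purification structure of the oracle is essential. Finally, having shown $\Pi\,\widetilde{U}_p|0\rangle_A|0\rangle_B|0\rangle_C=\sum_{i=1}^n p_i|0\rangle_A|0\rangle_B|i\rangle_C$, I would simply set $|0^{\perp}\rangle:=(I-\Pi)\,\widetilde{U}_p|0\rangle_A|0\rangle_B|0\rangle_C$, whence $\Pi|0^{\perp}\rangle=0$ holds by construction and \cref{equ:up_tidle} follows, completing the argument.
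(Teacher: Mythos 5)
Your proposal is correct and follows essentially the same route as the paper: both compute the matrix element $\langle 0|_A\langle 0|_B\langle i|_C\,\widetilde{U}_p|0\rangle_A|0\rangle_B|0\rangle_C$ by pushing $U_p$ and $U_{\mathit{copy}}$ onto the ket, moving $U_p^{\dagger}$ onto the bra as $\bigl(U_p|0\rangle_A|0\rangle_B\bigr)^{\dagger}$, and collapsing the double sum via the orthonormality $\langle\phi_k|\phi_i\rangle=\delta_{ki}$ to obtain the amplitude $p_i$. Your explicit remark on why the single-branch action of $U_p^{\dagger}$ is handled correctly by projecting onto $|0\rangle_A|0\rangle_B$ is a useful clarification that the paper leaves implicit.
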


\begin{figure}[htbp]
    \centering
    \scalebox{0.9}{
    \begin{quantikz}[column sep={1.1cm,between origins}, row sep={0.7cm,between origins}]
        \lstick{$\ket{0}_A$} & \qwbundle{} & \gate[5]{U_p} &  &&& \gate[5]{U_p^\dagger} & \\[0.3cm]
        \lstick[4]{$\ket{0}_B$} &&& \ctrl{4}\gategroup[8,steps=3,style={dashed, inner sep=6pt}]{$U_{\textit{copy}}$}  & & & & \\
        & \setwiretype{q} \wire[l][1]["\scalebox{1.5}{\(\vdots\)}"{below,pos=0.0}]{a} &&&\ctrl{4}& \wire[l][1]["\raisebox{-4ex}{\scalebox{2.0}{\(\cdots\)}}"{below,pos=0.5}]{a} && \\
        \setwiretype{n} &&&&&&& \\
        &&&&& \ctrl{4} && \\
        \lstick[4]{$\ket{0}_C$} & & & \targ{} &&&& \\
        & \setwiretype{q} \wire[l][1]["\scalebox{1.5}{\(\vdots\)}"{below,pos=0.0}]{a} &&& \targ{} &&& \\
        \setwiretype{n}&&&&&&& \\
        &&&&& \targ{} &&
    \end{quantikz}
    }
    \caption{The circuit for constructing the unitary $\widetilde{U}_p$ defined in \cref{equ:wup}. The circuit in the dashed box represents $U_{\mathit{copy}}$.}
    \label{fig:close1}
\end{figure}

{
In the approach of Gily{\'e}n and Li~\cite{gilyen2019distributional}, a key step involves creating a projected unitary encoding with singular values $\sqrt{p_i}$ and then applying the technique of QSVT. Since QSVT is designed for applying a polynomial function to the singular values, the initial step is to determine a (low-degree) polynomial approximation of the target function. The degree of polynomial approximation, which also corresponds to the query complexity of QSVT, typically introduces logarithmic factors to control the approximation error. This can be exemplified by Lemma 11 in~\cite{gilyen2019distributional}, shedding light on why the conclusions of~\cite{gilyen2019distributional} incorporate logarithmic factors.
}

In \cref{lemma:1}, we employ the purified query-access oracle in a completely different way, resulting in more concise algorithms and improved complexities. The unitary $\widetilde{U}_p$ can be interpreted as a block-encoding of a submatrix, with $[p_1, \cdots, p_n]^T$ as its first column, rather than having $\cbra{p_i}$ as singular values. This interpretation can be derived from \cref{equ:up_tidle}, which can be reformulated as:
\begin{align*}
    \rbra*{\bra{0}_A \bra{0}_B \otimes I_C} \widetilde{U}_p \rbra*{|0\rangle_A | 0 \rangle_B | 0 \rangle_C} = \sum_{i = 1}^n p_i | i \rangle_C.
\end{align*}

Although \cref{lemma:1} is similar to Lemma 45 of~\cite{gilyen2019quantum} in form, they are totally different since the unitary in Lemma 45 of~\cite{gilyen2019quantum} is ${\rbra*{U_p^{\dagger}}}_{AB} {\rbra*{{U_{\mathit{swap}}}}}_{BC} \rbra*{U_p}_{AB}$ where ${U_{\mathit{swap}}}$ works as $ {U_{\mathit{swap}}} \ket{i} \ket{j} = \ket{j}\ket{i}$ for $i, j \in \sbra{n}$. 

\begin{proof}[Proof of \cref{lemma:1}]
This can be seen by simple calculations.
For $k \in \{1, \cdots, n\}$, we have
    \begin{align*}
        & \langle 0 |_A \langle 0 |_B \langle k |_C \widetilde{U}_p | 0 \rangle_A | 0 \rangle_B | 0 \rangle_C \\
        ={}  & \langle 0 |_A \langle 0 |_B \langle k |_C {\rbra*{U_p^{\dagger}}}_{AB} {\rbra*{{U_{\mathit{copy}}}}}_{BC} \rbra*{U_p}_{AB} |0\rangle_A | 0 \rangle_B | 0 \rangle_C \\
        ={} &  \rbra*{\sum_{i=1}^n \sqrt{p_i} \langle \phi_i |_A \langle i |_B \langle k |_C } {\rbra*{{U_{\mathit{copy}}}}}_{BC} \rbra*{\sum_{j=1}^n \sqrt{p_j} | \phi_j \rangle_A | j \rangle_B} | 0 \rangle_C \\  
        ={} &  \rbra*{\sum_{i=1}^n \sqrt{p_i} \langle \phi_i |_A \langle i |_B \langle k |_C} \rbra*{\sum_{j=1}^n \sqrt{p_j} | \phi_j \rangle_A | j \rangle_B | j \rangle_C } \\
        ={} &  p_k.
    \end{align*}
\end{proof}

Now we are ready to give the main theorem of this section, i.e., \cref{thm:l2-tester}. The formal description of our (tolerant) $\ell^2$-closeness tester is given in \cref{algorithm1}. All that remains to be done is to analyze the correctness and complexity of the algorithm. 

We will firstly introduce a unitary $U$ that encodes the $\ell^2$-distance between distributions $p$ and $q$. Specifically, the unitary $U$ maps the initial state $| \psi_0 \rangle$ (defined later) to $$\sum_{i=1}^n \frac{p_i - q_i}{2} | 0 \rangle | 0 \rangle | i \rangle | 0 \rangle + | \cdots \rangle,$$ where $| \cdots \rangle$ denotes some unnormalized state that is perpendicular to the first part. The construction of the unitary $U$ can be viewed as a specific application of the technique of linear combination of unitary operators~\cite{LCU, Hamiltonian_Simulation}, which was initially developed for Hamiltonian simulations.

Subsequently, applying the amplitude estimation (see \cref{amplitude estimation}) with high enough accuracy suffices to discriminate whether $p$ and $q$ are close or far enough.

\begin{algorithm}
\caption{(Tolerant) $\ell^2$-closeness tester.}
\label{algorithm1}  
\begin{algorithmic}[1]
    \Require Purified quantum query-access oracles $U_p$ and $U_q$ for distributions $p$ and $q$, $\nu \in (0, 1]$ and $\varepsilon \in (0, 1)$.
    \Ensure \textbf{CLOSE} if $\| p - q \|_2 \leq (1-\nu)\varepsilon $, and \textbf{FAR} otherwise.

    \State Let $U = (H)_D \rbra{\rbra{\widetilde{U}_p}_{ABC} \otimes | 0 \rangle \langle 0 | + \rbra{\widetilde{U}_q}_{ABC} \otimes | 1 \rangle \langle 1 | } (HX)_D$;
    
    \State Let $\Pi = | 0 \rangle_A  \langle 0 |_A\otimes | 0 \rangle_B \langle 0 |_B\otimes I \otimes | 0 \rangle_D \langle 0 |_D$ and $t = 20\pi/\nu\varepsilon$;

    \State Let $\Delta'$ be the result of \alg{Amplitude\mbox{-}Estimation}$(U, \Pi, t)$;

    \If{$\Delta ' < (\frac{1}{4} - \frac{\nu}{8})\varepsilon^2$}
    \State output \textbf{CLOSE};
    \Else
    \State output \textbf{FAR};
    \EndIf

\end{algorithmic}
\end{algorithm}

\begin{theorem}[{\normalfont Restatement of \cref{thm:l2-tester}, Tolerant $\ell^2$-Closeness Tester}]
    For $\nu \in (0, 1]$ and $\varepsilon \in (0, 1)$, there is a quantum tester that can decide whether $\| p - q \|_2 \leq (1-\nu)\varepsilon $ or $\| p - q \|_2 \geq \varepsilon$, with probability at least $\frac{2}{3}$, using $O(\frac{1}{\nu\varepsilon})$ queries to the oracles of $p$ and $q$.
\end{theorem}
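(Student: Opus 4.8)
The plan is to analyze \cref{algorithm1} directly, in three stages: first show that the unitary $U$ encodes $\tfrac14\Abs{p-q}_2^2$ as the success amplitude of the projector $\Pi$; then verify that the choice $t = 20\pi/(\nu\varepsilon)$ gives amplitude estimation just enough accuracy to separate the two promised regimes across the threshold $(\tfrac14-\tfrac\nu8)\varepsilon^2$; and finally count queries. For the encoding, I would take the initial state $\ket{\psi_0}=\ket0_A\ket0_B\ket0_C\ket0_D$ and push it through $U$. The gate $(HX)_D$ turns the last register into $\tfrac1{\sqrt2}(\ket0_D-\ket1_D)$, the controlled operation applies $\widetilde U_p$ on the $\ket0_D$ branch and $\widetilde U_q$ on the $\ket1_D$ branch, and $(H)_D$ recombines them. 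Invoking \cref{lemma:1} to write $\widetilde U_p\ket0_A\ket0_B\ket0_C=\sum_i p_i\ket0_A\ket0_B\ket i_C+\ket{0^\perp_p}$ (and similarly for $q$), and using that $\Pi$ annihilates $\ket{0^\perp_p}$ and $\ket{0^\perp_q}$, the $\ket0_D$ branch yields
\[
  \Pi U\ket{\psi_0}=\frac12\sum_{i=1}^n (p_i-q_i)\,\ket0_A\ket0_B\ket i_C\ket0_D ,
\]
so the success probability is exactly $\mathsf p:=\Abs{\Pi U\ket{\psi_0}}^2=\tfrac14\Abs{p-q}_2^2$. Hence the output $\Delta'$ of \alg{Amplitude\mbox{-}Estimation}$(U,\Pi,t)$ is an estimate of $\mathsf p$, governed by \cref{amplitude estimation}.

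For the accuracy analysis I would translate the two promises into statements about $\mathsf p$. In the \textbf{CLOSE} case $\Abs{p-q}_2\le(1-\nu)\varepsilon$ gives $\mathsf p\le\tfrac14(1-\nu)^2\varepsilon^2\le\tfrac14(1-\nu)\varepsilon^2$ (using $(1-\nu)^2\le 1-\nu$ for $\nu\in[0,1]$), while in the \textbf{FAR} case $\Abs{p-q}_2\ge\varepsilon$ gives $\mathsf p\ge\tfrac14\varepsilon^2$. With $t=20\pi/(\nu\varepsilon)$ the error bound of \cref{amplitude estimation} becomes $\lvert\Delta'-\mathsf p\rvert\le \tfrac{\nu\varepsilon\sqrt{\mathsf p}}{10}+\tfrac{\nu^2\varepsilon^2}{400}$ with probability at least $8/\pi^2$. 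In the \textbf{CLOSE} case $\sqrt{\mathsf p}\le\tfrac12\varepsilon$, so this error is at most $\tfrac{21\nu\varepsilon^2}{400}<\tfrac\nu8\varepsilon^2$, placing $\Delta'$ strictly below the threshold $(\tfrac14-\tfrac\nu8)\varepsilon^2$. The \textbf{FAR} case is the delicate one and is where I expect the main obstacle to lie: there $\mathsf p$ may be as large as $\Theta(1)$, so the additive error term $\tfrac{\nu\varepsilon\sqrt{\mathsf p}}{10}$ is not uniformly small and cannot be bounded by $\tfrac\nu8\varepsilon^2$ outright. Instead I would argue directly that $\Delta'\ge \mathsf p-\tfrac{\nu\varepsilon\sqrt{\mathsf p}}{10}-\tfrac{\nu^2\varepsilon^2}{400}$ stays above threshold, by noting that the map $x\mapsto x^2-\tfrac{\nu\varepsilon}{10}x$ (with $x=\sqrt{\mathsf p}$) is increasing for $x\ge\tfrac12\varepsilon$ since $\tfrac{\nu}{20}<\tfrac12$, so its minimum over $\mathsf p\ge\tfrac14\varepsilon^2$ is attained at $\mathsf p=\tfrac14\varepsilon^2$, giving $\Delta'\ge\tfrac14\varepsilon^2-\tfrac{\nu\varepsilon^2}{20}-\tfrac{\nu^2\varepsilon^2}{400}\ge(\tfrac14-\tfrac\nu8)\varepsilon^2$.

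Finally, both cases succeed whenever the single run of amplitude estimation succeeds, which happens with probability at least $8/\pi^2>2/3$, matching the claimed confidence. For the query count I would observe that each invocation of $U$ (or $U^\dagger$, or the reflection $I-2\Pi$) calls $\widetilde U_p$ and $\widetilde U_q$ a constant number of times, and by \cref{equ:wup} each $\widetilde U_p$ costs two queries to $U_p$ (one $U_p$ and one $U_p^\dagger$), and likewise for $U_q$; hence a single query to $U$ costs $O(1)$ queries to the distribution oracles. Since amplitude estimation makes $t$ such calls, the total is $O(t)=O(1/(\nu\varepsilon))$, which is the stated bound.
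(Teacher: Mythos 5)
Your proposal is correct and follows essentially the same route as the paper: the same unitary $U$ and projector $\Pi$, the same identity $\Abs{\Pi U\ket{\psi_0}}^2=\tfrac14\Abs{p-q}_2^2$ via \cref{lemma:1}, and the same amplitude-estimation parameters and threshold. The only (cosmetic) difference is in the \textbf{FAR} case: the paper splits into $\Delta\le\varepsilon^2$ and $\Delta>\varepsilon^2$, whereas you handle the whole range at once by observing that $\mathsf p-\tfrac{\nu\varepsilon}{10}\sqrt{\mathsf p}$ is increasing in $\sqrt{\mathsf p}$ on the relevant domain, which is a slightly cleaner way to get the same bound.
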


\begin{figure}[htbp]
    \centering
    \scalebox{0.9}{
    \begin{quantikz}[column sep={1.1cm,between origins}, row sep={0.7cm,between origins}]
        \lstick{$\ket{0}$} & \gate{X} & \gate{H} & \ctrl{1} & \octrl{1} & \gate{H} & \\
        \lstick{$\ket{0}_{ABC}$} & \qwbundle{} & & \gate{\widetilde{U}_p} & \gate{\widetilde{U}_q} && 
    \end{quantikz}}
    \caption{The circuit for constructing the unitary $U$ defined in \cref{equ:u}.}
    \label{fig:close2}
\end{figure}
    
\begin{proof}
    Let the initial state be
    \begin{align*}
        | \psi_0 \rangle = | 0 \rangle_A | 0 \rangle_B | 0 \rangle_{C} | 0 \rangle_D,
    \end{align*}
     where the system $C$ has the same dimension as the system $B$, and the last system is a qubit (a two-dimensional Hilbert space).

    Define $\widetilde{U}_p = (U_p^{\dagger} \otimes I ) (I \otimes U_{\mathit{copy}}) (U_p \otimes I)$ (similar for $\widetilde{U}_q$) as in \cref{lemma:1}. Define 
    \begin{equation}
    \label{equ:u}
    \begin{aligned}
         U = (H)_D \left(\rbra*{\widetilde{U}_p}_{ABC} \otimes | 0 \rangle \langle 0 | +\rbra*{\widetilde{U}_q}_{ABC} \otimes | 1 \rangle \langle 1 | \right) (HX)_D,
    \end{aligned}
    \end{equation}
    where $H$ is the Hadamard gate, and $X$ is the Pauli-X gate. The circuit for constructing the unitary $U$ is shown in \cref{fig:close2}.
    
    We have
    \begin{align*}
        | \psi \rangle
        & = U | \psi_0 \rangle \\
        & = (H)_D \rbra*{\rbra*{\widetilde{U}_p}_{ABC} \otimes | 0 \rangle \langle 0 | + \rbra*{\widetilde{U}_q}_{ABC} \otimes | 1 \rangle \langle 1 | } (HX)_D | 0 \rangle_A | 0 \rangle_B | 0 \rangle_C | 0 \rangle_D \\
        & = (H)_D \rbra*{\rbra*{\widetilde{U}_p}_{ABC} \otimes | 0 \rangle \langle 0 | + \rbra*{\widetilde{U}_q}_{ABC} \otimes | 1 \rangle \langle 1 | } | 0 \rangle_A | 0 \rangle_B | 0 \rangle_C \rbra*{\frac{1}{\sqrt{2}} | 0 \rangle - \frac{1}{\sqrt{2}} | 1 \rangle}_D \\
        & = (H)_D \rbra*{\sum_{i=1}^n \frac{p_i}{\sqrt{2}} | 0 \rangle_A | 0 \rangle_B | i \rangle_C | 0 \rangle_D - \sum_{i=1}^n \frac{q_i}{\sqrt{2}} | 0 \rangle_A | 0 \rangle_B | i \rangle_C | 1 \rangle_D + | 0^{\perp} \rangle_{ABCD}} \\
        & = \sum_{i=1}^n \frac{p_i - q_i}{2} | 0 \rangle_A | 0 \rangle_B | i \rangle_C | 0 \rangle_D + \sum_{i=1}^n \frac{p_i + q_i}{2} | 0 \rangle_A | 0 \rangle_B | i \rangle_C | 1 \rangle_D + | 0^{\perp} \rangle_{ABCD},
    \end{align*}
    where $( | 0 \rangle \langle 0 | \otimes | 0 \rangle \langle 0 | \otimes I \otimes I) | 0^{\perp} \rangle = 0$, and the third equation is from Lemma \ref{lemma:1}.

    Define an orthogonal projector $\Pi$ as 
    \begin{align*}
        \Pi = | 0 \rangle_A  \langle 0 |_A\otimes | 0 \rangle_B \langle 0 |_B\otimes I \otimes | 0 \rangle_D \langle 0 |_D.
    \end{align*}
    
    Then, we have 
    \begin{align}
        \label{equ:pipsi}
        \| \Pi \psi \rangle \|^2 = \frac{1}{4} \sum_{i = 1}^n (p_i - q_i)^2 = \frac{\| p - q \|_2^2 }{4}.
    \end{align} 
    Let $\Delta = \frac{\| p - q \|_2^2 }{4} $, and $\Delta '$ be the estimate for $\Delta$ computed by \alg{Amplitude\mbox{-}Estimation}$(U, \Pi, t)$ with $t = \frac{20\pi}{\nu \varepsilon}$ as in \cref{algorithm1}.

    It remains to show that if $\| p - q \|_2 \leq (1-\nu) \varepsilon$ we have $\Delta' \leq \rbra*{\frac{1}{4} - \frac{\nu}{8}}\varepsilon^2$ with high probability, and if $\| p - q \|_2 \geq \varepsilon$, we have $\Delta' > \rbra*{\frac{1}{4} - \frac{\nu}{8}}\varepsilon^2$ with high probability. If $\| p - q \|_2 \leq (1-\nu) \varepsilon$, then $\Delta \leq \frac{\varepsilon^2 - \nu \varepsilon^2 }{4}$; by Theorem \ref{amplitude estimation}, with probability at least $\frac{8}{\pi^2}$, the estimate $\Delta '$ satisfies $| \Delta ' - \Delta| \leq \frac{2\pi \sqrt{\Delta}}{t} + \frac{\pi^2}{t^2} < \frac{\nu\varepsilon^2}{8}$, and thus $\Delta' \leq \rbra*{\frac{1}{4} - \frac{\nu}{8}}\varepsilon^2$. 
    If $\| p - q \|_2 \geq \varepsilon$, then $\Delta \geq \frac{ \varepsilon^2 }{4}$. We have the following cases:
    \begin{itemize}
        \item $\Delta \leq \varepsilon^2$: by Theorem \ref{amplitude estimation}, with probability at least $\frac{8}{\pi^2}$, the estimate $\Delta ' $ satisfies $| \Delta ' - \Delta | \leq \frac{2\pi \varepsilon}{t} + \frac{\pi^2}{t^2} < \frac{\nu\varepsilon^2}{8}$, and thus $\Delta' > \rbra*{\frac{1}{4} - \frac{\nu}{8}}\varepsilon^2$;

        \item $\Delta > \varepsilon^2$: by Theorem \ref{amplitude estimation}, with probability at least $\frac{8}{\pi^2}$, the estimate $\Delta ' $ satisfies $| \Delta ' - \Delta | \leq \frac{2\pi \sqrt{\Delta}}{t} + \frac{\pi^2}{t^2} < \frac{\Delta}{2}$, and thus $\Delta ' > \frac{\varepsilon^2}{2} > \rbra*{\frac{1}{4} - \frac{\nu}{8}}\varepsilon^2$. 
    \end{itemize}
    
    \textbf{Complexity.} The query complexity of constructing $U$ is $O(1)$, and it takes $O\rbra*{\frac{1}{\nu \varepsilon}}$ queries to $U$ to run amplitude estimation. Therefore, the overall query complexity of Algorithm \ref{algorithm1} is $O\rbra*{\frac{1}{\nu \varepsilon}}$.
\end{proof}

{
The proof of \cref{thm-k-uniform} actually gives an $\ell^2$-distance estimator. 

\begin{lemma}
    Suppose that $p$ and $q$ are probability distributions over a finite sample space. For $\varepsilon \in (0, 1)$, there is a quantum estimator that estimates $\Abs{p - q}_2$ to within additive error $\varepsilon$ with probability at least $\frac{2}{3}$, using $O(\frac{1}{\varepsilon^2})$ queries to the purified quantum query-access oracles of $p$ and $q$.
\end{lemma}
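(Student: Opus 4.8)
The plan is to reuse the amplitude-encoding unitary already built for the tolerant $\ell^2$-closeness tester and simply read the distance off the estimated amplitude, trading the decision step for a direct numerical estimate of $\Abs{p-q}_2$. Concretely, I would take the unitary $U$ of \cref{equ:u} together with the projector $\Pi$ of \cref{algorithm1}; by \cref{equ:pipsi} the success amplitude satisfies $\Abs{\Pi\, U\ket{\psi_0}}^2 = \Delta$ with $\Delta = \Abs{p-q}_2^2/4$, and $U$ costs only $O(1)$ queries to the oracles of $p$ and $q$. Thus a single call to \alg{Amplitude\mbox{-}Estimation}$(U,\Pi,t)$ returns an estimate $\widetilde\Delta \in [0,1]$ of $\Delta$, and I would output $\widetilde d = 2\sqrt{\widetilde\Delta}$ as the estimate of $\Abs{p-q}_2 = 2\sqrt\Delta$ (well-defined since $\widetilde\Delta \geq 0$).

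The quantitative heart of the argument is the choice of $t$. By \cref{amplitude estimation}, with probability at least $8/\pi^2$ the estimate obeys $\abs{\widetilde\Delta - \Delta} \leq 2\pi\sqrt{\Delta(1-\Delta)}/t + \pi^2/t^2$; bounding $\sqrt{\Delta(1-\Delta)} \leq 1/2$ this is at most $\pi/t + \pi^2/t^2$, which falls below $\varepsilon^2/4$ as soon as $t = \ceil{8\pi/\varepsilon^2}$. Hence it suffices to take $t = \Theta(1/\varepsilon^2)$.

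To convert accuracy on $\Delta$ into accuracy on $\sqrt\Delta$ I would deliberately avoid the naive first-order estimate $\abs{\sqrt{\widetilde\Delta}-\sqrt\Delta} \approx \abs{\widetilde\Delta-\Delta}/(2\sqrt\Delta)$, since it degenerates as $\Delta \to 0$, and instead invoke the uniform inequality $\abs{\sqrt a - \sqrt b} \leq \sqrt{\abs{a-b}}$, valid for all $a,b \geq 0$ (a one-line consequence of $(\sqrt a - \sqrt b)^2 \leq \abs{a-b}$). This yields $\abs{\widetilde d - \Abs{p-q}_2} = 2\abs{\sqrt{\widetilde\Delta}-\sqrt\Delta} \leq 2\sqrt{\abs{\widetilde\Delta-\Delta}} \leq 2\sqrt{\varepsilon^2/4} = \varepsilon$, and crucially the bound is oblivious to how small the true distance is. Since $8/\pi^2 > 2/3$, a single run already meets the required success probability and no amplification is needed.

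The complexity is then immediate: amplitude estimation makes $t = O(1/\varepsilon^2)$ uses of $U$, $U^\dagger$ and $I-2\Pi$, each of which costs $O(1)$ queries to the purified oracles of $p$ and $q$, for a total of $O(1/\varepsilon^2)$ queries. I expect the only genuinely delicate point to be the square-root conversion near $\Delta = 0$: it is precisely the need to control $\sqrt\Delta$ when the true distance is tiny that forces the additive error on $\Delta$ down to $O(\varepsilon^2)$, and this is what raises the cost from the $O(1/\varepsilon)$ of the decision version to $O(1/\varepsilon^2)$ for full estimation.
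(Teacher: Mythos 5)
Your proposal is correct and follows essentially the same route as the paper: both reuse the unitary $U$ of \cref{equ:u} and the projector $\Pi$, whose success probability equals $\Abs{p-q}_2^2/4$ by \cref{equ:pipsi}, and apply amplitude estimation with $t = \Theta(1/\varepsilon^2)$. The paper states this in one line without detail, so your explicit handling of the square-root conversion via $\abs{\sqrt{a}-\sqrt{b}} \leq \sqrt{\abs{a-b}}$ (which avoids the degeneracy at $\Delta = 0$) is a welcome and correct elaboration rather than a different approach.
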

\begin{proof}
    As seen from \cref{equ:pipsi}, the unitary $U$ defined in \cref{equ:u} encodes the (squared) $\ell^2$ distance between $p$ and $q$, which can be estimated to within additive error $\varepsilon$ by amplitude estimation, using  $O(\frac{1}{\varepsilon^2})$ queries to the oracles of $p$ and $q$.
\end{proof}
}

\subsection{Quantum Lower Bounds}

It was shown in \cref{corollary1} and \cref{corollary2} that the quantum query complexity for the $\ell^2$-closeness and $\ell^1$-closeness testing problem is $O\rbra{\frac{1}{\varepsilon}}$ and $O\rbra{\frac{\sqrt{n}}{\varepsilon}}$, respectively. In the following, we show that both testers achieve optimal dependence on $\varepsilon$. This is done by a reduction from the problem of distinguishing probability distributions.

Belovs~\cite{belovs2019quantum} considered a related problem: given two fixed probability distributions $p$ and $q$, and given a purified quantum query-access oracle for one of them,
the task is to determine which one, $p$ or $q$, the oracle encodes. He showed that the quantum query complexity of this distinguishing problem is $\Theta\rbra{1/d_{\text{H}}\rbra{p, q}}$, achieving a quadratic speedup compared to classical approaches, where $d_{\textup{H}}\rbra{p, q}$ is the Hellinger distance defined by
\begin{align*}
    d_{\textup{H}}\rbra{p, q} = \sqrt{\frac 1 2 \sum_{i \in \sbra{n}} \rbra*{\sqrt{p_i} - \sqrt{q_i}}^2}.
\end{align*}

\begin{theorem} [{\cite[Theorem 4]{belovs2019quantum}}]
\label{thm:hellinger}
    Suppose that $p$ and $q$ are probability distributions over $\sbra{n}$, and given a purified quantum query-access oracle for one of them, then the quantum query complexity for distinguishing between two probability distributions $p$ and $q$ is $\Omega\rbra{1/d_{\text{H}}\rbra{p, q}}$.
\end{theorem}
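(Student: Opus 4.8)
The plan is to prove this lower bound via the standard hybrid argument, where the crucial ingredient is a careful choice of the two hard oracle instances. Since we are proving a lower bound on the distinguishing problem, we are free to exhibit, for the fixed distributions $p$ and $q$, specific valid purified oracles $U_p$ and $U_q$ that are as operationally indistinguishable as possible; any correct algorithm must then also succeed on this worst-case pair. The goal is to show that these oracles can be chosen so that $\Abs{U_p - U_q} = O\rbra{d_{\textup{H}}\rbra{p,q}}$ in operator norm, after which the hybrid argument immediately yields the bound.

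First I would construct the hard instances. Take the ancilla register $A$ to also be $n$-dimensional and factor each oracle as a state preparation followed by a fixed copying step: set $U_p = \rbra{U_{\mathit{copy}}}_{AB}\rbra{W_p}_A$, where $W_p$ is any unitary with $W_p\ket{0} = \ket{\sqrt{p}} := \sum_{i \in \sbra{n}} \sqrt{p_i}\ket{i}$, and similarly $U_q = \rbra{U_{\mathit{copy}}}_{AB}\rbra{W_q}_A$. This satisfies \cref{purified quantum query-access} with $\ket{\phi_i} = \ket{i}$. Because the copying step $U_{\mathit{copy}}$ is common to both and the operator norm is unitarily invariant, we have $\Abs{U_p - U_q} = \Abs{W_p - W_q}$, so it remains to choose the state-preparation unitaries to minimize this quantity.

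The key numerical fact is that the overlap of the two target states equals the Bhattacharyya coefficient, $\langle\sqrt{p}\,|\sqrt{q}\,\rangle = \sum_{i \in \sbra{n}} \sqrt{p_i q_i} = 1 - d_{\textup{H}}\rbra{p,q}^2$, which is real and nonnegative. Writing $W_q = W_p G$, the unitary $G$ must map $\ket{0}$ to $W_p^{\dagger}\ket{\sqrt{q}}$, a vector making angle $\theta$ with $\ket{0}$ where $\cos\theta = 1 - d_{\textup{H}}\rbra{p,q}^2$. I would take $G$ to be the two-level rotation by $\theta$ in the plane spanned by $\ket{0}$ and the component of $W_p^{\dagger}\ket{\sqrt{q}}$ orthogonal to $\ket{0}$, acting as the identity elsewhere. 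Such a rotation has eigenvalues $e^{\pm i\theta}$ on that plane, so $\Abs{W_p - W_q} = \Abs{I - G} = 2\sin\rbra{\theta/2} = \sqrt{2}\,d_{\textup{H}}\rbra{p,q}$, using $2\sin^2\rbra{\theta/2} = 1 - \cos\theta = d_{\textup{H}}\rbra{p,q}^2$.

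Finally I would invoke the hybrid argument. An algorithm making $T$ queries to $U_p$ (or its inverse or controlled version, each of which inherits the same operator-norm difference $\sqrt{2}\,d_{\textup{H}}\rbra{p,q}$) produces final states $\ket{\psi_p}$ and $\ket{\psi_q}$ satisfying $\Abs{\ket{\psi_p} - \ket{\psi_q}} \le T\,\Abs{U_p - U_q} = \sqrt{2}\,T\,d_{\textup{H}}\rbra{p,q}$. Distinguishing the two cases with probability at least $2/3$ forces the final states to have constant trace distance, hence constant Euclidean distance $\Abs{\ket{\psi_p} - \ket{\psi_q}} = \Omega(1)$, and rearranging gives $T = \Omega\rbra{1/d_{\textup{H}}\rbra{p,q}}$. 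The main obstacle is controlling the \emph{full} operator norm $\Abs{U_p - U_q}$, not merely the difference on the fixed initial state $\ket{0}_A\ket{0}_B$: the hybrid argument feeds arbitrary intermediate states into the oracle, so the gain from choosing $\ket{\phi_i} = \ket{i}$ together with the minimal rotation $G$ is precisely what keeps this uniform norm as small as the Hellinger distance itself.
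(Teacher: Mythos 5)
Your proof is correct, and it takes a genuinely different route from the paper, which does not prove this statement itself but imports it as \cite[Theorem 4]{belovs2019quantum}; Belovs's proof goes through his adversary-bound framework for general input oracles (a $\gamma_2$-norm-style characterization), whereas yours is an elementary, self-contained hybrid argument. The one step that needs care is exactly the one you flag: for a lower bound you may pick the hardest \emph{valid} purifications, and with the choice $\ket{\phi_i} = \ket{i}$ plus the minimal two-level rotation $G$ you get $\Abs{U_p - U_q} = \Abs{I - G} = 2\sin\rbra{\theta/2} = \sqrt{2}\, d_{\textup{H}}\rbra{p,q}$, using $\braket{\sqrt{p}|\sqrt{q}} = 1 - d_{\textup{H}}\rbra{p,q}^2 \geq 0$; this is essential, since for adversarially chosen purifications $\Abs{U_p - U_q}$ can be as large as $2$ even when $d_{\textup{H}}$ is tiny, so bounding the difference only on $\ket{0}_A\ket{0}_B$ would not suffice. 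Your observations that inverses and controlled versions inherit the same operator-norm gap, and that constant distinguishing advantage forces $\Omega(1)$ Euclidean distance between the final pure states, are both standard and correctly applied, giving $T = \Omega\rbra{1/d_{\textup{H}}\rbra{p,q}}$. What the two approaches buy: your argument is shorter, constructive, and gives an explicit constant, and it fully suffices for the use the paper makes of the theorem (the reduction in the proof of the $\ell^\alpha$-closeness lower bound); Belovs's machinery is stronger in that it also yields the matching upper bound $O\rbra{1/d_{\textup{H}}\rbra{p,q}}$ (the theorem there is a tight $\Theta$ characterization) and applies uniformly across input-oracle models, which a hybrid argument alone does not provide.
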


\begin{theorem}[{\normalfont Restatement of \cref{lb:closeness}}]
    For $\alpha = \cbra{1, 2}$, the quantum query complexity of $\ell^\alpha$-closeness testing  is $\Omega(\frac{1}{\varepsilon})$.
\end{theorem}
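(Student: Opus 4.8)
The plan is to establish the lower bound $\Omega(1/\varepsilon)$ for $\ell^\alpha$-closeness testing by reducing from the distribution distinguishing problem of \cref{thm:hellinger}. Since any lower bound for $\ell^1$-closeness testing on instances that also happen to be $\ell^2$-far would transfer to both norms (and conversely the $\ell^2$ bound implies the $\ell^1$ bound only in one direction), the cleanest route is to exhibit, for each small $\varepsilon$, a concrete pair of fixed distributions $p$ and $q$ over $[n]$ such that (i) $\|p - q\|_\alpha \geq \varepsilon$, so that distinguishing the oracle of $p$ from that of $q$ is at least as hard as solving the closeness-testing instance $\{p, q\}$ versus $\{p, p\}$, and (ii) the Hellinger distance $d_{\mathrm{H}}(p, q)$ is $O(\varepsilon)$, so that \cref{thm:hellinger} forces $\Omega(1/\varepsilon)$ queries.

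First I would reduce closeness testing to distinguishing. Given an oracle that is promised to encode either $p$ or $q$ (the distinguishing instance), I can feed it together with a fixed, known oracle for $p$ into a hypothetical closeness tester: if the unknown oracle is $p$, the tester sees the pair $(p, p)$ and must answer ``close''; if it is $q$, the tester sees $(p, q)$ with $\|p - q\|_\alpha \geq \varepsilon$ and must answer ``far.'' Hence a closeness tester using $T$ queries yields a distinguisher using $O(T)$ queries, so any lower bound on distinguishing lower-bounds closeness testing.

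The heart of the argument is the explicit construction tying the $\ell^\alpha$-distance to the Hellinger distance. A natural candidate is a two-point (or constant-support) perturbation: take, say, $p = (\tfrac{1}{2} + c\varepsilon, \tfrac{1}{2} - c\varepsilon, 0, \dots, 0)$ and $q = (\tfrac{1}{2} - c\varepsilon, \tfrac{1}{2} + c\varepsilon, 0, \dots, 0)$ for a suitable constant $c$. For such instances $\|p - q\|_\alpha = \Theta(\varepsilon)$ for both $\alpha = 1$ and $\alpha = 2$ (the support size is constant, so the two norms agree up to constants), while a direct computation of $d_{\mathrm{H}}(p, q)$ via the expansion $\sqrt{1/2 \pm c\varepsilon} = \tfrac{1}{\sqrt 2}(1 \pm c\varepsilon + O(\varepsilon^2))$ gives $d_{\mathrm{H}}(p, q) = \Theta(\varepsilon)$ as well. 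Plugging into \cref{thm:hellinger} yields the claimed $\Omega(1/\varepsilon)$. I would verify the constants so that $\|p - q\|_\alpha \geq \varepsilon$ holds exactly (rescaling $c$ as needed) while $d_{\mathrm{H}}(p, q) \leq C\varepsilon$ for an absolute constant $C$.

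The main obstacle I anticipate is purely quantitative: ensuring the two distances scale at the \emph{same} rate in $\varepsilon$. Both $\|p - q\|_\alpha$ and $d_{\mathrm{H}}(p, q)$ must be $\Theta(\varepsilon)$ simultaneously, and this balance is exactly what a bounded-support perturbation delivers, since on constant support all $\ell^\alpha$-distances and the Hellinger distance are mutually comparable up to constants depending only on the support size. The only care needed is in the square-root expansion for the Hellinger distance to confirm there is no hidden cancellation or extra factor of $\sqrt{n}$; because the support is independent of $n$, no dimensional factor appears, and the same instance works uniformly for both $\alpha \in \{1, 2\}$, completing the proof.
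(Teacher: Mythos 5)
Your proposal is correct and follows essentially the same route as the paper: reduce from Belovs' distribution-distinguishing lower bound (\cref{thm:hellinger}) using a small perturbation for which both the $\ell^\alpha$-distance and the Hellinger distance are $\Theta(\varepsilon)$, and your two-point instance is exactly the paper's construction for $\alpha=2$ (the paper happens to use a full-support perturbation of the uniform distribution for $\alpha=1$, but your constant-support instance works for both norms just as well). Your explicit spelling-out of the reduction step, pairing the unknown oracle with a known oracle for $p$, is a detail the paper leaves implicit but is the intended argument.
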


\begin{proof}
    We reduce the distribution discrimination problem to the closeness testing problem. Let $n \geq 2$ be an even integer and $\varepsilon \in (0, 1)$. For $\alpha = 2$,  consider the case that $p_1 = p_2 = \frac{1}{2}, p_i = 0$ for $i > 2$ and $q_1 = \frac{1 - \varepsilon}{2}, q_2 = \frac{1 + \varepsilon}{2}, q_i = 0$ for $i > 2$.
    Then, the $\ell^2$-distance between $p$ and $q$ is 
    \begin{align*}
        \Abs{p-q}_2 & = \frac{\varepsilon}{\sqrt{2}}.
    \end{align*}
    And the Hellinger distance between $p$ and $q$ is
    \begin{align*}
        d_{\text{H}}\rbra{p, q} 
        & = \sqrt{\frac{1}{2}\sbra*{\rbra*{\sqrt{\frac{1-\varepsilon}{2}} - \sqrt{\frac{1}{2}}}^2 + \rbra*{\sqrt{\frac{1+\varepsilon}{2}} - \sqrt{\frac{1}{2}}}^2}} \\
        & = \sqrt{1 - \frac{\sqrt{1+\varepsilon}}{2} - \frac{\sqrt{1-\varepsilon}}{2}} \\
        & \leq \varepsilon.
    \end{align*} 
    where we have use the inequality $ 1 - \frac{\sqrt{1+\varepsilon}}{2} - \frac{\sqrt{1-\varepsilon}}{2} \leq \varepsilon^2 $ in the last inequality.
    By \cref{thm:hellinger}, we know that distinguishing between $p$ and $q$ requires $\Omega\rbra{1/\varepsilon}$ queries.     
    For $\alpha = 1$, consider the case that $p_i = \frac{1}{n}$ and $q_i = \frac{1 + (-1)^{i}\varepsilon}{n}$ for every $i \in [n]$. A similar calculation shows that
    \begin{align*}
        \Abs{p-q}_1 & = \varepsilon,
    \end{align*}
    and 
    \begin{align*}
        d_{\text{H}}\rbra{p, q} 
        & = \sqrt{\frac{1}{2}\sbra*{\frac{n}{2}\rbra*{\sqrt{\frac{1-\varepsilon}{n}} - \sqrt{\frac{1}{n}}}^2 + \frac{n}{2}\rbra*{\sqrt{\frac{1+\varepsilon}{n}} - \sqrt{\frac{1}{n}}}^2}} \\
        & = \sqrt{1 - \frac{\sqrt{1+\varepsilon}}{2} - \frac{\sqrt{1-\varepsilon}}{2}} \\
        & \leq \varepsilon.
    \end{align*} 
    A similar argument also yields an $\Omega\rbra{1/\varepsilon}$ lower bound. 
\end{proof}

\paragraph{Discussion}
In the classical setting, the sample complexity of $\ell^2$-closeness testing problem is $\Theta\rbra{\frac{\max\cbra{\| p \|_2, \| q \|_2}}{\varepsilon^2}}$~\cite{chan2014optimal}, where the dependence on $\| p \|_2$ and $\| q \|_2$ is important, since this refined complexity may be much lower than the trivial upper bound $O\rbra{\frac{1}{\varepsilon^2}}$ that is derived by simply setting $\max\cbra*{\| p \|_2, \| q \|_2} = O(1)$. 
By comparison, the query complexity (i.e., $\Theta\rbra{\frac{1}{\varepsilon}}$) of our quantum tester does not depend on $\| p \|_2$ or $\| q \|_2$. If we can introduce a dependency on them as the classical one, our quantum $\ell^2$-closeness tester may be further improved. 
And in certain cases (i.e., $\max\cbra{\| p \|_2, \| q \|_2} = \Omega(1)$), our quantum tester achieves a quadratic speedup over the best classical one.

\section{Quantum  Complexity of $k$-Wise Uniformity Testing}
\subsection{Quantum Upper Bound}

In this section, we explore the potential quantum speedup in testing whether a given distribution is $k$-wise uniform or far from being $k$-wise uniform under the metric of total variation distance. Recall that the sample space of the probability distribution considered here is $\{0, 1\}^n$. 

Let's first briefly introduce some concepts and notations. 
As in~\cite{o2018closeness}, for a distribution $p$, define its density function $\varphi\colon \{0, 1\}^n \rightarrow \mathbb{R}^{\geq 0}$ as 
\begin{align*}
    \varphi(x) = 2^np_x,\quad \forall {x} \in \{0, 1\}^n.
\end{align*}
In the following, we identify the distribution $p$ with its density function $\varphi$. Let ${x} \sim \varphi$ denote that ${x}$ is a random variable drawn from the associated distribution with density function $\varphi$ and let ${x \sim \{0, 1\}^n}$ denote that $x$ is drawn from $\{0, 1\}^n$ uniformly.

Next, we briefly introduce the basic concepts of Fourier analysis of Boolean functions~\cite{o2014analysis}. Given any function $f \colon \{0, 1\}^n \rightarrow \mathbb{R}$, for any index subset $S \subseteq [n]$, the Fourier coefficient of $f$ on $S$, denoted by $\hat{f}(S)$, is defined as follows:
\begin{align*}
    \hat{f}(S) = \mathop{\mathbb{E}}_{x \sim \{0, 1\}^n}[f(x)\chi_S(x)],
\end{align*}
where 
\begin{align*}
    \chi_S(x) = (-1)^{\sum_{i \in S}x_i}.
\end{align*}
Also, $f$ has a unique representation as a multilinear polynomial:
\begin{align*}
    f(x) = \sum_{S \subseteq [n]} \hat{f}(S) \chi_S(x).
\end{align*}

The following lemmas relate the property of $k$-wise uniformity to the Fourier coefficients of the density function. The first lemma is well-known, and the second lemma, due to O'Donnell and Zhao~\cite{o2018closeness}, improves the result from  Alon~\textit{et~al.}~\cite{alon2007testing} by a factor of $\rm{polylog}(n)$.

\begin{lemma}[\cite{alon2007testing, o2018closeness}]
\label{lemma:k-wise1}
    Given a probability distribution with density function $\varphi$ over $\{0, 1\}^n$, if $\varphi$ is a $k$-wise uniform distribution, then we have  $\hat{\varphi}(S) = 0$ for all $S \subseteq [n] $ satisfying $1 \leq |S| \leq k$.
\end{lemma}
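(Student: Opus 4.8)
The plan is to unfold the definition of the Fourier coefficient, recognize it as an expectation under the distribution $p$ itself, and then exploit $k$-wise uniformity to evaluate that expectation. First I would substitute $\varphi(x) = 2^n p_x$ into the definition $\hat{\varphi}(S) = \mathbb{E}_{x \sim \{0,1\}^n}[\varphi(x)\chi_S(x)]$. Since the outer expectation is over the uniform distribution, the factor $2^n$ cancels the uniform weight $1/2^n$, giving $\hat{\varphi}(S) = \sum_{x \in \{0,1\}^n} p_x \chi_S(x) = \mathbb{E}_{x \sim p}[\chi_S(x)]$. This identity is the crux of the argument: it converts the Fourier coefficient into a statistic of $p$ that involves only the coordinates in $S$, because $\chi_S(x) = (-1)^{\sum_{i \in S} x_i}$ depends on $x$ solely through the restriction $x_S$.

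Next I would invoke $k$-wise uniformity. By \cref{def:k-wise-uniform} the marginal of $x_S$ is uniform on $\{0,1\}^{|S|}$ whenever $|S| = k$; for $1 \le |S| < k$ the same holds, since the marginal on $S$ is obtained by further marginalizing a uniform marginal on any $k$-element superset of $S$ (using $n \ge k$), and marginalizing a uniform distribution yields a uniform distribution. Hence for every $S$ with $1 \le |S| \le k$ I may replace the expectation over $p$ by an expectation over the uniform distribution on the relevant coordinates, obtaining $\hat{\varphi}(S) = 2^{-|S|} \sum_{v \in \{0,1\}^{|S|}} (-1)^{\sum_i v_i}$.

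Finally I would evaluate this character sum, which factorizes as $\prod_{i=1}^{|S|}\bigl(\sum_{v_i \in \{0,1\}} (-1)^{v_i}\bigr) = \prod_{i=1}^{|S|}(1 - 1) = 0$, each factor vanishing precisely because $|S| \ge 1$. Therefore $\hat{\varphi}(S) = 0$, as claimed.

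There is no genuine obstacle here; the whole argument is a short computation, and the hard part is merely a matter of bookkeeping. The only two points deserving explicit care are the passage from marginals of size exactly $k$ to marginals of strictly smaller size, which should be justified by marginalization rather than assumed, and the observation that the character sum vanishes only because $S$ is nonempty (for $S = \emptyset$ one instead gets $\hat{\varphi}(\emptyset) = 1$, consistent with $\varphi$ being a density).
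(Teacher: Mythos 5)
Your proof is correct. The paper does not actually prove this lemma---it cites it as well-known from the references---but your argument is the standard one: rewriting $\hat{\varphi}(S)=\mathbb{E}_{x\sim p}[\chi_S(x)]$, noting $\chi_S$ depends only on $x_S$, reducing $1\le|S|<k$ to $|S|=k$ by marginalization, and observing that the character sum over a uniform marginal vanishes for nonempty $S$. The two points you flag as needing care (the marginalization step relative to \cref{def:k-wise-uniform}, which speaks only of subsets of size exactly $k$, and the role of $|S|\ge 1$) are exactly the right ones, so nothing is missing.
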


\begin{lemma}[{\cite[Theorem 1.1]{o2018closeness}}]
\label{lemma:k-wise2}
    Given a probability distribution with density function $\varphi$ over $\{0, 1\}^n$, if $\varphi$ is $\varepsilon$-far from any $k$-wise uniform distribution, then we have 
    \begin{align*}
    \sqrt{\sum_{S \subseteq [n]\colon 1 \leq |S| \leq k} \hat{\varphi}(S)^2} > \frac {\varepsilon }{e^k}.
 \end{align*} 
\end{lemma}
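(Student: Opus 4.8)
The plan is to establish the equivalent quantitative statement that every density $\varphi$ lies within total variation distance $e^k\sqrt{\sum_{1\le|S|\le k}\hat\varphi(S)^2}$ of some genuinely $k$-wise uniform distribution; the contrapositive then gives the lemma, since $\varepsilon$-farness forces this distance to be at least $\varepsilon$. Write $g := \sum_{1\le|S|\le k}\hat\varphi(S)\chi_S$ for the nonconstant low-degree part of $\varphi$, so that $\|g\|_2=\sqrt{\sum_{1\le|S|\le k}\hat\varphi(S)^2}$. The natural first attempt is the corrected function $\psi_0 := \varphi - g$: it satisfies $\hat\psi_0(\emptyset)=1$ and $\hat\psi_0(S)=0$ for all $1\le|S|\le k$, hence by the converse of \cref{lemma:k-wise1} (an elementary Fourier identity for the marginals) it is a $k$-wise uniform distribution as soon as it is nonnegative, and Cauchy--Schwarz gives $d_{\mathrm{TV}}(\varphi,\psi_0)=\tfrac12\,\mathbb{E}_{x}|g(x)|\le\tfrac12\|g\|_2$. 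Thus the only thing standing between us and an even stronger bound is the nonnegativity of $\psi_0$.

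The main obstacle is precisely this nonnegativity constraint, and it is genuinely delicate: although $\|g\|_2$ is the quantity we control, the pointwise negativity of $\psi_0$ is governed by $\|g\|_\infty$, which for a degree-$\le k$ function can exceed $\|g\|_2$ by a dimension-dependent factor (for instance $g=\sum_{|S|=k}\chi_S$ has $\|g\|_\infty/\|g\|_2\approx\binom{n}{k}^{1/2}$). Consequently one cannot simply add a small multiple of the uniform density to lift $\psi_0$ above zero without moving $\varphi$ by an amount that grows with $n$; this is the source of the $\mathrm{polylog}(n)$ overhead in the earlier bound of Alon~\textit{et~al.}~\cite{alon2007testing}, and repairing it cheaply is the technical heart of the lemma.

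To resolve it I would not try to cancel $g$ exactly, but instead construct a nonnegative density $\psi$ on $\{0,1\}^n$ whose degree-$1,\dots,k$ Fourier coefficients all vanish, by perturbing $\psi_0$ only with functions supported on Fourier levels strictly above $k$ (so that the pinned low-degree coefficients are undisturbed) and choosing the perturbation to restore nonnegativity while adding as little $\ell^1$ mass as possible. Bounding the necessary $\ell^1$ budget in terms of $\|g\|_2$ is where a level-$k$ estimate for low-degree functions enters, and carefully optimizing it yields the sharp constant $e^k$ rather than a cruder $\mathrm{poly}(k,\log n)$ factor. An equivalent, and perhaps cleaner, way to organize the same computation is through convex duality: since the set $P$ of $k$-wise uniform distributions is a polytope, $d_{\mathrm{TV}}(\varphi,P)=\tfrac12\max_{\|T\|_\infty\le1}\big(\langle\varphi,T\rangle-\max_{\psi\in P}\langle\psi,T\rangle\big)$ with $\langle f,h\rangle=\mathbb{E}_x f(x)h(x)$, and the structure of $P$---only the Fourier coefficients of levels $0$ through $k$ are pinned---means that the competitor one would like to use in the inner maximization is exactly $\psi_0$, whose infeasibility reappears here as the quantity one must bound.

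I expect the decisive difficulty to be exactly this passage from the infeasible correction $\psi_0$ to a bona fide nonnegative $k$-wise uniform distribution, and within it, the extraction of the tight constant $e^k$ from the level-$k$ estimate. Everything else---the reduction to the contrapositive, the Fourier characterization of $k$-wise uniformity, and the Cauchy--Schwarz bound on $\mathbb{E}|g|$---is routine.
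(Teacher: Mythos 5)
This lemma is not proved in the paper at all: it is imported verbatim as Theorem~1.1 of O'Donnell and Zhao~\cite{o2018closeness}, so the only meaningful benchmark is their proof. Your setup matches the standard framing and is correct as far as it goes: the contrapositive reduction, the observation that $\psi_0=\varphi-g$ has the right Fourier coefficients and satisfies $d_{\mathrm{TV}}(\varphi,\psi_0)\le\tfrac12\Abs{g}_2$ by Cauchy--Schwarz, the converse of \cref{lemma:k-wise1}, and the diagnosis that the whole difficulty is restoring nonnegativity at $\ell^1$ cost controlled by $\Abs{g}_2$ rather than $\Abs{g}_\infty$ (which is indeed where Alon~\textit{et~al.}~\cite{alon2007testing} lose large factors). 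But that diagnosis is where your argument stops: the step you label as ``the technical heart of the lemma'' is only announced, not carried out. You neither identify the ``level-$k$ estimate'' you plan to invoke nor explain how it bounds the $\ell^1$ budget of the repair, so the proposal is an accurate description of the problem rather than a proof of the lemma.

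Moreover, the primal repair you sketch faces a concrete obstruction. A perturbation supported strictly above Fourier level $k$ has mean zero, so if it lifts $\psi_0$ above zero on the bad set it must dip below zero elsewhere and can reintroduce negativity; what one actually needs is to mix in a nonnegative correction whose levels $1,\dots,k$ vanish, i.e.\ a scaled $k$-wise uniform density concentrated near the negative points, and the absence of such densities with small total mass is precisely the phenomenon you flagged. O'Donnell and Zhao sidestep this: they never construct the nearby $k$-wise uniform distribution explicitly. They pass to the LP dual of the distance to the $k$-wise uniform polytope (the formulation you mention only as an aside), which reduces the bound to a statement about a degree-$\le k$ multilinear polynomial with zero mean that is bounded on one side, and then apply the $(2,1)$-hypercontractive inequality $\Abs{p}_2\le e^{\deg p}\Abs{p}_1$ to that dual polynomial; since a one-sided bound plus zero mean forces $\Abs{p}_1=O(1)$, this is exactly where the constant $e^k$ comes from. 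Without either this duality-plus-hypercontractivity argument or a genuine substitute for it, the proposal has a gap at its decisive step.
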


By the above two lemmas, We can see that the quantity ${\sum_{S \subseteq [n]\colon 1 \leq |S| \leq k} \hat{\varphi}(S)^2}$ indicates whether $\varphi$  is $k$-wise uniform  or not. In the following lemma, we give a unitary operator $U$ encoding the quantity in the amplitude of some quantum state.
\begin{lemma}\label{lemma:alg2}
    Given a distribution $p$ with density function $\varphi$ over $\{0, 1\}^n$ and the corresponding oracle $U_p$, let the initial state be
    $
        \label{equ:init-state-k-wise}
        | \psi_1 \rangle = | 0 \rangle ^{\otimes n} |0 \rangle | 0 \rangle ^{\otimes n}
    $
    where the first register consists of $n$ qubits, the second and third registers form the workspace of $U_p$ and define $U$ as \cref{alg:encode-k-wise}, then
    \begin{align*}
        U |0 \rangle^{\otimes n}& | 0 \rangle \ket{0}^{\otimes n} = \frac{1}{\sqrt{M^n_k}} \sum_{S \subseteq [n]\colon 1 \leq |S| \leq k} \hat{\varphi}(S) | S  \rangle | 0 \rangle | 0 \rangle ^{\otimes n} + | 0^\perp \rangle,
    \end{align*}
    where $| 0^\perp \rangle$ is some unnormalized state such that $ \Pi | 0^\perp \rangle = 0$ with $ \Pi = I \otimes | 0 \rangle \langle 0 | \otimes | 0 \rangle\langle 0 |^{\otimes n} $. 
\end{lemma}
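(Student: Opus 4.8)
The plan is to construct $U$ by mirroring the ``sandwich'' structure of \cref{lemma:1}, replacing the copy operation $U_{\mathit{copy}}$ by a controlled phase that imprints the character $\chi_S$, and prepending a fixed state preparation on the first register. The starting observation is that the Fourier coefficient of the density function is simply an expectation under $p$: since $\varphi(x) = 2^n p_x$,
\begin{align*}
    \hat{\varphi}(S) = \mathop{\mathbb{E}}_{x \sim \{0,1\}^n}\sbra*{\varphi(x)\chi_S(x)} = \sum_{x \in \{0,1\}^n} p_x \chi_S(x).
\end{align*}
Thus it suffices to encode $\sum_x p_x \chi_S(x)$ as the amplitude on $\ket{S}\ket{0}\ket{0}^{\otimes n}$, uniformly over all $S$ with $1 \le |S| \le k$.

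Concretely, I would take $U = (U_p^\dagger)_{23}\, C_\chi\, (U_p)_{23}\, V_1$, where: (i) $V_1$ acts only on the first register to prepare the uniform superposition $\frac{1}{\sqrt{M^n_k}}\sum_{S\colon 1 \le |S| \le k}\ket{S}$, identifying $S$ with its indicator string, using no queries to $U_p$; (ii) $(U_p)_{23}$ denotes $U_p$ applied to the second and third registers, expanding them into $\sum_x \sqrt{p_x}\ket{\phi_x}\ket{x}$; (iii) $C_\chi = \prod_{i=1}^n \mathrm{CZ}_{1_i,3_i}$ applies a controlled-$Z$ between qubit $i$ of the $S$ register and qubit $i$ of the $x$ register, multiplying each branch by exactly $(-1)^{\sum_i S_i x_i} = \chi_S(x)$; and (iv) $(U_p^\dagger)_{23}$ folds the workspace back. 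Tracing through, the state before step (iv) is $\frac{1}{\sqrt{M^n_k}}\sum_S \ket{S}\sum_x \sqrt{p_x}\chi_S(x)\ket{\phi_x}\ket{x}$, and projecting the second and third registers onto $\ket{0}\ket{0}$ after $U_p^\dagger$ uses the identity $\bra{0}\bra{0}U_p^\dagger\ket{\phi_x}\ket{x} = \sqrt{p_x}$, which follows from the orthonormality of the $\ket{\phi_x}$ exactly as in \cref{lemma:1}. Hence the amplitude on $\ket{S}\ket{0}\ket{0}^{\otimes n}$ equals $\frac{1}{\sqrt{M^n_k}}\sum_x p_x\chi_S(x) = \hat{\varphi}(S)/\sqrt{M^n_k}$, which is the claimed expansion. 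Everything outside the $\ket{0}\ket{0}$ block of the second and third registers is collected into $\ket{0^\perp}$; since $\ket{0^\perp}$ is by definition the component of $U\ket{\psi_1}$ orthogonal to that block and $\Pi = I \otimes \ket{0}\bra{0}\otimes \ket{0}\bra{0}^{\otimes n}$, the relation $\Pi\ket{0^\perp} = 0$ holds automatically.

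The two amplitude computations above are one-line calculations once the character-phase trick is in place, so I expect the main obstacle to be step (i): preparing the uniform superposition over all subsets of size between $1$ and $k$ \emph{exactly}, since an inexact preparation would perturb the encoded coefficients and break the clean $\hat\varphi(S)/\sqrt{M^n_k}$ identity. I would handle this by first creating a weighted superposition over Hamming weights $w = 1,\dots,k$ with amplitudes proportional to $\sqrt{\binom{n}{w}}$, and then, conditioned on weight $w$, preparing a Dicke-type uniform superposition over the $\binom{n}{w}$ strings of weight $w$; for constant $k$ this is a fixed, query-free circuit of size polynomial in $n$, and its exactness is what makes the overall normalization $1/\sqrt{M^n_k}$ come out precisely. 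The remaining points — that $C_\chi$ is diagonal and realizes precisely $\chi_S$, and that the orthogonal complement is genuinely annihilated by $\Pi$ — are routine to verify.
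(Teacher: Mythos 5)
Your proposal is correct and follows essentially the same route as the paper: the same circuit $U = (I\otimes U_p^\dagger)\,C_\chi\,(V_k\otimes U_p)$ with controlled-$Z$ gates imprinting $\chi_S(x)$, and the same amplitude computation $\bra{S}\bra{0}\bra{0}^{\otimes n}U\ket{\psi_1} = \frac{1}{\sqrt{M^n_k}}\sum_x p_x\chi_S(x) = \hat\varphi(S)/\sqrt{M^n_k}$ via orthonormality of the $\ket{\phi_x}$ (the paper phrases this as $1-2p_S$, a cosmetic difference). Your discussion of realizing $V_k$ exactly via Dicke-state preparation matches the paper's remark citing B{\"a}rtschi and Eidenbenz, so there is nothing substantive to add.
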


\begin{algorithm}
    \caption{Definition of the operator $U$.}
    \label{alg:encode-k-wise}  
    \begin{algorithmic}[1]
    \State Perform $V_k \otimes U_p$, where $V_k$ is a unitary operation working as $$V_k | 0 \rangle^{\otimes n} = \frac{1}{\sqrt{M^n_k}} \sum_{S \subseteq [n]\colon 1 \leq |S| \leq k} | {S} \rangle;$$ \label{step1} 

    \For{$i = 1 \to n$} \label{step2}
    \State With the $i^{\mathit{th}}$ qubit of the first register as the control qubit and the $i^{\mathit{th}}$ qubit of the third register as the target qubit, perform the controlled-$Z$ gate; \label{step3}
    \EndFor

    \State Perform $(I \otimes U_p)^\dagger$; \label{step4}
    \end{algorithmic}
\end{algorithm}

\begin{figure}[htbp]
    \centering
    \scalebox{0.9}{
    \begin{quantikz}[column sep={1.1cm,between origins}, row sep={0.8cm,between origins}]
        \lstick[4]{$\ket{0}^{\otimes n}$} & & \gate[4]{V_k}& \ctrl{5} &&&& \\
        & \wire[l][1]["\scalebox{1.5}{\(\vdots\)}"{below,pos=0.0}]{a} &&& \ctrl{5}\wire[r][1]["\raisebox{-4ex}{\scalebox{2.0}{\(\cdots\)}}"{below,pos=0.5}]{a} &&& \\
        \setwiretype{n}&&&& &&& \\
        &&&&& \ctrl{5} && \\
        \lstick{$\ket{0}$} & \qwbundle{} & \gate[5]{U_p} &&&& \gate[5]{U_p^{\dagger}} & \\
        \lstick[4]{$\ket{0}^{\otimes n}$} &&& \gate{Z} &&&&\\
        & \wire[l][1]["\scalebox{1.5}{\(\vdots\)}"{below,pos=0.0}]{a} && & \gate{Z} &&&\\
        \setwiretype{n}&&&& &&& \\
        &&&&& \gate{Z} &&\\
    \end{quantikz}}
    \caption{The circuit for constructing the unitary $U$ defined in \cref{alg:encode-k-wise}.}
    \label{fig:k-wise}
\end{figure}    

{
The ciruit for \cref{alg:encode-k-wise} is shown in \cref{fig:k-wise}. In \cref{alg:encode-k-wise}, the unitary operator $V_k$ is used to prepare the state
\begin{align*}
    \frac{1}{\sqrt{M^n_k}} \sum_{S \subseteq [n]\colon 1 \leq |S| \leq k} | {S} \rangle,
\end{align*}
where $| {S} \rangle$ denotes an $n$-qubit state such that $| {S} _i \rangle = | 1 \rangle$ if and only if $i \in S$ for all $i \in [n]$. This is a special symmetric pure state, i.e., a superposition state of Dicke states~\cite{Bastin_2009}. A state is symmetric if it is invariant under permutation of the qubits. In the context of the universal gate set comprising CNOT and arbitrary single-qubit gates,  B{\"a}rtschi and Eidenbenz~\cite{bartschi2019deterministic} demonstrated that every symmetric pure $n$-qubit state can be efficiently prepared using a circuit of size $O(n^2)$ and depth $O(n)$. In particular, CNOT gates and arbitrary single-qubit $R_y(\theta)$ gates are sufficient for implementing $V_k$. Importantly, these upper bounds remain valid, even on spin chain architectures, where the qubits are laid out in a line,
and all CNOT gates must act only on adjacent (nearest-neighbor) qubits. This is a more realistic assumption for most Noisy Intermediate Scale Quantum (NISQ)~\cite{preskill2018quantum} devices than the standard all-to-all connectivity.
}

\begin{proof}[Proof of \cref{lemma:alg2}]
    Let's show how $U$ works step by step.
    After performing the operator $(V_k \otimes U_p)$, we have 
    \begin{align*}
        | \psi_2 \rangle 
        & = \rbra*{V_k \otimes U_p} | \psi_1 \rangle \\
        & = \rbra*{\frac{1}{\sqrt{M^n_k}} \sum_{\substack{S \subseteq [n]\colon\\ 1 \leq |S| \leq k}} |{S} \rangle} \rbra*{\sum_{x \in \{0, 1\}^n} \sqrt{p_x} | \phi_x \rangle | x \rangle}.
    \end{align*}

    After the for-iteration, we have 
    \begin{align*}
        | \psi_3 \rangle 
        = \frac{1}{\sqrt{M^n_k}} \sum_{S \subseteq [n]\colon 1 \leq |S| \leq k} \left( \sum_{\substack{x \in \{0, 1\}^n\colon \\ \chi_S(x) = 1}} \sqrt{p_x} | {S}  \rangle | \phi_x \rangle | x \rangle
        -  \sum_{\substack{x \in \{0, 1\}^n\colon \\ \chi_S(x) = -1}} \sqrt{p_x} | {S}  \rangle | \phi_x \rangle | x \rangle \right).
    \end{align*}

    Let $p_S = \Pr_{x \sim \{0, 1\}^n} [\chi_S(x) = -1]$, we have 
    \begin{align*}
        \hat{\varphi}(S) 
        & = \mathop{\mathbb{E}}_{x \sim \{0, 1\}^n} [\varphi(x)\chi_S(x)] \\
        & = \Pr_{x \sim \varphi} [\chi_S(x) = 1] - \Pr_{x \sim \varphi} [\chi_S(x) = -1] \\
        & = 1 - 2\Pr_{x \sim \varphi} [\chi_S(x) = -1] \\
        & = 1 - 2 p_S.
    \end{align*}

    To compute $| \psi_4\rangle = (I \otimes U_p)^\dagger \ket{\psi_3}$, for $\widetilde{S}\colon 1 \leq | \widetilde{S} | \leq k$, we have
    \begin{align*}
        \langle \widetilde{S}  | \langle 0 | \langle 0 |^{\otimes n} |\psi_4 \rangle 
        & = \langle \widetilde{S}|  \langle 0  | \langle 0 |^{\otimes n} \rbra*{I \otimes U_p}^\dagger| \psi_3 \rangle \\
        & = \rbra*{\sum_{x\in \cbra{0, 1}^n}\sqrt{p_x}\bra{\widetilde{S}}\bra{\phi_x}\bra{x}} | \psi_3 \rangle \\
        & = \frac{1 - 2p_{\widetilde{S}}}{\sqrt{M^n_k}} = \frac{\hat{\varphi}\rbra{{\widetilde{S}}}}{\sqrt{M^n_k}}.
    \end{align*}
    It's not difficult to see that for $\widetilde{S} \subseteq [n]$ such that $|{\widetilde{S}}| = 0$ or $| {\widetilde{S}} | > k $, we have $\langle {\widetilde{S}} | \langle 0 | \langle 0 |^{\otimes n} | \psi_4 \rangle = 0$.
  Thus we have 
    \begin{align*}
        | \psi_4 \rangle 
        = \frac{1}{\sqrt{M^n_k}} \sum_{S \subseteq [n] \colon 1 \leq |S| \leq k} \hat{\varphi}(S) | S  \rangle | 0 \rangle | 0 \rangle ^{\otimes n} + | 0^\perp \rangle,
    \end{align*}
    where $| 0^\perp \rangle$ is some unnormalized state such that $ \Pi | 0^\perp \rangle = 0$ with $ \Pi = I \otimes | 0 \rangle \langle 0 | \otimes | 0 \rangle\langle 0 |^{\otimes n} $. 
\end{proof}

\begin{algorithm}[htb]
\caption{Quantum $k$-wise uniformity tester.}
\label{algorithm2}  
\begin{algorithmic}[1]
    \Require {Purified quantum query-access oracle $U_p$ for the distribution $p$ over $\cbra{0, 1}^n$; $\varepsilon \in (0, 1)$.}
    \Ensure {\textbf{YES} if $p$ is $k$-wise uniform, or \textbf{NO} if $p$ is $\varepsilon$-far from any $k$-wise uniform distribution.}

    \State Prepare the initial state $| \psi_1 \rangle = | 0 \rangle ^{\otimes n} | 0 \rangle | 0 \rangle ^{\otimes n} $;

    \State Let $U$ defined as \cref{alg:encode-k-wise};

    \State Let $ \Pi = I \otimes | 0 \rangle \langle 0 | \otimes | 0 \rangle\langle 0 |^{\otimes n} $;

    \State Output the result of \alg{Zero\mbox{-}Tester}$(U, \Pi, \frac{\varepsilon^2}{e^{2k}M^n_k})$ (\cref{corollary3});
\end{algorithmic}
\end{algorithm}

Now, we are going to prove another main result of this article, that is, Theorem \ref{thm-k-uniform}.

\begin{theorem}[{\normalfont Restatement of \cref{thm-k-uniform}}]
    Let $k \geq 1$ be a constant, there is a quantum tester for $k$-wise uniformity testing with query complexity $O\rbra{\frac{\sqrt{n^k}}{\varepsilon}}$. 
\end{theorem}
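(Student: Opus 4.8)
The plan is to reduce the entire test to a single application of the \alg{Zero\mbox{-}Tester} of \cref{corollary3} to the unitary $U$ built in \cref{lemma:alg2}. First I would record that, by \cref{lemma:alg2}, the success amplitude of the projector $\Pi = I \otimes |0\rangle\langle 0| \otimes |0\rangle\langle 0|^{\otimes n}$ on the state $U|\psi_1\rangle$ equals
\[
    p := \Abs{\Pi U |\psi_1\rangle}^2 = \frac{1}{M^n_k} \sum_{S \subseteq [n]\colon 1 \le |S| \le k} \hat{\varphi}(S)^2,
\]
so that this single quantity already captures the property we want to test.

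Second, I would invoke the two Fourier-analytic lemmas to separate the two promise cases through $p$. In the yes-case, \cref{lemma:k-wise1} gives $\hat{\varphi}(S) = 0$ for every $S$ with $1 \le |S| \le k$, hence $p = 0$ exactly. In the no-case, \cref{lemma:k-wise2} gives $\sum_{S} \hat{\varphi}(S)^2 > \varepsilon^2/e^{2k}$, hence $p > \frac{\varepsilon^2}{e^{2k} M^n_k} =: \delta$. The crucial feature here is that the test is one-sided: the yes-case yields $p = 0$ identically rather than merely a small value, which is precisely the hypothesis the \alg{Zero\mbox{-}Tester} needs and lets me avoid a fine-grained amplitude estimate.

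Third, running \alg{Zero\mbox{-}Tester}$(U, \Pi, \delta)$ as in \cref{algorithm2} distinguishes $p = 0$ from $p > \delta$ with probability at least $2/3$ using $O(1/\sqrt{\delta}) = O(e^k \sqrt{M^n_k}/\varepsilon)$ queries to $U$. To turn this into the claimed bound I would note that each invocation of $U$ costs only $O(1)$ queries to $U_p$ (the circuit of \cref{alg:encode-k-wise} uses $U_p$ and $U_p^\dagger$ once each, while $V_k$ and the $n$ controlled-$Z$ gates are oracle-free), and that for constant $k$ we have $e^k = O(1)$ and $M^n_k = \sum_{i=1}^k \binom{n}{i} = O(n^k)$, so $\sqrt{M^n_k} = O(\sqrt{n^k})$. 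Combining these yields the overall query complexity $O(\sqrt{n^k}/\varepsilon)$.

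I do not expect a genuine obstacle in this final assembly, since the technical weight lives in \cref{lemma:alg2} (the oracle construction) and in \cref{lemma:k-wise2} (the robust Fourier bound), both already available. The only point demanding care is the choice of threshold $\delta$: it must be set to the exact no-case lower bound $\varepsilon^2/(e^{2k}M^n_k)$ so that the $O(1/\sqrt{\delta})$ cost of the zero tester matches the target, and one must confirm that the $1/M^n_k$ normalization from \cref{lemma:alg2} and the $1/e^{2k}$ factor from \cref{lemma:k-wise2} are tracked consistently through to the query count.
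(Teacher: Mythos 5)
Your proposal is correct and follows essentially the same route as the paper: encode $\frac{1}{M^n_k}\sum_{S\colon 1\le|S|\le k}\hat{\varphi}(S)^2$ as the squared projected amplitude of $U|\psi_1\rangle$ via \cref{lemma:alg2}, use \cref{lemma:k-wise1} and \cref{lemma:k-wise2} to get the one-sided gap between $0$ and $\varepsilon^2/(e^{2k}M^n_k)$, and apply \alg{Zero\mbox{-}Tester} with exactly that threshold, yielding $O(\sqrt{M^n_k}/\varepsilon)=O(\sqrt{n^k}/\varepsilon)$ queries for constant $k$. No gaps; your accounting of the $1/M^n_k$ normalization and the $e^{2k}$ factor matches the paper's.
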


\begin{proof}
    We propose \cref{algorithm2} for $k$-wise uniformity testing. Note that $$\| \Pi U | \psi_1 \rangle \| ^2 = \frac{1}{M^n_k}\sum_{S \subseteq [n] \colon 1 \leq |S| \leq k} \hat{\varphi}(S)^2, $$ which is exactly the quantity we want to estimate. Again let $\Delta = \| \Pi U | \psi_1 \rangle \| ^2$. By \cref{lemma:k-wise1}, if $p$ is $k$-wise uniform, we have $\Delta = 0$. By \cref{lemma:k-wise2}, if $p$ is $\varepsilon$-far from any $k$-wise uniform distribution, we have $\Delta > \frac{\varepsilon^2}{e^{2k}M^n_k}$.  Therefore, running  \alg{Zero\mbox{-}Tester} in \cref{corollary3} with $\varepsilon \leftarrow \frac{\varepsilon^2}{e^{2k}M^n_k}$ is enough to determine whether $p$ is $k$-wise uniform or $\varepsilon$-far from any $k$-wise uniform distribution. The query complexity is $O\rbra{\frac{\sqrt{n^k}}{\varepsilon}}$ by noting that $M^n_k \leq n^k$, which completes the proof.
\end{proof}

We would like to point out an interesting difference between our quantum algorithm and the best classical one~\cite{o2018closeness}. It can be seen from the above proof that the analysis of our quantum algorithm is intuitive and concise, whereas the classical analysis is much more involved. In \cite{o2018closeness}, a proper random variable whose expectation corresponds to the quantity we want to estimate is constructed and in order to apply Chebyshev's inequality, estimating the variance of random variables is inevitable, which may require a lot of computation.

\subsection{Quantum Lower Bound}
In this section, we show a lower bound for the $k$-wise uniform testing problem. We need the following lemma from  Alon~\textit{et~al.}~\cite{alon2007testing}, which states that if $p$ is a uniform distribution over a  randomly sampled subset of $\{0, 1\}^n$, up to size $\frac{M^n_k}{n\varepsilon^2}$, then $p$ is $O(\varepsilon)$-far from any $k$-wise uniform distribution with high probability.

Let $\mathcal{Q} = \cbra{a_1, a_2, \dots, a_m}$ be a multiset of cardinality $m$, the uniform distribution over $\mathcal{Q}$, denoted by $\mathop{\mathit{Unif}}_{{\mathcal{Q}}}$, is defined as $$\Pr_{x \sim \mathop{\mathit{Unif}}_{{\mathcal{Q}}}}\sbra{x = y} = \frac{\abs{\cbra{i \in \sbra{m}\colon a_i = y}}}{m}$$ for every $y \in \mathcal{Q}$.

\begin{lemma} [Random Distribution Lemma, {\cite[Lemma 3.6]{alon2007testing}}]\label{lemma:random-subset}
    Let $k > 2$ be a constant and $Q = \frac{M^n_k}{n\varepsilon^2} < 2^{n^{1/3}}$, with $0 < \varepsilon \leq 1$. Let $\mathcal{Q}$ be a random multiset consisted of ${Q}$ strings that are sampled uniformly at random from $\{0, 1\}^n$, then distribution $\mathop{\mathit{Unif}}_{{\mathcal{Q}}}$ is at least $0.228\varepsilon$-far from any $k$-wise uniform distribution, with probability $99\%$ for sufficiently large $n$ and sufficient small $\varepsilon$.
\end{lemma}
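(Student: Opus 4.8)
The plan is to analyze the empirical Fourier coefficients of $\mathop{\mathit{Unif}}_{\mathcal{Q}}$ and show, by a first- and second-moment argument, that its Fourier weight on levels $1$ through $k$ is large with high probability; a dual (test-function) argument then converts this weight into a total-variation lower bound that holds \emph{uniformly} over all $k$-wise uniform distributions. Write $\mathcal{Q} = \{a_1, \dots, a_Q\}$ and let $\varphi$ be the density of $\mathop{\mathit{Unif}}_{\mathcal{Q}}$. The starting identity is
\[
    \hat{\varphi}(S) = \mathop{\mathbb{E}}_{x \sim \varphi}[\chi_S(x)] = \frac{1}{Q}\sum_{i=1}^{Q}\chi_S(a_i),
\]
so that for each nonempty $S$ the coefficient $\hat{\varphi}(S)$ is an empirical average of i.i.d. Rademacher variables $\chi_S(a_1), \dots, \chi_S(a_Q)$. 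Hence $\mathbb{E}[\hat{\varphi}(S)] = 0$ and $\mathbb{E}[\hat{\varphi}(S)^2] = 1/Q$ for every $S \neq \emptyset$, and with $W := \sum_{S\colon 1 \le |S| \le k}\hat{\varphi}(S)^2$ we get the clean first moment $\mathbb{E}[W] = M^n_k / Q = n\varepsilon^2$.

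Next I would show $W$ concentrates around this mean. Distinct characters are pairwise uncorrelated, since $\mathbb{E}[\chi_S(a_i)\chi_{S'}(a_i)] = \mathbb{E}[\chi_{S \triangle S'}(a_i)] = 0$ for $S \neq S'$, so the second moment of $W$ reduces to counting quadruples of sampled points and index sets. The hypothesis $Q < 2^{n^{1/3}}$ (with $n$ large) is precisely what forces the resulting collision/correlation terms to be negligible — in particular it guarantees, via a birthday bound, that the $a_i$ are distinct and free of small linear dependencies — yielding $\mathrm{Var}[W] = o(\mathbb{E}[W]^2)$ and thus, by Chebyshev, $W \ge (1-o(1))\,n\varepsilon^2$ with probability at least $99\%$.

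The deterministic conversion from Fourier weight to distance is the converse direction to \cref{lemma:k-wise2} (weight implies far), which is the \emph{easy} half. For any $k$-wise uniform $q$ with density $\psi$, \cref{lemma:k-wise1} gives $\hat{\psi}(S) = 0$ for all $1 \le |S| \le k$, so the low-degree test function $g = \sum_{1 \le |S| \le k}\hat{\varphi}(S)\chi_S$ satisfies $\mathbb{E}_{x \sim \varphi}[g] - \mathbb{E}_{x \sim q}[g] = \sum_{1 \le |S| \le k}\hat{\varphi}(S)^2 = W$. Using $d_{\text{TV}}(\mathop{\mathit{Unif}}_{\mathcal{Q}}, q) \ge \tfrac{1}{2}\,|\mathbb{E}_{\varphi}[g] - \mathbb{E}_{q}[g]|/\|g\|_\infty$ with the normalized witness $g/\|g\|_\infty$, this yields, simultaneously for every $k$-wise uniform $q$,
\[
    d_{\text{TV}}(\mathop{\mathit{Unif}}_{\mathcal{Q}}, q) \ge \frac{W}{2\|g\|_\infty}.
\]
Combined with $W \ge (1-o(1))n\varepsilon^2$, the whole statement reduces to an upper bound of the form $\|g\|_\infty \le (1+o(1))\sqrt{2\ln 2}\,\varepsilon n$, holding with high probability.

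I expect this last estimate to be the main obstacle: it is the supremum of a \emph{random} degree-$\le k$ polynomial over all $2^n$ points. Writing $g(x) = \tfrac{1}{Q}\sum_i T(a_i \oplus x)$ with $T(y) = \sum_{1 \le |S| \le k}\chi_S(y)$, for each fixed $x$ the value $g(x)$ is an average of $Q$ independent mean-zero contributions of typical scale $\sqrt{M^n_k}$, hence sub-Gaussian with variance proxy $\approx M^n_k/Q = n\varepsilon^2$; a union bound over $2^n$ points then requires the threshold to exceed $\varepsilon n\sqrt{2\ln 2}$. Two points are delicate: (i) the naive hypercontractive bound $\|g\|_\infty \lesssim n^{k/2}\sqrt{W}$ is far too weak, so one must genuinely exploit cancellation in the random coefficients rather than bound worst-case; and (ii) the spikes of $T$ near the diagonal — points $x$ within small Hamming distance of some $a_i$, where $T$ is of order $M^n_k$ instead of $\sqrt{M^n_k}$ — must be treated separately, which is where ``$\varepsilon$ sufficiently small'' and ``$n$ sufficiently large'' are consumed. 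Carrying this through, the accumulated constant losses from the concentration of $W$ and from the union bound for $\|g\|_\infty$ degrade the clean heuristic factor down to the stated $0.228$.
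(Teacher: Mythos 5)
You should first be aware that the paper contains no proof of this statement to compare against: \cref{lemma:random-subset} is imported verbatim from Alon~\textit{et~al.} (cited as Lemma 3.6 of \cite{alon2007testing}), so your attempt can only be judged on its own merits and against that external source. On those terms, your skeleton is the natural (and essentially the standard) one, and its easy parts are correct: $\hat{\varphi}(S) = \frac{1}{Q}\sum_i \chi_S(a_i)$ is an average of i.i.d.\ Rademachers for each $S \neq \emptyset$, so $\mathbb{E}[W] = M^n_k/Q = n\varepsilon^2$ for $W = \sum_{1 \le |S| \le k}\hat{\varphi}(S)^2$; the dual witness $g = \sum_{1\le|S|\le k}\hat{\varphi}(S)\chi_S$ satisfies $\mathbb{E}_{q}[g] = 0$ \emph{simultaneously} for every $k$-wise uniform $q$ by \cref{lemma:k-wise1}, giving $d_{\text{TV}}(\mathop{\mathit{Unif}}_{\mathcal{Q}}, q) \ge W/(2\|g\|_\infty)$ uniformly, which is the right way to beat the continuum of possible $q$'s. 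One quibble on the second-moment step: concentration of $W$ needs neither distinctness of the $a_i$ nor the hypothesis $Q < 2^{n^{1/3}}$. Writing $W = M^n_k/Q + Q^{-2}\sum_{i\neq j} T(a_i \oplus a_j)$ with $T(y) = \sum_{1\le|S|\le k}\chi_S(y)$, all cross-correlations between overlapping pairs vanish by character orthogonality, so $\mathrm{Var}(W) = O(M^n_k/Q^2) = o(\mathbb{E}[W]^2)$ outright; the birthday argument you invoke is not where that hypothesis is consumed.

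The genuine gap is exactly where you flag it, but it is a gap and not merely a delicacy: the estimate $\|g\|_\infty \le (1+o(1))\sqrt{2\ln 2}\,\varepsilon n$ is asserted, never established, and your ``sub-Gaussian with variance proxy $n\varepsilon^2$'' heuristic is false as stated. The kernel $T$ depends only on Hamming weight (a sum of Krawtchouk polynomials), has maximum $M^n_k \approx n^k$ against standard deviation $\approx n^{k/2}$, and tail only $\exp\rbra{-\Theta(t^{2/k}/n)}$; hence for fixed $x$ no Hoeffding-type bound applies, and the off-the-shelf fix via hypercontractive moments gives a per-point failure probability no better than $\exp\rbra{-\Theta(n^{1/(k+1)})}$ --- hopeless against a union bound over $2^n$ points. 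A correct argument must truncate $T$ at $\tau \approx M^n_k/(\varepsilon n)$ and run a Bernstein-type bound on the bounded part (recovering an exponent $\Theta(C^2 n)$, which is where the constant ultimately yielding $0.228$ must be extracted), while separately controlling the rare large terms; it is precisely there that the hypotheses you left idle enter: $\Pr[\abs{T} > \tau] \approx \exp\rbra{-\Theta(n^{1-2/k})}$, and $1 - 2/k \ge 1/3$ exactly when $k \ge 3$, which is why the lemma requires $k > 2$ and $Q < 2^{n^{1/3}}$, while ``$\varepsilon$ sufficiently small'' keeps each spike's contribution $M^n_k/Q = n\varepsilon^2$ below the threshold $\Theta(\varepsilon n)$. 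So your proposal reconstructs the right strategy and correctly locates the crux, but as submitted the decisive sup-norm estimate --- and with it the stated constant $0.228$ --- remains unproven; it is an outline whose hard half still has to be carried out (or quoted from the proof of Lemma 3.6 in \cite{alon2007testing}).
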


We will reduce the collision problem to the $k$-wise uniformity testing problem. Given a function $f\colon [n] \rightarrow [n]$, the collision problem requires distinguishing whether $f$ is one-to-one or $r$-to-one, under the promise that one of these two cases holds. The lower bound for the collision problem is shown as follows:

\begin{theorem}[Collision Problem~\cite{aaronson2004quantum, ambainis2005polynomial, kutin2005quantum}]\label{the:collision}
    Let $n > 0$ and $r \geq 2$ be integers such that $r \mid n$, and let $f\colon [n] \rightarrow [n]$ be a function given as an oracle $O_f\colon |i\rangle | b \rangle \rightarrow | i \rangle | b + f(i) \mod n \rangle$, and promise that it is either one-to-one or $r$-to-one. Then any quantum algorithm for distinguishing these two cases must make $\Omega((n/r)^{1/3}))$ queries to $O_f$.
\end{theorem}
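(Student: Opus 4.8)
The plan is to prove this via the \emph{polynomial method} of Beals, Buhrman, Cleve, Mosca and de Wolf, following Aaronson and Shi~\cite{aaronson2004quantum} together with Kutin's refinement~\cite{kutin2005quantum} that removes the range restriction and hence applies to the equal domain-range setting $f\colon [n] \to [n]$ stated here. The first step is to encode the input $f$ by Boolean indicator variables $y_{i,j} \in \cbra{0,1}$ with $y_{i,j} = 1$ iff $f(i) = j$, subject to $\sum_{j} y_{i,j} = 1$. Since each query $O_f$ acts by a unitary whose matrix entries are linear (degree one) in these variables, the standard argument shows that after $T$ queries the acceptance probability of any quantum algorithm is a real polynomial $P(y)$ of degree at most $2T$ in the $y_{i,j}$. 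It therefore suffices to lower bound the degree of any polynomial that separates the one-to-one inputs from the $r$-to-one inputs.

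The second step is \textbf{symmetrization}. Because the collision promise is invariant under permuting the $n$ domain elements and the $n$ range elements, I would average $P$ over the induced action of $S_n \times S_n$ and evaluate it on a one-parameter family of input distributions: for each integer $g$ consider a distribution $\mathcal{D}_g$ supported on (essentially) $g$-to-one functions, so that $g = 1$ is the one-to-one case and $g = r$ is the $r$-to-one case. By Minsky--Papert symmetrization the averaged acceptance probability $Q(g) = \mathbb{E}_{f \sim \mathcal{D}_g}[P(f)]$ collapses to a \emph{univariate} polynomial in $g$ whose degree is still at most $2T$. This $Q$ inherits three properties: $0 \le Q(g) \le 1$ for every admissible integer $g \in \cbra{1, \dots, \Theta(n)}$ (it is an average of probabilities), and $\abs{Q(1) - Q(r)} \ge 1/3$ (the algorithm distinguishes the two cases). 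The delicate point, which is exactly what Kutin's variant fixes, is building $\mathcal{D}_g$ so that $g$ may range over all integers rather than merely divisors of $n$, and so that the construction still works when the range is no larger than the domain.

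The third and decisive step is an \textbf{approximation-theoretic degree lower bound}. The subtlety is that $Q$ is controlled only at the integer grid points, not on the whole interval: a polynomial bounded on $\cbra{1,\dots,N}$ can be enormous between grid points once its degree exceeds $\sqrt{N}$, so the naive Markov--Bernstein inequalities (which would spuriously yield a $\sqrt{n/r}$ bound) do not apply. Using the Coppersmith--Rivlin / Ehlich--Zeller estimates for polynomials bounded on an integer grid, one shows that a degree-$d$ polynomial staying in $[0,1]$ on $\cbra{1,\dots,\Theta(n)}$ cannot change by a constant between $g=1$ and $g=r$ unless $d^3 = \Omega(n/r)$, that is, $d = \Omega((n/r)^{1/3})$. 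Combined with $d \le 2T$ this gives $T = \Omega((n/r)^{1/3})$.

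I expect the \textbf{main obstacle} to be precisely this integer-grid approximation bound: it is what forces the cubic (hence $1/3$-power) relationship, and getting the dependence on $r$ correct requires carefully placing the ``jump'' near the endpoint $g=1$ of the grid rather than in its bulk. The symmetrization construction of $\mathcal{D}_g$ in the equal-cardinality regime $f\colon[n]\to[n]$ is the secondary technical hurdle, resolved by Kutin~\cite{kutin2005quantum} and, by an independent route, by Ambainis~\cite{ambainis2005polynomial}.
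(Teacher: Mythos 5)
Your framework—polynomial method, degree at most $2T$, symmetrization over $S_n \times S_n$, then an integer-grid approximation bound—is indeed the skeleton of the argument the paper relies on (Kutin's proof, which the paper reproduces in its appendix when establishing \cref{cor:collision}; \cref{the:collision} itself is cited). But your step two, as stated, contains a genuine gap that breaks step three. For $f\colon [n] \to [n]$, $g$-to-one functions exist only when $g \mid n$, so at any other integer $g$ your univariate $Q(g)$ is mere polynomial extrapolation, not an average of acceptance probabilities, and the claimed property $0 \le Q(g) \le 1$ for \emph{all} integers $g \in \{1, \dots, \Theta(n)\}$ is unavailable. Kutin's fix is not, as you suggest, a family $\mathcal{D}_g$ making $Q$ bounded at every integer $g$; no such family exists. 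Indeed, if $Q$ really were $[0,1]$-bounded on the whole grid with a constant change located within distance $r$ of the left endpoint, then the Paturi/Ehlich--Zeller-type bound (\cref{the:paturi}) would force degree $\Omega(\sqrt{n})$-ish, polynomially stronger than $(n/r)^{1/3}$—contradicting the existence of the $O((n/r)^{1/3})$-query collision algorithm, whose acceptance probability supplies a polynomial of exactly that degree. So your property (1) is not merely unproven; it is false, and no single application of a grid-approximation inequality can be the engine of the proof, since the cubic exponent cannot arise that way.

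What the paper's appendix (following \cite{kutin2005quantum}) actually does is structurally different at this point: it symmetrizes over the two-block functions $f_{m,a,b}$ ($a$-to-one on the first $m$ inputs, $b$-to-one on the rest), obtaining a \emph{trivariate} polynomial $Q(m,a,b)$ of degree $d \le 2T$ that is guaranteed to lie in $[0,1]$ only at \emph{valid} triples. The lower bound then comes from a two-scale case analysis: restricting to a line such as $g(x) = Q(M,1,rx)$, one lets $k$ be the least integer with $\abs{g(k)} \ge 2$ (it exists unless $g$ is constant, which the promise gap rules out) and sets $c = rk$; the jump $g(1) - g(1/r) \ge 1/10$ near the endpoint of the length-$k$ grid gives $d = \Omega(\sqrt{c/r})$ via \cref{the:paturi}, while a second restriction $h(i)$, valid and hence bounded at all integers $0 \le i \le \lfloor n/c \rfloor$ but of magnitude at least $2$ near the middle of that range, gives $d = \Omega(n/c)$. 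Balancing the two bounds at the self-referential scale $c$ (the first scale at which the polynomial escapes the box) yields $d^3 = \Omega(n/r)$. Your ``main obstacle'' paragraph correctly locates the difficulty, but the resolution you propose—one Coppersmith--Rivlin application to a fully grid-bounded $Q$ with the jump near the endpoint—would, were its hypotheses satisfiable, prove too much; the missing idea is precisely the $f_{m,a,b}$ family together with the minimal-$k$/two-bound tradeoff.
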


In order to complete the reduction, two slight modifications to the above theorem are necessary. Firstly, we modify the oracle formulation to $O_f\colon |i\rangle | b \rangle \rightarrow | i \rangle | b \oplus f(i) \rangle$ where $\oplus$ denotes bitwise XOR. Secondly, we allow the quantum algorithm to make incorrect decisions on a negligible fraction of  each kind of functions. Importantly, the lower bound of $\Omega((n/r)^{1/3}))$ remains valid. The formal statement is provided in the following corollary.

\begin{corollary}
\label{cor:collision}
    Let $n > 0$ and $r \geq 2$ be integers such that $r \mid n$, and $f \colon [n] \rightarrow [n]$ be a function that is either one-to-one or $r$-to-one. Assume that the inputs and outputs of the functions are encoded using $\lceil{\log_2 n}\rceil$ qubits, and the function $f$ is given as an oracle $O_f\colon |i\rangle | b \rangle \rightarrow | i \rangle | b \oplus f(i) \rangle$. 
    Then any quantum algorithm that determines (with probability at least $2/3$) the type of the input function $f$ for $99\%$ one-to-one functions and $99\%$ $r$-to-one functions must make $\Omega((n/r)^{1/3}))$ queries to $O_f$.
\end{corollary}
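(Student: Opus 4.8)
The statement differs from \cref{the:collision} in two respects---the in-place combination written by the oracle ($\oplus$ instead of addition modulo $n$) and the relaxation from a worst-case guarantee to correctness on a $99\%$ fraction of each type of function---so the plan is to dispose of the two changes separately and then compose them.

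First I would handle the oracle form. The bitwise-XOR oracle $O_f^{\oplus}\colon |i\rangle|b\rangle \mapsto |i\rangle|b\oplus f(i)\rangle$ and the modular-addition oracle $O_f^{+}\colon |i\rangle|b\rangle \mapsto |i\rangle|b + f(i)\bmod n\rangle$ simulate each other with $O(1)$ query overhead. Indeed, given one of them one can write $f(i)$ into a fresh ancilla register (one query), apply the desired in-place combination on the data register by a coherent classical circuit, and then erase the ancilla using the inverse oracle (one more query); the inverse is available because $O_f^{\oplus}$ is self-inverse and the inverse of $O_f^{+}$ is the standard modular-subtraction oracle. Crucially, this per-query transformation does not depend on $f$, so it maps any algorithm for one oracle to an algorithm for the other that is correct on \emph{exactly the same} set of input functions, at the cost of a constant factor in the query count. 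Hence it suffices to prove the bound for whichever oracle convention the cited proofs use, and the two relaxations will compose without interfering.

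Second I would upgrade the worst-case guarantee of \cref{the:collision} to the average-case (``$99\%$ of each type'') guarantee. The worst-case statement by itself is insufficient, but the proofs underlying \cref{the:collision}~\cite{aaronson2004quantum, ambainis2005polynomial, kutin2005quantum} in fact establish the stronger \emph{distributional} bound: writing $\mathcal{D}_1$ for the uniform distribution over one-to-one functions and $\mathcal{D}_r$ for the uniform distribution over $r$-to-one functions $[n]\to[n]$, every $T$-query algorithm distinguishes $\mathcal{D}_1$ from $\mathcal{D}_r$ with advantage $o(1)$---quantitatively of order $T^{3}r/n$---whenever $T=o\rbra{(n/r)^{1/3}}$. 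I would then run a short averaging argument. Suppose toward a contradiction that a $T$-query algorithm $A$ with $T=o\rbra{(n/r)^{1/3}}$ outputs the correct type with probability at least $2/3$ on at least $99\%$ of the one-to-one functions and on at least $99\%$ of the $r$-to-one functions. Let $a(f)=\Pr[A\text{ outputs ``one-to-one'' on }f]$. On the good one-to-one inputs $a(f)\ge 2/3$, so $\mathbb{E}_{f\sim\mathcal{D}_1}[a(f)]\ge 0.99\cdot(2/3)=0.66$; on the good $r$-to-one inputs $a(f)\le 1/3$ while always $a(f)\le 1$, so $\mathbb{E}_{f\sim\mathcal{D}_r}[a(f)]\le 0.99\cdot(1/3)+0.01=0.34$. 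The distinguishing advantage is therefore at least $0.66-0.34=0.32=\Omega(1)$, contradicting the distributional bound. Hence $T=\Omega\rbra{(n/r)^{1/3}}$, and combining with the first step proves the corollary for the XOR oracle.

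The hard part will be the second step: the quoted \cref{the:collision} is only a worst-case lower bound, whereas the corollary demands hardness on a $99\%$-fraction of inputs, so the argument genuinely relies on the distributional content of the underlying proofs rather than on the theorem as literally stated. A secondary point requiring care is verifying that the oracle-conversion reduction preserves per-input correctness---which it does, being $f$-independent---so that the average-case relaxation and the oracle change can be chained together.
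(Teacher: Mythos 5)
Your proposal is correct in outline, and the central insight is the same one the paper exploits: the polynomial-method proof of the collision lower bound is inherently \emph{distributional}, because the symmetrization step averages the acceptance probability over $\tau \circ f_{m,a,b} \circ \sigma$ for uniformly random $\sigma, \tau$, and these orbits are exactly the uniform distributions over one-to-one and over $r$-to-one functions. But your packaging differs from the paper's. You black-box a distributional lower bound and place an averaging argument on top of it ($\mathbb{E}_{\mathcal{D}_1}[a(f)] \geq 0.66$ versus $\mathbb{E}_{\mathcal{D}_r}[a(f)] \leq 0.34$, hence constant advantage, hence $\Omega((n/r)^{1/3})$ queries), and you dispose of the XOR-versus-modular-addition issue by an explicit $f$-independent $O(1)$-query simulation of one oracle by the other. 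The paper instead re-runs Kutin's proof end to end: it observes that the XOR oracle still makes the acceptance probability a degree-$2T$ polynomial in the indicators $\delta_{i,j}(f)$, shows that $99\%$ correctness forces the symmetrized polynomial to satisfy $Q(m,1,1) \leq 2/5$ and $Q(m,r,r) \geq 3/5$ (in place of $1/3$ and $2/3$), and then repeats the two-case Paturi argument, which tolerates any constant gap. Your route is arguably cleaner and more modular, and your averaging computation is right; the one soft spot is that the ``distributional bound'' you invoke is not the literal statement of the cited theorem, so a self-contained write-up still has to verify that the uniform distributions $\mathcal{D}_1$ and $\mathcal{D}_r$ coincide with the permutation orbits used in the symmetrization and that the resulting polynomial argument goes through with the weakened constants --- which is precisely the content of the paper's appendix. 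Your oracle-conversion reduction is sound and, as you note, preserves per-input correctness, so the two relaxations compose; the paper avoids even this small reduction by noting the polynomial representation is insensitive to the oracle convention.
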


The proof is almost the same as the one in~\cite{kutin2005quantum}, and for completeness, we have included it in the appendix. 
Now we are ready to prove the lower bound for the $k$-wise uniformity testing problem.
{

\begin{theorem}[{\normalfont Restatement of \cref{lb:k-wise}}]
    Let $k > 2$ be a constant, $n$ be a sufficiently large integer,  and $\varepsilon \in (0, 0.228)$ be a sufficiently small  real number with $0.228^2\frac{M^n_k}{n\varepsilon^2} < 2^{n^{1/3}}$. Then the quantum query complexity of $k$-wise uniformity testing is $\Omega\rbra{\frac{n^{(k-1)/3}}{\varepsilon^{2/3}}}$.
\end{theorem}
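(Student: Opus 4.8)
The plan is to prove the lower bound by reducing the collision problem of \cref{cor:collision} to $k$-wise uniformity testing, mirroring the classical birthday-paradox argument but exploiting the quantum collision bound $\Omega((N/r)^{1/3})$ in place of $\Omega(\sqrt{N/r})$. I would take the domain and codomain of the collision function to both be $\{0,1\}^n$ (so $N = 2^n$), and identify a function $f\colon \{0,1\}^n \to \{0,1\}^n$ with the distribution $p^{(f)}$ on $\{0,1\}^n$ given by $p^{(f)}_y = \abs{\cbra{x : f(x) = y}}/2^n$, which is exactly the discrete quantum query-access distribution of \cref{def:dqqa}. The XOR oracle $O_f\colon \ket{i}\ket{b}\mapsto\ket{i}\ket{b\oplus f(i)}$ furnished by \cref{cor:collision} yields a purified quantum query-access oracle for $p^{(f)}$ at unit cost: applying $O_f$ to $\frac{1}{\sqrt{2^n}}\sum_x \ket{x}\ket{0}$ prepares $\frac{1}{\sqrt{2^n}}\sum_x \ket{x}\ket{f(x)}$, which is of the required purified form. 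Hence any tester for $k$-wise uniformity becomes, with only constant-factor overhead in queries, an algorithm that consumes $O_f$.

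Next I would match the two cases of the collision promise to the two cases of the testing promise. When $f$ is one-to-one it is a bijection (equal domain and codomain size), so $p^{(f)}$ is the uniform distribution on $\{0,1\}^n$, which is $k$-wise uniform directly by \cref{def:k-wise-uniform}; this is the \textbf{YES} instance, and crucially it is \emph{exactly} $k$-wise uniform, so the tester must accept. When $f$ is $r$-to-one its image is a uniformly random subset $T\subseteq\{0,1\}^n$ of size $Q := N/r$, and $p^{(f)}$ is the uniform distribution over $T$; this is the intended \textbf{NO} instance. I would set $Q = 0.228^2\,\frac{M^n_k}{n\varepsilon^2}$ (rounded to a power of two so that $r = 2^n/Q$ is an integer dividing $N$, which costs only constant factors), which is precisely the regime $Q < 2^{n^{1/3}}$ assumed in the theorem.

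The remaining point is to certify that the \textbf{NO} distribution is $\varepsilon$-far from every $k$-wise uniform distribution with high probability. Writing $\delta = \varepsilon/0.228$ gives $Q = M^n_k/(n\delta^2)$, so \cref{lemma:random-subset} asserts that a uniform distribution over a random \emph{multiset} of $Q$ samples is $0.228\delta = \varepsilon$-far from $k$-wise uniform with probability $99\%$. Here $p^{(f)}$ is uniform over a random \emph{set}; but since $Q < 2^{n^{1/3}}$ we have $Q^2 \ll 2^n$, so a random size-$Q$ multiset is all-distinct except with $o(1)$ probability, making the set and multiset models statistically close and transferring the far-ness conclusion to $p^{(f)}$. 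Thus $p^{(f)}$ is $\varepsilon$-far from $k$-wise uniform with probability $99\% - o(1)$ over the random $r$-to-one $f$. Combining this with the $1\%$ per-type error budget that \cref{cor:collision} grants, a correct tester (success probability $2/3$) solves the collision problem on all but a small fraction of inputs of each type, and therefore requires $\Omega((N/r)^{1/3}) = \Omega(Q^{1/3}) = \Omega\!\left(\frac{n^{(k-1)/3}}{\varepsilon^{2/3}}\right)$ queries, using $M^n_k = \Theta(n^k)$ for constant $k$.

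The main obstacle I anticipate is not the arithmetic but the two consistency checks glued together above: first, engineering the \textbf{YES} case to be \emph{exactly} (not merely approximately) $k$-wise uniform, which is why I take $N = 2^n$ so that one-to-one forces the genuine uniform distribution rather than an approximately-uniform random subset; and second, transferring \cref{lemma:random-subset} from its multiset formulation to the random-set distribution that an $r$-to-one function actually produces, which hinges on the separation $Q < 2^{n^{1/3}}\ll 2^{n/2}$. The minor bookkeeping — integrality of $r$, the power-of-two rounding of $Q$, and tracking the constants $0.228$ and $e^{-k}$ through the far-ness estimate — should only affect constants.
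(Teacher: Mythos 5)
Your proposal is correct and follows essentially the same route as the paper: reduce from the collision problem of \cref{cor:collision} with domain $\{0,1\}^n$, map one-to-one functions to the exactly uniform (hence $k$-wise uniform) distribution and $r$-to-one functions to uniform distributions over random size-$Q$ subsets with $Q \approx 0.228^2 M^n_k/(n\varepsilon^2)$, invoke \cref{lemma:random-subset} together with a birthday-paradox argument to certify far-ness, and build the purified query-access oracle by applying $O_f$ to the uniform superposition. The only cosmetic difference is that the paper also spells out explicitly how to implement the inverse and controlled versions of the purified oracle from $O_f$, which you subsume under ``constant-factor overhead.''
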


\begin{proof}
    Let $Q$ be the maximum integer such that $Q \leq 0.228^2 \frac{M^n_k}{n\varepsilon^2}$ and $Q \mid 2^n$. 
    Let $\mathcal{Q}$ be a random multiset consisting of ${Q}$ strings that are sampled uniformly at random from $\{0, 1\}^n$. By the Birthday Paradox, the probability that all the strings in $\mathcal{Q}$ are distinct is 
    \begin{align*}
        p & = \rbra*{1 - \frac{0}{2^n}} \cdot \rbra*{1 - \frac{1}{2^n}} \cdots \rbra*{1 - \frac{Q-1}{2^n}} \\
        & \geq 1 - \frac{Q(Q-1)}{2^{n+1}},
    \end{align*}
    which is $1 - o\rbra{1}$ when $Q \leq 0.228^2 \frac{M^n_k}{n\varepsilon^2} \ll 2^{n/2}$.  From now on, let's assume this is the case. 
    
    Consider the following distribution on $f\colon \cbra{0, 1}^n \rightarrow \cbra{0, 1}^n$: with probability $1/2$, $f$ is a random $1$-to-$1$ function, and with probability $1/2$, $f$ is a random $2^n/Q$-to-$1$ function with support on $\mathcal{Q}$. Then by Theorem \ref{the:collision} any quantum algorithm distinguishing these two cases must make $\Omega(Q^{1/3})$ queries. On the other hand, in the first case the distribution defined by $f$ is the uniform distribution, which is $k$-wise uniform for all $1 \leq k \leq n$. In the second case, by Lemma \ref{lemma:random-subset}, we have the distribution defined by $f$ is at least $\varepsilon$-far from any $k$-wise uniform distribution with probability $99\%$ for sufficiently large $n$ and sufficiently small $\varepsilon$. 
    Therefore, a quantum algorithm for $k$-wise uniformity testing problem can disdinguish all random $1$-to-$1$ functions from almost all random $2^n/Q$-to-$1$ functions, therefore inheriting the quantum query complexity $\Omega\rbra{Q^{1/3}}$ for the collision problem (see \cref{the:collision} and \cref{cor:collision}).
    
    To completing the reduction, it remains to construct the (controlled, inverse) purified quantum query-access oracle (see \cref{purified quantum query-access}) using constant queries to the oracle $O_f$ for the collision problem (see \cref{cor:collision}). The oracle $O_f$ could be viewed as a discrete quantum query-access model defined in \cref{def:dqqa}. The circuit for the purified quantum query-access oracle could be constructed as follows: prepare the initials state $\ket{0}^{\otimes n}\ket{0}^{\otimes n}$, apply $H^{\otimes n}\otimes I$ to create a uniform superposition state of all $2^n$ different inputs and then query $O_f$. Since $O_fO_f = I$, the inverse quantum query-access oracle could be $O_f\rbra{H^{\otimes n}\otimes I}$. In order to build the controlled purified quantum query-access oracle, it remains to show how to implement the controlled version of $O_f$, which could be done as follows: append a new auxiliary register, query $O_f$ and store the result in the auxiliary register rather than the original output register, and then copy the content of the auxiliary register to the output register under the control of control bit(s), finally uncompute the content in the auxiliary register by querying $O_f$ again.
    
    As mentioned above, we have reduced the collision problem to the $k$-wise uniform testing problem. Therefore, the quantum algorithm for $k$-wise uniform testing problem inherits the quantum query complexity $\Omega\rbra{Q^{1/3}}$ for the collision problem, which is $\Omega\rbra{Q^{1/3}} = \Omega\rbra{\frac{1}{\varepsilon^{2/3}}(\frac{n}{k})^{\frac{k-1}{3}}}$.
\end{proof}

}

\section{Conclusion and Open Problems}

In this article, we have proposed concise and efficient quantum algorithms for testing whether two classical distributions are close or far  enough under the metrics of $\ell^1$-distance and $\ell^2$-distance, with  query complexity  $O(\sqrt{n}/\varepsilon)$ and $O(1/\varepsilon)$, respectively, outperforming the prior best algorithms. In particular, both of our testers achieve optimal dependence on $\varepsilon$. 
In addition, we also obtained the first quantum algorithm for $k$-wise uniformity testing,  
achieving a quadratic speedup over the state-of-the-art classical algorithm.
We also provided a non-matching lower bound for this problem.

In the following, we list some open problems for future work.

\begin{itemize}
    \item Can we improve the precision dependence of estimating the total variation distance between probability distributions? The currently best upper bound is $O(\sqrt{n}/\varepsilon^{2.5})$ due to  Montanaro~\cite{montanaro2015quantum}.

    \item Can we prove an $\Omega(\sqrt{n}/\varepsilon)$ lower bound on $\ell^1$-closeness testing?

    \item Can we relate the query  complexity of the $\ell^2$-closeness testing problem to the norm $\| p \|_2$ and $ \| q \|_2$ of probability distributions $p$ and $q$ as that for the classical algorithms?

    \item Can we close the gap between the upper bound and the lower bound of the $k$-wise uniformity testing problem?
\end{itemize}



\section*{Acknowledgment}

The work of Jingquan Luo and Lvzhou Li was supported by the National Science Foundation of China under Grant 62272492, the Guangdong Basic and Applied Basic Research Foundation under Grant 2020B1515020050 and the Guangdong Provincial Quantum Science Strategic Initiative under Grant GDZX2303007.
The work of Qisheng Wang was supported
by the Ministry of Education, Culture, Sports, Science and Technology (MEXT) Quantum Leap Flagship Program (MEXT Q-LEAP) under Grant JPMXS0120319794.

\bibliographystyle{IEEEtranN}
\bibliography{refs}

\begin{thebibliography}{48}
\providecommand{\natexlab}[1]{#1}
\providecommand{\url}[1]{#1}
\csname url@samestyle\endcsname
\providecommand{\newblock}{\relax}
\providecommand{\bibinfo}[2]{#2}
\providecommand{\BIBentrySTDinterwordspacing}{\spaceskip=0pt\relax}
\providecommand{\BIBentryALTinterwordstretchfactor}{4}
\providecommand{\BIBentryALTinterwordspacing}{\spaceskip=\fontdimen2\font plus
\BIBentryALTinterwordstretchfactor\fontdimen3\font minus \fontdimen4\font\relax}
\providecommand{\BIBforeignlanguage}[2]{{%
\expandafter\ifx\csname l@#1\endcsname\relax
\typeout{** WARNING: IEEEtranN.bst: No hyphenation pattern has been}%
\typeout{** loaded for the language `#1'. Using the pattern for}%
\typeout{** the default language instead.}%
\else
\language=\csname l@#1\endcsname
\fi
#2}}
\providecommand{\BIBdecl}{\relax}
\BIBdecl

\bibitem[Montanaro(2016{\natexlab{a}})]{montanaro2016quantum}
A.~Montanaro, ``Quantum algorithms: an overview,'' \emph{npj Quantum Information}, vol.~2, no.~1, pp. 1--8, 2016.

\bibitem[Zhang and Li(2022)]{zhang2022brief}
S.~Zhang and L.~Li, ``A brief introduction to quantum algorithms,'' \emph{CCF Transactions on High Performance Computing}, vol.~4, no.~1, pp. 53--62, 2022.

\bibitem[Montanaro and de~Wolf(2016)]{montanaro2013survey}
A.~Montanaro and R.~de~Wolf, ``A survey of quantum property testing,'' in \emph{Theory of Computing Library}, ser. Graduate Surveys.\hskip 1em plus 0.5em minus 0.4em\relax Chicago, Illinois, USA: University of Chicago, 2016, no.~7, pp. 1--81.

\bibitem[Bravyi et~al.(2011)Bravyi, Harrow, and Hassidim]{bravyi2011quantum}
S.~Bravyi, A.~W. Harrow, and A.~Hassidim, ``Quantum algorithms for testing properties of distributions,'' \emph{IEEE Transactions on Information Theory}, vol.~57, no.~6, pp. 3971--3981, 2011.

\bibitem[Montanaro(2015)]{montanaro2015quantum}
A.~Montanaro, ``Quantum speedup of {Monte} {Carlo} methods,'' \emph{Proceedings of the Royal Society A: Mathematical, Physical and Engineering Sciences}, vol. 471, no. 2181, p. 20150301, 2015.

\bibitem[Gily{\'e}n and Li(2020)]{gilyen2019distributional}
A.~Gily{\'e}n and T.~Li, ``{Distributional property testing in a quantum world},'' in \emph{Proceedings of the 11th Innovations in Theoretical Computer Science Conference}, vol. 151, 2020, pp. 25:1--25:19.

\bibitem[Batu et~al.(2000)Batu, Fortnow, Rubinfeld, Smith, and White]{batu2000testing}
T.~Batu, L.~Fortnow, R.~Rubinfeld, W.~D. Smith, and P.~White, ``Testing that distributions are close,'' in \emph{Proceedings of the 41st Annual Symposium on Foundations of Computer Science}.\hskip 1em plus 0.5em minus 0.4em\relax IEEE, 2000, pp. 259--269.

\bibitem[Diakonikolas and Kane(2016)]{diakonikolas2016new}
I.~Diakonikolas and D.~M. Kane, ``A new approach for testing properties of discrete distributions,'' in \emph{Proceedings of the 57th Annual Symposium on Foundations of Computer Science}.\hskip 1em plus 0.5em minus 0.4em\relax IEEE, 2016, pp. 685--694.

\bibitem[Valiant and Valiant(2017)]{valiant2017automatic}
G.~Valiant and P.~Valiant, ``An automatic inequality prover and instance optimal identity testing,'' \emph{SIAM Journal on Computing}, vol.~46, no.~1, pp. 429--455, 2017.

\bibitem[Chan et~al.(2014)Chan, Diakonikolas, Valiant, and Valiant]{chan2014optimal}
S.-O. Chan, I.~Diakonikolas, P.~Valiant, and G.~Valiant, ``Optimal algorithms for testing closeness of discrete distributions,'' in \emph{Proceedings of the 25th annual ACM-SIAM symposium on Discrete algorithms}.\hskip 1em plus 0.5em minus 0.4em\relax SIAM, 2014, pp. 1193--1203.

\bibitem[Rao(1947)]{rao1947factorial}
C.~R. Rao, ``Factorial experiments derivable from combinatorial arrangements of arrays,'' \emph{Supplement to the Journal of the Royal Statistical Society}, vol.~9, no.~1, pp. 128--139, 1947.

\bibitem[Alon et~al.(1986)Alon, Babai, and Itai]{alon1986fast}
N.~Alon, L.~Babai, and A.~Itai, ``A fast and simple randomized parallel algorithm for the maximal independent set problem,'' \emph{Journal of Algorithms}, vol.~7, no.~4, pp. 567--583, 1986.

\bibitem[Chor et~al.(1985)Chor, Goldreich, Hasted, Freidmann, Rudich, and Smolensky]{chor1985bit}
B.~Chor, O.~Goldreich, J.~Hasted, J.~Freidmann, S.~Rudich, and R.~Smolensky, ``The bit extraction problem or t-resilient functions,'' in \emph{Proceedings of the 26th Annual Symposium on Foundations of Computer Science}, 1985, pp. 396--407.

\bibitem[Ambainis(2007)]{ambainis2007quantum}
A.~Ambainis, ``Quantum walk algorithm for element distinctness,'' \emph{SIAM Journal on Computing}, vol.~37, no.~1, pp. 210--239, 2007.

\bibitem[Hamoudi and Magniez(2019)]{hamoudi2018quantum}
Y.~Hamoudi and F.~Magniez, ``{Quantum {C}hebyshev's inequality and applications},'' in \emph{Proceedings of the 46th International Colloquium on Automata, Languages, and Programming}, vol. 132, 2019, pp. 69:1--69:16.

\bibitem[Belovs(2019)]{belovs2019quantum}
A.~Belovs, ``Quantum algorithms for classical probability distributions,'' in \emph{Proceedings of the 27th Annual European Symposium on Algorithms}, vol. 144, 2019, pp. 16:1--16:11.

\bibitem[Wang et~al.(2024{\natexlab{a}})Wang, Zhang, and Li]{li2022unified}
X.~Wang, S.~Zhang, and T.~Li, ``A quantum algorithm framework for discrete probability distributions with applications to rényi entropy estimation,'' \emph{IEEE Transactions on Information Theory}, vol.~70, no.~5, pp. 3399--3426, 2024.

\bibitem[Brassard et~al.(2002)Brassard, Hoyer, Mosca, and Tapp]{brassard2002quantum}
G.~Brassard, P.~Hoyer, M.~Mosca, and A.~Tapp, ``Quantum amplitude amplification and estimation,'' \emph{Contemporary Mathematics}, vol. 305, pp. 53--74, 2002.

\bibitem[Arunachalam and de~Wolf(2018)]{arunachalam2018optimal}
S.~Arunachalam and R.~de~Wolf, ``Optimal quantum sample complexity of learning algorithms,'' \emph{The Journal of Machine Learning Research}, vol.~19, no.~1, pp. 2879--2878, 2018.

\bibitem[Aharonov and Ta-Shma(2007)]{aharonov2007adiabatic}
D.~Aharonov and A.~Ta-Shma, ``Adiabatic quantum state generation,'' \emph{SIAM Journal on Computing}, vol.~37, no.~1, pp. 47--82, 2007.

\bibitem[Atici and Servedio(2005)]{atici2005improved}
A.~Atici and R.~A. Servedio, ``Improved bounds on quantum learning algorithms,'' \emph{Quantum Information Processing}, vol.~4, no.~5, pp. 355--386, 2005.

\bibitem[Bshouty and Jackson(1995)]{bshouty1995learning}
N.~H. Bshouty and J.~C. Jackson, ``Learning {DNF} over the uniform distribution using a quantum example oracle,'' in \emph{Proceedings of the 8th Annual Conference on Computational Learning Theory}, 1995, pp. 118--127.

\bibitem[Chakraborty et~al.(2010)Chakraborty, Fischer, Matsliah, and de~Wolf]{chakraborty2010new}
S.~Chakraborty, E.~Fischer, A.~Matsliah, and R.~de~Wolf, ``New results on quantum property testing,'' in \emph{Proceedings of IARCS Annual Conference on Foundations of Software Technology and Theoretical Computer Science}, vol.~8, 2010, pp. 145--156.

\bibitem[Li and Wu(2018)]{li2018quantum}
T.~Li and X.~Wu, ``Quantum query complexity of entropy estimation,'' \emph{IEEE Transactions on Information Theory}, vol.~65, no.~5, pp. 2899--2921, 2018.

\bibitem[Montanaro(2016{\natexlab{b}})]{montanaro2015quantum2}
A.~Montanaro, ``The quantum complexity of approximating the frequency moments,'' \emph{Quantum Information and Computation}, vol.~16, no. 13–14, p. 1169–1190, 2016.

\bibitem[Gily{\'e}n et~al.(2019)Gily{\'e}n, Su, Low, and Wiebe]{gilyen2019quantum}
A.~Gily{\'e}n, Y.~Su, G.~H. Low, and N.~Wiebe, ``Quantum singular value transformation and beyond: exponential improvements for quantum matrix arithmetics,'' in \emph{Proceedings of the 51st Annual ACM SIGACT Symposium on Theory of Computing}, 2019, pp. 193--204.

\bibitem[O'Donnell and Zhao(2018)]{o2018closeness}
R.~O'Donnell and Y.~Zhao, ``On closeness to $k$-wise uniformity,'' in \emph{Proceedings of Approximation, Randomization, and Combinatorial Optimization. Algorithms and Techniques}, vol. 116, 2018, pp. 54:1--54:19.

\bibitem[Alon et~al.(2007)Alon, Andoni, Kaufman, Matulef, Rubinfeld, and Xie]{alon2007testing}
N.~Alon, A.~Andoni, T.~Kaufman, K.~Matulef, R.~Rubinfeld, and N.~Xie, ``Testing k-wise and almost k-wise independence,'' in \emph{Proceedings of the 39th Annual ACM Symposium on Theory of Computing}, 2007, pp. 496--505.

\bibitem[Aaronson and Shi(2004)]{aaronson2004quantum}
S.~Aaronson and Y.~Shi, ``Quantum lower bounds for the collision and the element distinctness problems,'' \emph{Journal of the ACM}, vol.~51, no.~4, pp. 595--605, 2004.

\bibitem[Ambainis(2005)]{ambainis2005polynomial}
A.~Ambainis, ``Polynomial degree and lower bounds in quantum complexity: Collision and element distinctness with small range,'' \emph{Theory of Computing}, vol.~1, no.~1, pp. 37--46, 2005.

\bibitem[Kutin(2005)]{kutin2005quantum}
S.~Kutin, ``Quantum lower bound for the collision problem with small range,'' \emph{Theory of Computing}, vol.~1, no.~1, pp. 29--36, 2005.

\bibitem[Batu et~al.(2013)Batu, Fortnow, Rubinfeld, Smith, and White]{batu2013testing}
T.~Batu, L.~Fortnow, R.~Rubinfeld, W.~D. Smith, and P.~White, ``Testing closeness of discrete distributions,'' \emph{Journal of the ACM}, vol.~60, no.~1, pp. 1--25, 2013.

\bibitem[Rubinfeld(2012)]{rubinfeld2012taming}
R.~Rubinfeld, ``Taming big probability distributions,'' \emph{XRDS: Crossroads, The ACM Magazine for Students}, vol.~19, no.~1, pp. 24--28, 2012.

\bibitem[Canonne(2020)]{canonne2020survey}
C.~L. Canonne, ``A survey on distribution testing: Your data is big. but is it blue?'' in \emph{Theory of Computing Library}, ser. Graduate Surveys.\hskip 1em plus 0.5em minus 0.4em\relax Chicago, Illinois, USA: University of Chicago, 2020, no.~9, pp. 1--100.

\bibitem[Alon et~al.(2003)Alon, Goldreich, and Mansour]{alon2003almost}
N.~Alon, O.~Goldreich, and Y.~Mansour, ``Almost $k$-wise independence versus k-wise independence,'' \emph{Information Processing Letters}, vol.~88, no.~3, pp. 107--110, 2003.

\bibitem[B\u{a}descu et~al.(2019)B\u{a}descu, O'Donnell, and Wright]{BOW}
C.~B\u{a}descu, R.~O'Donnell, and J.~Wright, ``Quantum state certification,'' in \emph{Proceedings of the 51st Annual ACM SIGACT Symposium on Theory of Computing}.\hskip 1em plus 0.5em minus 0.4em\relax Association for Computing Machinery, 2019, p. 503–514.

\bibitem[Wang et~al.(2023)Wang, Zhang, Chen, Guan, Fang, Liu, and Ying]{wang2023quantum}
Q.~Wang, Z.~Zhang, K.~Chen, J.~Guan, W.~Fang, J.~Liu, and M.~Ying, ``Quantum algorithm for fidelity estimation,'' \emph{IEEE Transactions on Information Theory}, vol.~69, no.~1, pp. 273--282, 2023.

\bibitem[Wang et~al.(2024{\natexlab{b}})Wang, Guan, Liu, Zhang, and Ying]{wang2022new}
Q.~Wang, J.~Guan, J.~Liu, Z.~Zhang, and M.~Ying, ``New quantum algorithms for computing quantum entropies and distances,'' \emph{IEEE Transactions on Information Theory}, 2024.

\bibitem[Gily{\'e}n and Poremba(2022)]{gilyen2022improved}
A.~Gily{\'e}n and A.~Poremba, ``Improved quantum algorithms for fidelity estimation,'' \emph{arXiv preprint arXiv:2203.15993}, 2022.

\bibitem[Wang and Zhang(2024)]{wang2023fast}
Q.~Wang and Z.~Zhang, ``Fast quantum algorithms for trace distance estimation,'' \emph{IEEE Transactions on Information Theory}, vol.~70, no.~4, pp. 2720--2733, 2024.

\bibitem[Berry et~al.(2015)Berry, Childs, Cleve, Kothari, and Somma]{LCU}
D.~W. Berry, A.~M. Childs, R.~Cleve, R.~Kothari, and R.~D. Somma, ``Simulating {H}amiltonian dynamics with a truncated {T}aylor series,'' \emph{Physical Review Letters}, vol. 114, no.~9, p. 090502, 2015.

\bibitem[Childs and Wiebe(2012)]{Hamiltonian_Simulation}
A.~M. Childs and N.~Wiebe, ``{H}amiltonian simulation using linear combinations of unitary operations,'' \emph{Quantum Information and Computation}, vol.~12, no. 11–12, p. 901–924, 2012.

\bibitem[O'Donnell(2014)]{o2014analysis}
R.~O'Donnell, \emph{Analysis of {B}oolean functions}.\hskip 1em plus 0.5em minus 0.4em\relax Cambridge University Press, 2014.

\bibitem[Bastin et~al.(2009)Bastin, Thiel, von Zanthier, Lamata, Solano, and Agarwal]{Bastin_2009}
T.~Bastin, C.~Thiel, J.~von Zanthier, L.~Lamata, E.~Solano, and G.~S. Agarwal, ``Operational determination of multiqubit entanglement classes via tuning of local operations,'' \emph{Physical Review Letters}, vol. 102, no.~5, 2009.

\bibitem[B{\"a}rtschi and Eidenbenz(2019)]{bartschi2019deterministic}
A.~B{\"a}rtschi and S.~Eidenbenz, ``Deterministic preparation of {D}icke states,'' in \emph{International Symposium on Fundamentals of Computation Theory}.\hskip 1em plus 0.5em minus 0.4em\relax Springer, 2019, pp. 126--139.

\bibitem[Preskill(2018)]{preskill2018quantum}
J.~Preskill, ``Quantum computing in the {NISQ} era and beyond,'' \emph{Quantum}, vol.~2, p.~79, 2018.

\bibitem[Beals et~al.(2001)Beals, Buhrman, Cleve, Mosca, and de~Wolf]{beals2001quantum}
R.~Beals, H.~Buhrman, R.~Cleve, M.~Mosca, and R.~de~Wolf, ``Quantum lower bounds by polynomials,'' \emph{Journal of the ACM}, vol.~48, no.~4, pp. 778--797, 2001.

\bibitem[Paturi(1992)]{paturi1992degree}
R.~Paturi, ``On the degree of polynomials that approximate symmetric {B}oolean functions,'' in \emph{Proceedings of the 24th annual ACM symposium on Theory of computing}, 1992, pp. 468--474.

\end{thebibliography}

\newpage
\appendix
\section{Proof of \cref{cor:collision}}

In this section, we give the proof of \cref{cor:collision}, which is almost the same as the one in~\cite{kutin2005quantum}. 
The proof is based on the quantum polynomial method~\cite{beals2001quantum}.

We firstly give some intuition why \cref{the:collision} remains valid after two modifications. 
One modification is about the form of the oracle. Indeed, this would not change the fact that the acceptance probability of a quantum algorithm using $T$ queries can be written as a polynomial of degree at most $2T$. 
The other modification is to allow the algorithm to make mistakes on a negligible fraction of each kind of functions. A key step in the proof of the lower bound is to symmetrize the function, which in turn can be viewed as applying the algorithm to a \textit{random} input. 
As a result, making a negligible fraction of mistakes would only make little change to the expectation of the acceptance probability of the algorithm. Detailed proof is as follows.

Firstly, we introduce some necessary backgrounds, specifically, relating the query complexities of quantum algorithms to the degrees of polynonial funcitons. 

Let $\mathcal{F} = \sbra{n}^{\sbra{n}}$ be the set of all functions from $\sbra{n}$ to $\sbra{n}$. Given a function $f \in \mathcal{F} $, we assume that the inputs and outputs of the function are encoded using $\ceil{\log_2 n}$ qubits and the function is given as an oracle $O_f\colon |i\rangle | b \rangle \rightarrow | i \rangle | b \oplus f(i) \rangle$. 
Given an oracle $O_f$, a quantum query algorithm, which starts with state $\ket{0}$ and makes $T$ queries to $O_f$, can be represented as a sequence of operators as follows:
\begin{align*}
    U_0 \rightarrow O_f \rightarrow U_1 \rightarrow O_f \rightarrow \dots \rightarrow U_{T-1} \rightarrow O_f \rightarrow U_T \rightarrow \Pi,
\end{align*}
where $U_i, i = 0, \dots, T$ are arbitrary unitary operators and $\Pi$ are a projection operator. The acceptance probability is 
\begin{align}
\label{equ:pf}
    P(f) = \Abs{\Pi U_T O_f U_{T-1} \dots O_f U_1 O_f U_0 \ket{0}}^2.
\end{align}
Without loss of generality, for the collision problem, let the acceptance probability be the probability that the quantum algorithm asserts that $f$ is an $r$-to-one function. If the quantum algorithm correctly determine the type of function $f$ with probability at least $2/3$, we have $0 \leq P(f) \leq 1/3$ if $f$ is a one-to-one function and $2/3 \leq P(f) \leq 1$ otherwise.

For all $i, j \in \sbra{n}$, define $\delta_{i, j}\rbra{f}$ as follows:
\begin{align*}
    \delta_{i, j}\rbra{f} = \begin{cases}
        1 & \text{if}~f\rbra{i} = j, \\
        0 & otherwise. 
    \end{cases}
\end{align*}
Then, for any quantum algorithm that makes $T$ queries to $O_f$, the amplitude of each basis state of the final state can be written as a polynomial in $\delta_{i, j}\rbra{f}$ of degree at most $T$. Therefore the probability of acceptance $P\rbra{f}$ is a polynomial in $\delta_{i, j}\rbra{f}$ of degree at most $2T$. This relation was discovered by Beals~\textit{et~al.}~\cite{beals2001quantum}.

Let $S_n$ be the permutation group over $\sbra{n}$. For $\sigma, \tau \in S_n$, define $\Gamma^{\sigma}_{\tau} \colon \mathcal{F} \rightarrow \mathcal{F}$ by
\begin{align*}
    \Gamma^{\sigma}_{\tau} \rbra{f} = \tau \circ f \circ \sigma.
\end{align*}

Next we define a new class of functions which are $a$-to-one on part of the domain and $b$-to-one on the remaining. A triple $\rbra{m, a, b}$ of integers is said to be valid if $0 \leq m \leq n$, $a \mid m$ and $b \mid \rbra{n - m}$. For a valid triple, define $f_{m, a, b} \in \mathcal{F}$ as
\begin{align*}
    f_{m, a, b} \rbra{i} = \begin{cases}
        \ceil{i/a} & 1 \leq i \leq m, \\
        n - \floor{\rbra{n-i}/b} & m < i \leq n.
    \end{cases}
\end{align*}

For a valid triple $\rbra{m, a, b}$ and a polynomial $P\rbra{f}$ in $\delta_{i, j}\rbra{f}$, define $Q\rbra{m, a, b}$ as
\begin{align}
\label{equ:qmab}
    Q\rbra{m, a, b} = \mathbb{E}_{\sigma, \tau}\sbra{P\rbra{\Gamma^{\sigma}_{\tau} \rbra{f_{m, a, b}}}},
\end{align}
where $\sigma$ and $\tau$ are chosen uniformly at random on $S_n$.
The following lemma reduces the task of proving a lower bound on the degree of the polynomial $P(f)$ to proving a lower bound on the degree of $Q(m,a, b)$ by symmetrizing.

\begin{lemma}[{\cite[Lemma 2.2]{kutin2005quantum}}]
    Let $P\rbra{f}$ be the acceptance probability, as defined in \cref{equ:pf}, expressed as a polynomial in $\delta_{i, j}\rbra{f}$ of degree $d$. 
    Then $Q\rbra{m, a, b}$, as defined in \cref{equ:qmab}, is a polynomial of degree $d$ in $m$, $a$ and $b$.
\end{lemma}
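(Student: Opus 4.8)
The plan is to reduce the statement to a per-monomial degree count. First I would assume without loss of generality that $P$ is multilinear in the variables $\delta_{i,j}(f)$: since each $\delta_{i,j}$ takes values in $\cbra{0,1}$ we have $\delta_{i,j}^2 = \delta_{i,j}$, so reducing higher powers to first powers turns $P$ into a multilinear polynomial of degree at most $d$ without changing its values on actual functions. By linearity of expectation in \cref{equ:qmab}, it then suffices to show that the symmetrized expectation of a single multilinear monomial of degree $t \le d$ is a polynomial in $m,a,b$ of degree at most $t$; summing over monomials yields the claim.

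Next I would interpret a monomial combinatorially. A nonvanishing multilinear monomial $\prod_{(i,j)\in T}\delta_{i,j}$ is the indicator that $f$ extends a partial function $\phi\colon I \to \sbra{n}$, where $I$ is the set of inputs occurring in $T$ and $\abs{I}=t$ (monomials forcing one input to two distinct outputs vanish identically and may be dropped). Writing $J = \phi(I)$ and grouping $I$ by output value as $I = \bigsqcup_{w\in J} I_w$, the symmetrized expectation equals $\Pr_{\sigma,\tau}\sbra{\Gamma^\sigma_\tau(f_{m,a,b})\text{ extends }\phi}$, i.e.\ the probability that $f_{m,a,b}(\sigma(i)) = \tau^{-1}(\phi(i))$ for all $i \in I$. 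Setting $u_i = \sigma(i)$ and $v_w = \tau^{-1}(w)$, the $u_i$ are uniformly random distinct elements, the $v_w$ are uniformly random distinct elements, and the two families are independent; the constraint becomes that all $u_i$ with $i \in I_w$ lie in the fiber $f_{m,a,b}^{-1}(v_w)$.

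I would then count directly. Since $f_{m,a,b}$ is $a$-to-one on its first block (with $N_a = m/a$ image values, each of fiber size $a$) and $b$-to-one on its second block (with $N_b = (n-m)/b$ image values, each of fiber size $b$), every $v_w$ must land in the image; summing over which groups land in the $a$-block ($A \subseteq J$) versus the $b$-block ($J\setminus A$) gives, up to the constant prefactor $\frac{(n-\abs{I})!\,(n-\abs{J})!}{(n!)^2}$, the count
\begin{equation*}
\sum_{A \subseteq J} (N_a)_{\abs{A}}\,(N_b)_{\abs{J\setminus A}} \prod_{w\in A}(a)_{\abs{I_w}}\prod_{w\in J\setminus A}(b)_{\abs{I_w}},
\end{equation*}
where $(x)_r$ denotes the falling factorial. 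The main obstacle is that $N_a = m/a$ is not polynomial in $a$, so $(N_a)_{\abs{A}}$ alone is not a polynomial in $(m,a)$; the resolution is to clear denominators against the factors of $a$ hidden in the falling factorials. Since each $I_w$ is nonempty, each $(a)_{\abs{I_w}}$ contributes one factor of $a$, and
\begin{equation*}
(N_a)_{\abs{A}}\,a^{\abs{A}} = \prod_{l=0}^{\abs{A}-1}(m - l a),
\end{equation*}
which is a genuine polynomial of degree $\abs{A}$ in $(m,a)$. Writing $(a)_{\abs{I_w}} = a\,(a-1)(a-2)\cdots(a-\abs{I_w}+1)$ then shows that $(N_a)_{\abs{A}}\prod_{w\in A}(a)_{\abs{I_w}}$ is a polynomial of degree $\abs{A} + \sum_{w\in A}(\abs{I_w}-1) = \sum_{w\in A}\abs{I_w}$ in $(m,a)$, and symmetrically the $b$-block factor has degree $\sum_{w\in J\setminus A}\abs{I_w}$ in $(m,b)$. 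Each term of the sum therefore has total degree $\sum_{w\in J}\abs{I_w} = \abs{I} = t$, so the monomial's expectation is a polynomial of degree at most $t$ in $m,a,b$. Summing over all monomials of $P$, each of degree at most $d$, we conclude that $Q(m,a,b)$ is a polynomial of degree at most $d$ in $m,a,b$, as required.
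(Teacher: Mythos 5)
Your proof is correct. The paper itself gives no proof of this lemma---it imports it verbatim from Kutin's paper---and your argument is essentially the standard one given in that source: multilinearize using $\delta_{i,j}^2=\delta_{i,j}$, reduce by linearity to a single nonvanishing monomial viewed as a partial-function constraint $\phi\colon I\to[n]$, and compute the consistency probability under independent uniform $\sigma,\tau$ as a sum of products of falling factorials, where the non-polynomial factors $(m/a)_{\abs{A}}$ and $((n-m)/b)_{\abs{J\setminus A}}$ are repaired by absorbing one factor of $a$ (resp.\ $b$) from each fiber-counting term $(a)_{\abs{I_w}}$, giving total degree $\sum_{w\in J}\abs{I_w}=t\le d$ per monomial. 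The only cosmetic remark is that you establish degree \emph{at most} $d$ rather than exactly $d$; that is the direction actually needed downstream and is how the lemma is phrased in the original reference.
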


The last component is a polynomial approximation theory due to  Paturi~\cite{paturi1992degree}, which is implictly given in the proof of the lower bound of the degree of the polynomial approximation to symmetric Boolean functions. We follow the statement given in~\cite{kutin2005quantum}.
\begin{theorem}[{\cite[Theorem 2.1]{kutin2005quantum}}]
\label{the:paturi}
    Let $q(\alpha) \in \mathbb{R}[\alpha]$ be a polynomial of degree $d$. Let $a, b$ be integers such that $a < b$, and let $\xi \in [a, b]$ be a real number. If
    \begin{enumerate}
        \item $\abs{q(i)} \leq c_1$ for all integers $i \in [a, b]$ for some constant $c_1 > 0$, and 
        \item $\abs{q(\ceil{\xi}) - q(\xi)} \geq c_2$ for some constant $c_2 > 0$,
    \end{enumerate}
    then 
    \begin{align*}
        d = \Omega\rbra*{\sqrt{\rbra*{\xi - a + 1}\rbra*{b - \xi +1}}}.
    \end{align*}
    In particular, $d = \Omega\rbra*{\sqrt{b - a}}$. And if $\xi = \frac{a + b}{2} + O(1)$, $d = \Omega\rbra*{b - a}$. 
\end{theorem}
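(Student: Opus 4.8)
The plan is to translate the analytic hypotheses on $q$ into a statement about a single pointwise derivative and then appeal to a Markov--Bernstein-type inequality for polynomials that are controlled only on the integer grid. First I would note that the jump hypothesis forces $\xi$ to be non-integer (otherwise $\ceil{\xi}=\xi$ and the jump vanishes), so $\floor{\xi}$ and $\ceil{\xi}=\floor{\xi}+1$ bracket $\xi$. Since $q$ varies by at least $c_2$ between $\xi$ and $\ceil{\xi}$ inside the unit gap $(\floor{\xi},\ceil{\xi})$, the total variation of $q$ over this gap is at least $c_2$, and hence there is a point $\eta\in(\floor{\xi},\ceil{\xi})$ with $\abs{q'(\eta)}\ge c_2$. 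The goal then reduces to showing that a degree-$d$ polynomial with $\abs{q(i)}\le c_1$ at every integer $i\in[a,b]$ cannot exhibit $\abs{q'(\eta)}\ge c_2$ unless $d=\Omega\rbra*{\sqrt{(\xi-a+1)(b-\xi+1)}}$; here $\eta$ lies within unit distance of $\xi$, so up to absolute constants $(\eta-a)(b-\eta)$ and $(\xi-a+1)(b-\xi+1)$ are interchangeable.

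Second, the engine I would use is a local Markov--Bernstein inequality: for a degree-$d$ polynomial bounded by $c_1$ at the integers of $[a,b]$, one has $\abs{q'(x)}\le C c_1 d/\sqrt{(x-a)(b-x)}$ at interior $x$, valid in the regime $d\lesssim\sqrt{(x-a)(b-x)}$. The convenient feature is that the complementary regime, $d\gtrsim\sqrt{(x-a)(b-x)}$, already yields the desired conclusion directly, so it may be dismissed as a trivial case. Granting the inequality at $x=\eta$ and combining with $\abs{q'(\eta)}\ge c_2$ gives $d\ge\frac{c_2}{Cc_1}\sqrt{(\eta-a)(b-\eta)}=\Omega\rbra*{\sqrt{(\xi-a+1)(b-\xi+1)}}$, the main claim. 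The two specializations then follow by elementary estimates: bounding $b-\xi+1\ge1$ (or $\xi-a+1\ge1$) gives $d=\Omega(\sqrt{b-a})$, while $\xi=(a+b)/2+O(1)$ makes both factors $\Theta(b-a)$, yielding $d=\Omega(b-a)$.

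Third, to justify the local inequality I would combine two classical facts. Bernstein's inequality already supplies the correct denominator: if $\abs{q}\le M$ throughout the real interval $[a,b]$, then $\abs{q'(x)}\le Md/\sqrt{(x-a)(b-x)}$ for interior $x$. What remains is to pass from boundedness on the integer grid to boundedness on the continuum, which is the content of the Ehlich--Zeller / Coppersmith--Rivlin inequality: a degree-$d$ polynomial bounded at $m+1$ equally spaced samples of an interval is bounded by an absolute constant on the whole interval once $d=O(\sqrt m)$. The hard part will be that the \emph{uniform} version of this estimate only covers $d=O(\sqrt{b-a})$, i.e.\ the case where $\xi$ sits near an endpoint; for $\xi$ deep inside $[a,b]$ the target degree is as large as $\Theta(b-a)$, the polynomial genuinely can blow up between the integers, and uniform boundedness fails. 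Overcoming this is exactly why the inequality must be stated locally, with the location-dependent denominator $\sqrt{(x-a)(b-x)}$: one needs the sharper, distance-to-the-endpoint form of the Coppersmith--Rivlin growth bound (equivalently, the extremal analysis underlying Paturi's theorem, whose tightness is witnessed by $T_d\rbra*{(2x-a-b)/(b-a)}$). A further delicate point, to be settled within the same analysis, is ensuring that the point $\eta$ of large derivative can be taken at distances $\Theta(\xi-a+1)$ and $\Theta(b-\xi+1)$ from the two endpoints, rather than arbitrarily close to one of them.
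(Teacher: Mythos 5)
The paper itself does not prove this theorem: it is quoted from Kutin, who attributes it to Paturi, so there is no internal proof to compare against and your attempt must stand on its own --- and it does not yet. Your skeleton is the standard one and its easy steps are right: $\xi$ must be non-integral, the mean value theorem on $(\xi,\ceil{\xi})$ (an interval of length $<1$) yields $\eta$ with $\abs{q'(\eta)}\ge c_2$, and Bernstein's inequality is correctly stated in the form $\abs{q'(x)}\le Md/\sqrt{(x-a)(b-x)}$ for polynomials bounded by $M$ on the whole interval. The genuine gap is the engine you invoke: the ``local Markov--Bernstein inequality'' $\abs{q'(x)}\le Cc_1 d/\sqrt{(x-a)(b-x)}$ for polynomials bounded by $c_1$ \emph{only at the integers} of $[a,b]$, in the regime $d\lesssim\sqrt{(x-a)(b-x)}$. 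This is precisely the hard content of Paturi's theorem, and your third paragraph effectively concedes it: the uniform Ehlich--Zeller/Coppersmith--Rivlin bound you cite applies only when $d=O(\sqrt{b-a})$, and the two-case argument built on it yields only $d=\Omega(\sqrt{b-a})$, which is strictly weaker than the target $\Omega\rbra*{\sqrt{(\xi-a+1)(b-\xi+1)}}$ --- for central $\xi$ the target is $\Theta(b-a)$ (this is exactly what gives the $\Theta(n)$ approximate degree of majority). Declaring that one needs ``the extremal analysis underlying Paturi's theorem'' is circular: the unproved lemma \emph{is} the theorem, and no proof or proof sketch of it is supplied.

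There is a second, independent gap that you flag but do not close. Your claim that $(\eta-a)(b-\eta)$ and $(\xi-a+1)(b-\xi+1)$ are ``interchangeable up to absolute constants'' fails near the endpoints, which is where the $+1$'s in the statement do real work. If $a<\xi<a+1$, the MVT point $\eta\in(\floor{\xi},\ceil{\xi})$ can lie arbitrarily close to $a$, so $\sqrt{(\eta-a)(b-\eta)}$ can be arbitrarily small while the target is $\Theta(\sqrt{b-a})$; in that situation both your ``trivial'' regime ($d\gtrsim\sqrt{(\eta-a)(b-\eta)}$, which you treat as already giving the conclusion) and your main estimate become vacuous. Repairing this requires more than relocating $\eta$ --- one must genuinely use the integer boundedness on the far side of the jump (e.g., via divided differences or an endpoint-sensitive form of the growth bound), which is part of the same extremal analysis you left unproved. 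In short: the architecture matches the literature's approach, but the two load-bearing claims --- the local grid Markov--Bernstein inequality and the endpoint normalization --- are asserted rather than established, and together they constitute essentially the entire difficulty of the theorem.
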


Now we are ready to give the proof of \cref{cor:collision}. We restate the Corollary here.

\begin{corollary}[{\normalfont Restatement of \cref{cor:collision}}]
    Let $n > 0$ and $r \geq 2$ be integers such that $r \mid n$, and $f \colon [n] \rightarrow [n]$ be a function that is either one-to-one or $r$-to-one. Assume that the inputs and outputs of the functions are encoded using $\lceil{\log_2 n}\rceil$ qubits, and the function $f$ is given as an oracle $O_f\colon |i\rangle | b \rangle \rightarrow | i \rangle | b \oplus f(i) \rangle$. 
    Then any quantum algorithm that determines (with probability at least $2/3$) the type of the input function $f$ for $99\%$ one-to-one functions and $99\%$ $r$-to-one functions must make $\Omega((n/r)^{1/3}))$ queries to $O_f$.
\end{corollary}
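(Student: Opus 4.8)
The plan is to run Kutin's proof of \cref{the:collision} with the two hypothesis changes carefully localized, checking that each change perturbs the argument only by constants and so leaves the final degree bound $2T = \Omega((n/r)^{1/3})$ intact. The whole argument hinges on the polynomial method: for any $T$-query algorithm the acceptance probability $P(f)$ of \cref{equ:pf} is a polynomial of degree at most $2T$ in the indicators $\delta_{i,j}(f)$, and its symmetrization $Q(m,a,b)$ of \cref{equ:qmab} is a polynomial of degree at most $2T$ in $m,a,b$ to which \cref{the:paturi} can be applied.

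First I would dispose of the XOR oracle. Since $\bra{i'}\bra{b'} O_f \ket{i}\ket{b} = \delta_{i',i}\sum_{j}\delta_{i,j}(f)\,[b' = b\oplus j]$, every matrix entry of $O_f$ is affine in the $\delta_{i,j}(f)$, exactly as for the mod-$n$ oracle; hence each amplitude of $U_T O_f \cdots O_f U_0\ket{0}$ has degree at most $T$ and $P(f)$ degree at most $2T$ in the $\delta_{i,j}(f)$. Applying Lemma 2.2 of~\cite{kutin2005quantum} then yields $Q(m,a,b)$ of degree at most $2T$, just as in the unmodified proof. This step is routine: the change of oracle never interacts with the degree count.

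The only genuinely new point is the $1\%$ error allowance, and this is where I would concentrate. The degree argument invokes $P$ (through $Q$) at two kinds of points: the generic points, where all that is used is $0 \le Q \le 1$, which holds automatically because $P \in [0,1]$ regardless of correctness; and the two distinguished points corresponding to the pure one-to-one and pure $r$-to-one configurations, where the reject/accept values are used. At those two points the symmetrized value is an \emph{exact} uniform average over the corresponding function class: for the one-to-one configuration $f_{n,1,\cdot}$ is the identity, so $\Gamma^{\sigma}_{\tau}(f_{n,1,\cdot}) = \tau\circ\sigma$ is uniform on $S_n$ and the symmetrized value equals $\mathbb{E}_{\pi\in S_n}[P(\pi)]$; since $S_n\times S_n$ acts transitively on $r$-to-one functions, the $r$-to-one configuration likewise yields the uniform average of $P$ over $r$-to-one functions. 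Using $P\in[0,1]$ together with correctness on $99\%$ of each class, these averages satisfy $\le \tfrac13+\tfrac1{100}$ and $\ge \tfrac23-\tfrac1{100}$, so the constant separation that Kutin's argument requires at the distinguished points survives. With this separation secured, I would invoke the degree estimate of~\cite{kutin2005quantum} (the restriction of $Q(m,a,b)$ to Kutin's one-parameter slice followed by \cref{the:paturi}) verbatim to conclude $2T = \Omega((n/r)^{1/3})$, hence $T = \Omega((n/r)^{1/3})$.

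The main obstacle is precisely the propagation of the $1\%$ error through the symmetrization in \cref{equ:qmab}: I must be sure that ``correct on $99\%$ of each class'' becomes an $O(1/100)$ shift of the \emph{symmetrized} endpoint values, and this relies on the orbit of each distinguished $f_{m,a,b}$ under $\{\Gamma^{\sigma}_{\tau}\}$ being the uniform distribution on its function class rather than a biased measure. Transitivity of the $S_n\times S_n$ action settles this; once the endpoint values are pinned to uniform averages, no further estimate is disturbed and the remaining degree computation is identical to~\cite{kutin2005quantum}.
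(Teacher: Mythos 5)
Your proposal is correct and follows essentially the same route as the paper's appendix proof: polynomial method with the XOR oracle still giving degree at most $2T$, symmetrization via $\Gamma^\sigma_\tau$, absorbing the $1\%$ error into an $O(1/100)$ shift of the endpoint values of $Q$, and then Kutin's two-slice Paturi argument verbatim. The one place you are more explicit than the paper is in justifying that the orbit of each distinguished $f_{m,a,b}$ under $S_n\times S_n$ is uniformly distributed over its function class (so that ``$99\%$ of functions'' translates to ``$99\%$ of $(\sigma,\tau)$ pairs''), a point the paper asserts without comment; this is a genuine and correctly handled detail, not a divergence in approach.
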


\begin{proof}
    Let $\mathcal{A}$ be an algorithm that satisfies the requirements in the statement of the corollary and makes $T$ queries to $O_f$. Let $P(f)$ be the probability that $\mathcal{A}$ asserts that $f$ is an $r$-to-one function. From the above discussion, we know that $P(f)$ is a polynomial of degree at most $2T$ in $\delta_{i, j}(f)$. Let $Q(m, a, b)$ be formed from $P(f)$ as defined in \cref{equ:qmab} and let $d$ be the degree of $Q(m, a, b)$. Then we have $d \leq 2T$.

    For any $\sigma, \tau \in S_n$, we have $0 \leq P\rbra{\Gamma^{\sigma}_{\tau} \rbra{f_{m, a, b}}} \leq 1$. Then we have
    \begin{itemize}
        \item When $(m, a, b)$ is a valid triple, then $0 \leq Q(m, a, b) \leq 1$;
        
        \item When $ a = b = 1 $, $f_{m, 1, 1}$ is a one-to-one function. Then by assumption, for $99\%$ of $(\sigma, \tau) \in S_n \times S_n$, we have $ 0 \leq P\rbra{\Gamma^{\sigma}_{\tau} \rbra{f_{m, a, b}}} \leq 1 / 3 $, and for the remaining, $ 0 \leq P\rbra{\Gamma^{\sigma}_{\tau} \rbra{f_{m, a, b}}} \leq 1 $. Thus we have  $0 \leq Q(m, a, b) \leq 2/5$;

        \item When $a = b = r$ and $r \mid m$, $f_{m, r, r}$ is an $r$-to-one function. Then following a similar argument, we have $3/5 \leq Q(m, a, b) \leq 1 $.
    \end{itemize}

    Let $M = r\floor{n / 2r}$. The argument is splited into two cases based on whether $Q(M, 1, r) \geq 1/2$.
    \begin{itemize}
        \item Case 1: $Q(M, 1, r) \geq 1/2$. Let $g(x) = Q(M, 1, rx)$ and $k > 0$ be the least integer such that $\abs{g(k)} \geq 2$. We can find out such $k$ since $g(x)$ is a polynomial.
        Let $c = rk$. Then $g(1) - g(1/r) \geq 1/10$. By \cref{the:paturi}, we have 
        \begin{align}
            \label{equ:d1}
            d = \Omega(\sqrt{k}) = \Omega(\sqrt{c/r}).
        \end{align}

        Next consider function $h(i) = Q(n - ci, 1, c)$. For any integer $i$ such that $ 0 \leq i \leq \floor{n/c}$, the triple $(n-ci, i, c)$ is valid. Thus, $0 \leq h(i) \leq 1$. However, we have
        \begin{align}
        \label{equ:hi}
            \abs*{h\rbra*{\frac{n-M}{c}}} = \abs{Q(M, 1, c)} = \abs{g(k)} \geq 2.
        \end{align}
        Note that $\frac{n}{2c} \leq \frac{n-M}{c} \leq \frac{n+2r}{2c} \leq \frac{n}{2c} + 1$, by \cref{the:paturi}, we have 
        \begin{align}
        \label{equ:d2}
            d = \Omega(n/c).
        \end{align}

        \item Case 2: $Q(M, 1, r) < 1/2$. Let $g(x) = Q(M, rx, r)$. As before, let $k > 0$ be the least integer such that $\abs{g(k)} \geq 2$. Let $c = rk$. We have $g(1) - g(1/r) \geq 1/10$, which yields $d = \Omega(\sqrt{c})$ by \cref{the:paturi}.

        Next, consider function $h(i) = Q(ci, c, r)$. For any integer $i$ such that $ 0 \leq i \leq \floor{n/c}$, the triple $(ci, c, r)$ is valid as $r \mid n$ and $r \mid c$. Thus, $0 \leq h(i) \leq 1$. But, similar to \cref{equ:hi}, we have $\abs{h\rbra*{\frac{M}{c}}} \geq 2$. By \cref{the:paturi}, we have $d = \Omega(n/c)$.
    \end{itemize}

    In either cases, we have $d = \Omega(\sqrt{c/r})$ and $d = \Omega(n/c)$. Thus, we have $d^3 = \Omega(n / r)$, which completes the proof.
    
\end{proof}

\end{document}